\definecolor{link}{rgb}{.8,.15,.1}
\newcommand{\R}{\mathbb{R}}
\newcommand{\ii}{\mathrm{i}}
\newcommand{\dd}{\mathrm{d}}
\newcommand{\ee}{\mathrm{e}}
\newcommand{\N}{\mathbb{N}}
\newcommand{\BV}{\text{\rm BV}}
\newcommand{\supp}{\text{\rm supp}}
\newcommand{\Lip}{\mathrm{Lip}}
\newcommand{\RCD}{\mathsf{RCD}}
\newcommand{\CD}{\mathsf{CD}}
\DeclareMathOperator*{\sgn}{sgn}
\newcommand{\mm}{\mathfrak m}
\newcommand{\di}{\mathsf{d}}
\newcommand{\sfd}{\mathsf d}
\newcommand{\Ch}{\mathsf{Ch}}
\newcommand{\Per}{\mathsf{Per}}
\theoremstyle{plain}
\newtheorem{lemma}{Lemma}[section]
\newtheorem{theorem}[lemma]{Theorem}
\newtheorem*{theorem*}{Theorem}
\newtheorem*{maintheorem*}{Main Theorem}
\theoremstyle{definition}
\newtheorem{definition}[lemma]{Definition}
\newtheorem*{definition*}{Definition}
\newtheorem*{remark*}{Remark}
\newtheorem{remark}[lemma]{Remark}
\begin{document}

	\begin{titlepage}

	\begin{center}

	\vskip .5in %.3in 
	\noindent

	{\Large \bf{Cheeger bounds on spin-two fields}}

	\bigskip\medskip
	 G. Bruno De Luca,$^1$  Nicol\`o De Ponti,$^2$
	Andrea Mondino,$^3$	Alessandro Tomasiello$^{4}$\\

	\bigskip\medskip
	{\small 
$^1$ Stanford Institute for Theoretical Physics, Stanford University,\\
382 Via Pueblo Mall, Stanford, CA 94305, United States
\\	
	\vspace{.3cm}%{.1cm}
	$^2$ 
	International School for Advanced Studies (SISSA),\\
	Via Bonomea 265, 34136 Trieste, Italy
\\	
	\vspace{.3cm}%{.1cm}
	$^3$ Mathematical Institute, University of Oxford, Andrew-Wiles Building,\\ Woodstock Road, Oxford, OX2 6GG, UK
\\
	\vspace{.3cm}
$^4$ Dipartimento di Matematica, Universit\`a di Milano--Bicocca, \\ Via Cozzi 55, 20126 Milano, Italy \\ and \\ INFN, sezione di Milano--Bicocca
		}

   \vskip .5cm %.3cm
	{\small \tt gbdeluca@stanford.edu, ndeponti@sissa.it,\\ andrea.mondino@maths.ox.ac.uk, alessandro.tomasiello@unimib.it}
	\vskip .9cm %.6cm
	     	{\bf Abstract }
	\vskip .1in
	\end{center}

	\noindent

We consider gravity compactifications whose internal space consists of small bridges connecting larger manifolds, possibly noncompact. We prove that, under rather general assumptions, this leads to a massive spin-two field with very small mass. The argument involves a recently-noticed relation to Bakry--\'Emery geometry, a version of the so-called Cheeger constant, and the theory of synthetic Ricci lower bounds. The latter technique allows generalizations to non-smooth spaces such as those with D-brane singularities. For AdS$_d$ vacua with a bridge admitting an AdS$_{d+1}$ interpretation, the holographic dual is a CFT$_d$ with two CFT$_{d-1}$ boundaries. The ratio of their degrees of freedom gives the graviton mass, generalizing results obtained by Bachas and Lavdas for $d=4$. We also prove new bounds on the higher eigenvalues. 
These are in agreement with the spin-two swampland conjecture in the regime where the background is scale-separated; in the opposite regime we provide examples where they are in naive tension with it.

	\noindent

	\vfill
	\eject

	\end{titlepage}
    
\tableofcontents

\section{Introduction} % (fold)
\label{sec:intro}

In a gravity compactification, the Kaluza--Klein (KK) spectrum of particles depends on the size and shape of the internal space $M_n$ ($n$ denoting its dimension). In a recent paper \cite{deluca-t-leaps} two of the present authors studied this relation for spin-two fields, whose masses are set by a relatively simple second-order operator on $M_n$ \cite{bachas-estes,csaki-erlich-hollowood-shirman}. Using recent mathematical results on Bakry--\'Emery geometry we established general bounds in terms of the diameter of $M_n$ or on the average of the warping function on it. 

The bounds in \cite{deluca-t-leaps} were especially useful to quantify to what extent one can achieve scale separation $m_\mathrm{KK}\gg \sqrt{|\Lambda|}$. In this paper, we focus on the opposite regime: we look at compactifications with spin-two fields of \emph{very small} mass $m_1$. This has several interesting applications, which we will review shortly.

We will see that this is achieved for example when $M_n$ is almost split in two large regions, connected by a smaller ``bridge'' (Fig.~\ref{fig:split-c}; more generally one may consider several large regions and bridges). This is inspired by a remarkable set of explicit examples in IIB string theory \cite{bachas-lavdas,bachas-lavdas2}, where it was interpreted as a ``quantum gate'', a sort of wormhole extended all along $d$-dimensional spacetime. Intuitively it is clear that such configurations should indeed lead to a small mass. Formally, the problem with two disconnected internal spaces $M^1_n$, $M^2_n$, would lead to two massless gravitons $g^1_{\mu \nu}$, $g^2_{\mu \nu}$; the bridge can be viewed as a small perturbation on this factorized problem, which then gives a small mass $m_1$ to one combination of the $g^a_{\mu \nu}$, leaving the other massless. 

\begin{figure}[ht]
\centering	
	\subfigure[\label{fig:split-c}]{\includegraphics[width=5cm]{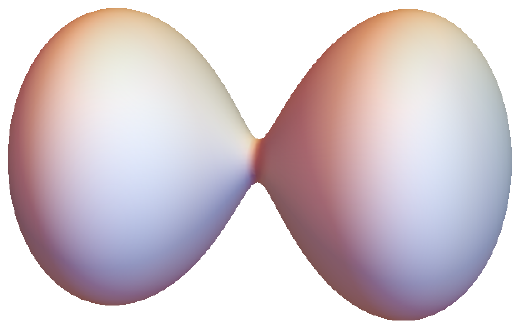}}
	\hspace{1cm}
	\subfigure[\label{fig:split-nc}]{\includegraphics[width=5cm]{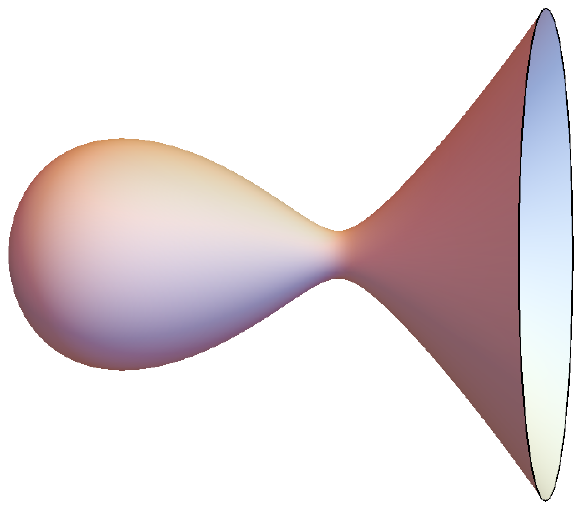}}
	\caption{\small An almost split internal space, \subref{fig:split-c} compact and \subref{fig:split-nc} non-compact.}
	\label{fig:split}
\end{figure}

More precisely, our statement is the following. We consider gravity theories in $D$ spacetime dimensions satisfying what was called Reduced Energy Condition (REC) in \cite{deluca-t-leaps}. This is the case in particular for the bulk fields in the supergravity approximation to string theory. (Localized objects can be problematic, but as we will see NS5-branes and some D-branes can be included; O-planes might also be included with some further future work). In \cite{deluca-t-leaps} it was shown that the REC implies a bound on the internal  $\bar R_{mn}- (D-2) \bar\nabla_m \partial_n A$, where the warping $A$ and the bar are defined by $\dd s^2_D = \ee^{2A}(\dd s^2_d + \dd\bar s^2_n)$. Here we use a result by two of the present authors \cite{deponti-mondino} to obtain
\begin{equation}\label{eq:intro-bound}
	 \frac14 h_1(M_n)^2 \leqslant m_1^2 \leqslant \max\left\{\frac{21}{10}h_1(M_n) \sqrt{K},\,\frac{22}{5}h_1(M_n)^2\right\}
\end{equation}
where $K= | \Lambda | + \frac{\sigma^2}{D - 2} $, and $\sigma$ is defined as the supremum of the gradient of the warping, i.e.  $(D-2)|\dd A|\leqslant \sigma $ as in \cite{deluca-t-leaps}; and 
\begin{equation}\label{eq:intro-h}
	h_1(M_n) = \mathrm{inf}_{B} \frac{\int_{\partial B} \sqrt{\bar g_{\partial B}}\,\ee^{(D-2)A}\,\dd^{n-1}x}{\int_{B} \sqrt{\bar g}\,\ee^{(D-2)A}\,\dd^{n}x}\,,
\end{equation}
where $\bar{g}_{\partial B}$ is the determinant of the pull-back metric to the boundary region $\partial B$ of $\bar{g}$, and $B$ varies among all the open sets in $M_n$ with smooth boundary, such that the denominator is smaller than $  \frac12  \int_{M_n} \sqrt{\bar g}\,\ee^{(D-2)A}\,\dd^{n}x$. This is a variant of the classical \emph{Cheeger constant}, to which it reduces for $A=0$. Sometimes, the large regions and the bridge have holographic interpretations as separate CFT$_{d-1}$ and CFT$_d$ models; in such cases, $h_1$ is precisely the ratio of their free energy coefficients ${\mathcal F}_0(\mathrm{CFT}_d)/{\mathcal F}_0(\mathrm{CFT}_{d-1})$, a measure of their degrees of freedom.

Fig.~\ref{fig:split-c} illustrates a case where $h_1$ is small, with the understanding that the size shown in the figure is according to the measure weighted by $\ee^{(D-2)A}$. The bound \eqref{eq:intro-bound} is always valid, but is most interesting for our physics application when $h_1 \ll m_D$, the Planck mass in $D$ dimensions.

Besides \eqref{eq:intro-bound}, we also consider bounds on the higher eigenvalues. For this, we consider a generalization of \eqref{eq:intro-h} involving a min-max of several Borel sets. Results of this kind were available in the Riemannian ($A=0$) case, but we generalize them to the $A\neq 0$ and non-smooth case, proving new theorems analogous to \eqref{eq:intro-bound}. In particular, as long as the lightest mass $m_1^2 \gg K$, we prove that the higher eigenvalues are bounded above by $m_k^2 < \frac{2816}5 k^2 m_1^2$; in other words, making $m_1$ small drags down all the higher eigenvalues $m_k$. This is in agreement with the \emph{spin-two swampland conjecture} of \cite{klaewer-lust-palti}. In such a regime, $m_1^2 \gg |\Lambda|$, the so-called separation of scales; this is the context where the conjecture was originally meant to apply. 
In the regime where $m_1^2 \ll|\Lambda|$, the emergence of a whole tower of light spin-two fields is also predicted by the \emph{massive-AdS-graviton conjecture} \cite[Sec.~4.1]{bachas-19}. Here we can use our results to provide examples where such a tower cannot originate from KK states; interestingly, another tower may originate from wrapped branes and save the conjecture (although we will not investigate this in the present paper).

In the particular case where $\int_{M_n} \sqrt{\bar g}\ee^{(D-2)A}\dd^{n}x$ is infinite, the $d$-dimensional Planck mass $m_d$ is infinite, and the massless graviton is non-dynamical; one is left with a single massive ``graviton'', whose mass $m_1$ is still constrained by \eqref{eq:intro-bound}; when $h_1$ is small as in Fig.~\ref{fig:split-nc}, $m_1$ is forced to be small too. Such a situation is interesting because it might be hard to tell apart experimentally from standard Einstein gravity in many respects. For this reason this scenario has been pursued energetically for a long time; see \cite{hinterbichler-massive,derham-massive} for reviews. Since we work in supergravity, the usual no-go arguments force us to focus on the AdS case; here massive gravity is both easier conceptually and of course less relevant to the real world, but the realization of small graviton mass in a UV complete theory is likely to be useful for further research.

In the afore-mentioned set of IIB examples, the lightest mass was computed in  \cite{bachas-lavdas,bachas-lavdas2}; in our language this is $m_1^2 \sim \frac34 h_1$. This agrees with our general bound \eqref{eq:intro-bound}; it is of course more precise, but our result is applicable more widely and requires almost no computation. Other famous examples of almost-split spaces can be obtained from the Maldacena--N\'u\~nez class \cite{maldacena-nunez} of solutions that include Riemann surfaces, going near the boundary of their moduli space.

To arrive at \eqref{eq:intro-bound}, as we mentioned we use a lower bound on $\bar R_{mn}- (D-2) \bar\nabla_m \partial_n A$ that was found in \cite{deluca-t-leaps}. As pointed out there, this bound is naturally interpreted in the context of so-called Bakry--\'Emery geometry \cite{bakry-emery}, which in turn implied bounds on all the spin-two masses $m_k$ in terms of $\int_{M_n}\sqrt{\bar g}\ee^{(D-2)A}\,  \dd^{n}x$ and on the diameter of $M_n$. The theorems that we are using for \eqref{eq:intro-bound}, and the new ones we will prove here,  were shown instead using the more general perspective of $\RCD$ spaces, which satisfy the so-called Riemannian--Curvature--Dimension condition. These are part of the theory of \emph{synthetic Ricci lower bounds}, which emerged recently from the encounter of ideas from optimal transport and  differential geometry  --- see for example \cite{villani-takagi, Amb} for an introduction. The word ``synthetic'' signals that this version allows for some singularities; we will show here that some brane singularities are of $\RCD$ type, with a more complete treatment deferred to \cite{deluca-deponti-mondino-t2}.

We begin in Section \ref{sec:background} by recalling some results from \cite{deluca-t-leaps}, by reviewing why singularities are useful in string theory, and with a quick review of the mathematical ideas we will need. In Section \ref{sec: D-branes RCD} we will show that some D-brane singularities satisfy the $\RCD$ condition, thus including them in the framework of our general bounds. In Section \ref{sec:bounds} we obtain our general bounds on the KK masses, including \eqref{eq:intro-h}. Sections \ref{sec:bl} and \ref{sec:riemann} are devoted to the two sets of examples we mentioned. 

% section intro (end)

% section section_name (end)

\section{Background} % (fold)
\label{sec:background}

\subsection{Ricci lower bounds and Bakry--\'Emery geometry} % (fold)
\label{sub:be}

We begin here with a quick review of the relevant results from \cite[Sec.~2]{deluca-t-leaps}.

We consider a $D$-dimensional gravity theory with an Einstein--Hilbert term, with Planck mass $m_D$ and Newton's constant $\kappa^2=(2\pi)^{D-3}m_D^{2-D}$. The Einstein equations are 
\begin{equation}\label{eq:RMN}
  R_{M   N} = \frac{1}{2}\kappa^2 \left( T_{M   N} - g  _{M  
  N} \frac{T}{D-2}  \right) := \hat{T}_{M N} \, ,\qquad T_{M N} := -\frac{2}{\sqrt{-g}} \frac{\delta S_{\text{mat}}}{\delta g^{M N}}\,.
\end{equation}
Compactification vacua have the metric
\begin{equation}\label{eq:metric}
	\dd s^2_D := 
	\ee^{2 A} \dd  s^2_{d} + \dd s^2_n := \ee^{2 A} (\dd  
	s^2_{d} + \bar{\dd s}^2_n)\,,
\end{equation}
where $\dd  s^2_{d}$ is a maximally symmetric space in $d$ dimensions, and $A$ is the warping function, depending on the internal space $M_n$. 

Suppose now the stress-energy tensor satisfies the Reduced Energy Condition (REC)
\begin{equation}\label{eq:Econd}
	T^{(D)}_{m n} - \bar{g}_{m n} \frac{1}{d} T^{(d)}\geqslant 0 \,.
\end{equation}
This energy condition is satisfied for all form fields $F_{M_1\ldots M_k}$, with Lagrangian density proportional to $-\frac1{k!} F_{M_1\ldots M_k}F^{M_1\ldots M_k} := - F_k^2$, and in particular also for massless scalars. So it holds for the supergravity approximation to string theory, in $D=10$ and $11$.  The REC \eqref{eq:Econd} implies that the internal Ricci tensor obeys the bound
\begin{equation}\label{eq:RiccifBound}
  \bar{R}_{m n} - (D - 2) \bar{\nabla}_m \bar{\nabla}_n A \geqslant - \left( | \Lambda | + \frac{\sigma^2}{D - 2} 
  \right)\bar g_{mn}.
\end{equation}

Notice that \eqref{eq:RiccifBound} is strictly related with a lower bound on the so-called Bakry--\'Emery Ricci tensor for weighted Riemannian manifolds. Let us quickly recall the relevant definitions, adopting a notation that will be congenial for some of the next sections. Let $(\bar{M}, \bar{g})$ be an $n$-dimensional Riemannian manifold endowed with a weighted measure $\dd\mm:=\ee^{f} \dd{\rm vol}_{\bar{g}}$, where $\dd{\rm vol}_{\bar{g}}=\sqrt{g}\dd^n x$ denotes the standard Riemannian volume measure on $(\bar{M}, \bar{g})$. (Sometimes we will also use the notation $\dd\rm{\overline{vol}}_n$ for $\dd{\rm vol}_{\bar{g}}$, and their non-barred counterparts). The $\infty$-Bakry--\'Emery Ricci tensor $\mathsf{Ric}^{\infty,f}$ (also called more simply Bakry--\'Emery Ricci tensor and denoted as $\mathsf{Ric}^{f}$) of the weighted Riemannian manifold $(\bar{M}, \bar{g},\ee^{f} \dd\rm{ vol}_{\bar g})$ is defined as
\begin{equation}\label{eq:defBEinftyRic}
\mathsf{Ric}^{\infty,f}_x(v,v)= \mathsf{Ric}^{f}_x(v,v):=\mathsf{Ric}_x(v,v)-\mathsf{Hess}f_x(v,v), \quad \text{for all } v\in T_x\bar{M},
\end{equation}
where $\mathsf{Ric}$ denotes the standard Ricci tensor of $(\bar{M}, \bar{g})$. The components of $\mathsf{Ric}^{f}$ are exactly the left-hand side of \eqref{eq:RiccifBound}, for 
\begin{equation}\label{eq:fA}
	f= (D-2)A\,.
\end{equation}
There is also a refinement called $N$-Bakry--\'Emery Ricci tensor, for $N\in(n,\infty)$,  
defined as
\begin{equation}\label{eq:defBENRic}
\mathsf{Ric}^{N,f}_x(v,v):=\mathsf{Ric}_x(v,v)-\left[\mathsf{Hess}f+\frac{1}{N-n}  \dd f \otimes \dd f\right]_x(v,v), \qquad \textrm{if} \ N>n.
\end{equation}

The KK tower for spin-two operators is particularly simple: it is described in general \cite{csaki-erlich-hollowood-shirman,bachas-estes} by eigenfunctions $\psi_k$, $\Delta_f \psi_k = m_k^2 \psi_k$\footnote{Notice that in warped compactifications there is an ambiguity in defining the warping and the cosmological constant independently, which reverberates in the mathematical expressions for the masses. This can be understood by noticing that \eqref{eq:metric} is invariant under $A\to A+A_0$, $g_d \to e^{-2A_0} g_d$, $\bar{g_n} \to e^{-2A_0} \bar{g_n} $. This shift rescales the $m_k$ just defined, but does not affect the physical quantities, as they can be expressed as ratio of masses (such as $m_k/m_{\text{Pl}}$ or $m_k/m_\Lambda$) and thus do not rescale.} of the \emph{Bakry--\'Emery Laplacian}
\begin{equation}\label{eq:BELaplacian}
  \Delta_f (\psi) := - \frac{1}{\sqrt{\bar{g}}} \ee^{- f} \partial_m
  \left( \sqrt{\bar{g_{}}}  \bar{g}^{m n} \ee^f \partial_n \psi \right) = \Delta \psi-\bar{\nabla} f \cdot \dd \psi\,.
%\,, \qquad f = (D - 2) A \,.
\end{equation}
Various results have been proven in the mathematical literature about the $m_k$, under the assumption that \eqref{eq:RiccifBound} holds. Translating these in physics terms gives some general bounds on the KK spectrum, which in particular put some constraints on scale separation, while not disallowing it. In what follows we will describe another similar bound, this time on the first non-zero eigenvalue $m_1$. This bound was found however with the help of some more sophisticated mathematics, to which we now turn.

% subsection be (end)

\subsection{The $\RCD$ condition and D-branes} % (fold)
\label{sub:RCD-DBranes prel}

The bound we need has been proven by considering Bakry--\'Emery manifolds as particular cases of a more general type of spaces, the so-called $\RCD(K,N)$ spaces. This class has been introduced in \cite{AGS1} (see also \cite{G11, AGMR,  EKS, AMS, CaMi}) as a refinement of the class of $\CD(K,N)$ spaces defined and studied previously in \cite{St,St1,LoVi}. We will provide in Section \ref{sub:math prel} the formal mathematical definition of these classes, here we limit ourselves to mentioning that a $\CD(K,N)$ space is a space $X$ endowed with a distance function $\di$ and a measure $\mm$ satisfying a synthetic notion of Ricci curvature bounded from below and dimension bounded from above. The name $\CD$ stands indeed for \emph{Curvature Dimension} and the constant $K\in \R$ plays the role of the lower bound on the Ricci curvature, while $N\in (1,\infty]$ is the upper bound on the dimension. The added letter $\mathsf{R}$ in $\RCD$ stands for \emph{Riemannian}, since the main difference with respect to the $\CD$ condition is the exclusion of Finsler-like structures\footnote{A Finsler manifold has a norm on tangent spaces that is not necessarily induced by an inner product, i.e.~not necessarily quadratic.} by further require the so-called \emph{infinitesimally Hilbertianity} of the space (see Definition \ref{def:infHilb} for all the details).

One advantage of the $\RCD$ perspective that we are going to adopt is that it can also include some singularities, still offering sufficiently powerful analysis tools. We stress the fact that the space $X$ may not be a smooth Riemannian manifold, and the distance $\di$ and the measure $\mm$ are not coming in general from a Riemannian metric $g$. Indeed, besides the smooth examples of weighted Riemannian manifolds with Bakry--\'Emery Ricci curvature bounded below by $K$ \cite{St, St1}, important classes of $\mathsf{RCD}(K,N)$ spaces include suitable limits of Riemannian manifolds with Ricci curvature bounded below (the so-called Ricci limits) \cite{CC1,CC2,CC3,AGS1,GMS}, finite dimensional Alexandrov spaces \cite{Pet}, suitable stratified spaces \cite{BKMR}, appropriate quotients of Riemannian manifolds with Ricci bounded below \cite{GGKMS}.

The synthetic notion of Ricci curvature lower bound and dimension upper bound are encoded in the definition of a $\CD(K,N)$ space by imposing some convexity properties to suitable entropic functionals in the space of probability measures over $X$, taking advantage of the theory of optimal transport. A crucial fact, at the basis of the theory, is that these convexity properties are actually equivalent for a smooth Riemannian manifold to impose that the standard Ricci curvature is $\geqslant K$ and the dimension of the manifold is $\leqslant N$. Moreover, thanks to an impressive amount of work which is practically impossible to summarize here (we refer the interested reader to the survey paper \cite{Amb}), it has been shown that this definition together with the infinitesimally Hilbertianity is powerful enough to allow the study of important objects like the Laplacian and its eigenvalues, as well as to develop a first and second order calculus on these spaces and prove various functional and geometrical inequalities. 
 
The possibility to work with some non-smooth spaces is very useful for string theory, since many important compactifications have singularities induced by the back-reaction of extended objects. 
We review here the ones associated to D and M-branes, leaving a more general study for future work \cite{deluca-deponti-mondino-t2}. 

In the supergravity approximation, D$p$-branes are seen as localized objects that source gravitational and higher-form electromagnetic fields.
As for black holes in pure General Relativity or electrons in classical electrodynamics, the presence of a localized object produces a singularity in the classical fields it sources. These singularities are expected to be resolved in the full quantum theory, but are a general feature of classical limits.
In ten-dimensional supergravities, D$p$-branes are identified by a ten-dimensional metric that, in Einstein frame, asymptotes to 
\begin{equation}\label{eq:Dbrane-10}
	\dd s^2_{10} \sim H^{\frac{p-7}{8}}\left( \dd x^2_{p+1}+ H(\dd r^2+r^2 \dd s^2_{\mathbb{S}^{8-p}}) \right)\qquad \text{for}\; r\to 0 \;.
\end{equation}
Here $\dd x^2_{p+1}$ denotes the $p+1$ dimensional space parallel to the brane, (i.e.~the subspace along which the object is extended for $r\to 0$) and $r$ is a radial coordinate in the transverse directions to the object. The function $H$ is harmonic on the transverse space and it is responsible for introducing the singularity we are concerned about.
In vacuum compactifications, a D$p$-brane has to be extended along all the $d$ vacuum directions in order to preserve maximal symmetry, but in addition it can also extend among some of the internal directions. From \eqref{eq:Dbrane-10}, we obtain that the barred metric \eqref{eq:metric} approaches
\begin{equation}\label{eq:Dbrane-barred}
	\bar{\dd s}^2_n  \sim \dd x^2_{p+1-d}+H(\dd r^2+r^2 \dd s^2_{\mathbb{S}^{8-p}})\qquad \text{for}\; r\to 0 \;.
\end{equation}
Comparing with \eqref{eq:metric}, we also read that the Bakry--\'Emery function $f$ asymptotes to
\begin{equation}\label{eq:f-Dbranes}
	\ee^f= \ee^{8A} \sim H^\frac{p-7}{2}\qquad\qquad \text{for}\; r\to 0 \;.
\end{equation}
Locally, the harmonic function behaves as
\begin{equation}\label{eq:Harm}
	H\sim \left\{\begin{array}{llr}
		   (r/r_0)^{p-7}  &&1 < p < 7 \\
		  - \frac{2\pi}{g_s}\log (r/r_0) & &p = 7 
	\end{array}	\right.\qquad\qquad \text{for}\; r\to 0 \;,
\end{equation}
where $r_0^{7-p}= g_s (2\pi l_s)^{7-p}/((7-p) \mathrm{Vol}(\mathbb{S}^{8-p}))$ for $p<7$.
To analyze how these singularities affect the general results presented in Section \ref{sub:be}, we first notice that in some cases the gradient of the warping factor can be unbounded approaching the brane. Indeed, an explicit computation in the geometry \eqref{eq:Dbrane-barred} gives
\begin{equation}
  | \bar{\nabla} f |^2 = \bar{g}^{r r} \partial_r f \partial_r f = H^{- 1}
  (\partial_r f)^2 = \frac{(7 - p)^2}{4} \frac{(H')^2}{H^3}	\;.
\end{equation}
This vanishes for $p=7$ (since the warping approaches a constant) behaving in general as
\begin{equation}\label{eq:nablaF-branes}
	| \bar{\nabla} f |^2\sim  \frac{(7 - p)^4}{4} r^{5 - p} \;.
\end{equation}
\eqref{eq:nablaF-branes} is always bounded, except for D6 branes. Thus, the bound on $\mathsf{Ric}^{f}$ in \eqref{eq:RiccifBound} becomes then trivial approaching a D6-brane, since a diverging $| \bar{\nabla} f |^2$ results in an infinite $\sigma^2$. 
However, for D6-branes we can check explicitly that $\mathsf{Ric}^{N,f}$ is still bounded from below approaching the singularity, for any $N>n$, arguing as follows.

For general $\, p$,  an explicit computation in the geometry  \eqref{eq:Dbrane-barred}, with $f$ given by \eqref{eq:f-Dbranes}, results in
\begin{eqnarray}
	\frac{( \mathsf{Ric}^{N,f})_{r r}}{\bar{g}_{r r}} & = & \left( \frac{(p -
	5)}{4} - \frac{(7-p)^2}{4 (N-n)} \right) \frac{(H')^2}{H^3}\;, \\
\frac{(\mathsf{Ric}^{N,f} )_{\theta_i \theta_j}}{\bar{g}_{\theta_i 
	\theta_j}} &=&	\frac{(\mathsf{Ric})_{\theta_i \theta_j}}{\bar{g}_{\theta_i 
	\theta_j}}  =  \frac{1}{2} \frac{(H')^2}{H^3} \;.
\end{eqnarray}
The local behaviors \eqref{eq:Harm} then imply
\begin{equation}\label{eq: Ricci curvature of DBranes}
	\frac{(H')^2}{H^3}\sim\left\{\begin{array}{llr}
		    (7 - p)^2 r^{5 - p} & &1 < p < 7 \\
		  - \frac{1}{r^2 \log (r)^3}  & &p = 7
	\end{array}\right.	\qquad\qquad \text{for}\; r\to 0 \;,	
\end{equation}
from which we see that in all cases all the components of  $\mathsf{Ric}^{N,f}$ are bounded from below by some $K=K(N,n)>-\infty$, for every $N\in (n,\infty]$. 

Finally we are left with the somewhat special case of a D8-brane. The singularity introduced by this source is milder: the warping and the metric functions remain finite, with only a finite discontinuity in their first derivatives. More precisely, approaching a D8 brane at $r = 0$ the harmonic function now behaves $H\sim 1-h_8 |r|$, with $h_8$ a positive constant. Near this object the barred metric then approaches $\bar{\dd}s^2_n\sim\dd x^2_{p+1-d} + H \dd r^2 $ and the warping is given by $\ee^f \sim H^{1/2}$. In such a space both $\sigma$ and the distance from the singularity are finite, and thus the general bound \eqref{eq:RiccifBound} guarantees that $\mathsf{Ric}^{f}$ is bounded from below. This can also be checked directly: an explicit computation gives a $\mathsf{Ric}^{f} \sim h_8^2 + h_8 \delta_0 $ which is bounded from below in the distributional sense. We will be more rigorous in  Section \ref{sec: D-branes RCD}. 

To summarize, for $p\neq 6$, the general bound \eqref{eq:RiccifBound} is also valid near  D$p$-brane singularity since $\sigma^2$ stays finite, but it becomes trivial for $p = 6$. In this case, we checked explicitly that $\mathsf{Ric}^{N,f}$ is still bounded from below near a D6-brane, since a local computation shows that $\mathsf{Ric}^{N,f}$ diverges to $+\infty$.
Also, by plugging the local behavior \eqref{eq:Harm} in \eqref{eq:metric}, we see that for $p\leqslant 5$ the locus $r =0$ is at infinite distance. 

Finally, let us comment on other localized sources. First of all, fundamental strings (F1) and NS five-branes (NS5), behave exactly as D1 and D5 branes, respectively, as a consequence of the fact that our starting point, the asymptotic 10-dimensional Einstein metric \eqref{eq:Dbrane-10}, is invariant under S-duality (or more generally under the SL(2,$\mathbb{Z}$) symmetry of type IIB string theory).

For M2 and M5 branes in M-theory, the asymptotic barred metric has again the form \eqref{eq:Dbrane-barred}, now with $H\sim (r/r_0)^{q-8}$, $q = \{2, 5\}$. The warping function behaves as
\begin{equation}
	\ee^f = \ee^{9A}  \sim \left\{\begin{array}{llr}
		H^{-3}  &&\text{M2} \\
	   H^{-\frac32} & &\text{M5} 
 \end{array}	\right.\qquad\qquad \text{for}\; r\to 0 \;.
\end{equation}
In both cases $\sigma^2$ is finite, thus implying a bounded $\mathsf{Ric}^{f}$. An explicit computation shows that $\mathsf{Ric}^{N,f}$ is bounded as well. Both for M2s and M5s, the singularity is at infinite distance. 
We summarize our results in Table \ref{tab:Dp-geom}.

\begin{table}[h]
	\centering
  \begin{tabular}{|c|c|c|c|}
    \hline
    & $\mathsf{Ric}^{N,f}>K(n,N)>-\infty$ & $\sigma^2$ & distance\\
    \hline
    $0 \leqslant p \leqslant$5 & $\checkmark$ & fin. & $\infty$\\
    \hline
	$p = 6$ & $\checkmark$ & $\infty$ & fin.\\
	\hline
    $p = 7,8$ & $\checkmark$ & fin. & fin.\\
    \hline
	M2/M5  & $\checkmark$ & fin. & $\infty$\\
	\hline
  \end{tabular}
  \caption{Geometrical quantities in the transverse space for D$p$-branes and M-branes. Note $N\in (n,\infty]$. As a consequence of S-duality, F1s and NS5s behave as D1 and D5 branes respectively.}\label{tab:Dp-geom}
\end{table}

D-brane metrics are of $\RCD(K,N)$ type, as we will prove in Section \ref{sec: D-branes RCD}.
More precisely, for every $N\in (n,\infty]$ there exists $K=K(n,N)>-\infty$ such that the D-brane  is $\RCD(K,N)$ space. Notice that, outside the singularities, the D-brane is an $n$-dimensional smooth weighted Riemannian manifold with $N$-Bakry--\'Emery Ricci curvature bounded below by $K=K(n,N)>-\infty$.
As a consequence of the discussion above, these results also hold for fundamental strings, NS five branes, and M-branes in M-theory.
% subsection sing (end)

% section background (end)

\subsection{Mathematical preliminaries}
\label{sub:math prel}
The aim of the following sections is to fix the notation and recall some basic constructions in the theory of metric measure spaces which play a role in the statements and the proofs of the mathematical results contained in the paper. In particular, we provide a formal definition of the $\RCD(K,N)$ class and we clarify what we mean by Laplacian, eigenvalues, Cheeger constants, in this general setting. A non-interested reader can skip these parts, the only essential fact to keep in mind is that there is a way to properly define all these notions for \emph{non-smooth spaces} in such a way that they coincide with the usual ones for (weighted) Riemannian manifolds, as discussed in the previous sections.

An oft-cited prototype for these ideas is convexity of a function $f: \mathbb{R} \to \mathbb{R}$. At the differential level this can be of course formulated as $\partial_x^2 f \geqslant 0$. But alternatively one can write it as 
\begin{equation}\label{eq:convex}
	f((1-t) x + t y) \leqslant (1-t) f(x) + t f(y) 
\end{equation}
for any $t\in [0,1]$, $x$, $y\in \mathbb{R}$. While for smooth functions these two conditions are equivalent, the ``synthetic'' \eqref{eq:convex} applies more generally. In the same way, we will describe here a version of Ricci lower bounds such as \eqref{eq:RiccifBound} that applies also to singular spaces, using optimal transport theory.

\subsubsection{Metric measure spaces}
Let $(X, \mathsf{d})$ be a complete metric space and $\mm$ a non-negative Borel measure on $X$, finite on bounded subsets. The triple $(X, \mathsf{d}, \mm)$ is called \emph{metric measure space}, m.m.s.~for short.

We call $(X,\di)$ a \emph{geodesic} metric space if every couple of points $x,y\in X$ can be joined by a geodesic in $X$, i.e.~a curve
$$\gamma:[0,1]\rightarrow X \quad  \textrm{with} \quad \di(\gamma_s,\gamma_t)=|t-s|\di(\gamma_0,\gamma_1)\quad \forall s,t\in [0,1]$$
and such that $\gamma_0=x$, $\gamma_1=y$, where we have used the notation $\gamma_t:=\gamma(t).$
The space of all geodesics in $X$ will be denoted by $\Gamma(X)$. 

Given a metric measure space $(X,\di,\mm)$, we denote by $\mathcal{P}(X)$ (resp. $\mathcal{P}{_2}(X)$) the set of Borel probability measures over $X$ (resp. the set of Borel probability measures over $X$ with finite second moment). We endow the space $\mathcal{P}{_2}(X)$  with the $2$-Kantorovich--Wasserstein distance $\mathcal{W}_2:\mathcal{P}{_2}(X)\times \mathcal{P}{_2}(X)\rightarrow [0,\infty)$, defined as

\begin{equation}\label{eq: def Wass}
\mathcal{W}^2_2(\mu_0,\mu_1):=\inf\left\{ \int_{X\times X} \di^2(x,y)\,\dd\pi(x,y) \ : \ \pi \ \textrm{coupling of} \ \mu_0 \ \textrm{and} \ \mu_1\right\}.
\end{equation}
By coupling we mean a measure $\pi\in \mathcal{P}(X\times X)$ whose \emph{marginals} $\int_X \dd y\,\pi(x,y)$ and $\int_X \dd x\,\pi(x,y)$ are respectively equal to $\mu_0(x)$ and $\mu_1(y)$. 
It can be shown that the problem \eqref{eq: def Wass} has always (at least one) minimizer $\tilde{\pi}$, called \emph{optimal coupling}.
A coupling is also called a \emph{transportation plan} since $\pi(x,y)$ describes how to move ``mass'' from $x$ into $y$, in order to transform the distribution $\mu_0$ to $\mu_1$. With this intuition, the marginality requirement on the second factor, $\int_X \dd y\,\pi(x,y)  = \mu_0(x)$, means that all the mass that is going to $y$ comes from $\mu_0$ while the other marginality condition, $\int_X \dd x\,\pi(x,y) = \mu_1(y)$, imposes that all the mass moved out from $x$ goes into $\mu_1$.
A transportation plan $\tilde{\pi}$ is then \emph{optimal} for the 2-Kantorovich--Wasserstein distance if $\tilde{\pi}$ minimizes the total cost of transforming $\mu_0$ to $\mu_1$ when the cost of moving one unit of mass from $x$ to $y$ is $\di^2(x,y)$, the square of the distance function in $X$. Thanks to this property, the distance \eqref{eq: def Wass} allows to compare two probability distributions in a way that takes into account the geometry of the underlying space.

Moreover, the optimality of $\tilde{\pi}$ is equivalent to the fact that $\tilde{\pi}$ is concentrated on a cyclical monotone set, i.e.~$\tilde{\pi}$ is an optimal coupling if and only if there exists a set $\Omega\subset X\times X$ such that $\tilde{\pi}(\Omega)=1$ and for every $k\in \N$, every permutation $\sigma$ of $\{1,\ldots, k\}$ and every $(x_1,y_1),\ldots,(x_k,y_k)\in \Omega$ it holds
$$\sum_{i=1}^k \di^2(x_i,y_i)\leqslant \sum_{i=1}^k \di^2(x_i,y_{\sigma(i)}).$$ 
This means roughly speaking that the transportation plan cannot be improved by a reshuffling of the locations to which bits of mass are transferred (see \cite[Ch.~5]{Vil}). 

The space of real-valued Lipschitz functions\footnote{Recall that a real-valued Lipschitz function $f$ over a metric space $(X,\di)$ is one for which a $K$ exists such that $|f(x)-f(y)|\leqslant K \di(x,y)$ for every $x,y\in X$.} over $X$ is denoted by $\mathsf{Lip}(X)$; we write $f\in \mathsf{Lip}_{bs}(X)$ if $f\in \mathsf{Lip}(X)$ and $f$ is bounded with bounded support. Given $f\in \mathsf{Lip}(X)$, its slope $|\nabla f|(x)$ at $x\in X$ is defined by
\begin{equation}
|\nabla f|(x):=\begin{cases}\limsup_{y\rightarrow x} \frac{|f(y)-f(x)|}{\mathsf{d}(y,x)}\quad &\textrm{if} \ x \ \textrm{is not an isolated point}, \\
0 \hspace{4cm} &\textrm{otherwise}.
\end{cases}
\end{equation}

The \emph{Cheeger energy} (see \cite{Ch99}, and \cite{AGS2} for the present formulation) is then defined as
\begin{equation}\label{def: Cheeger en}
\Ch(f)=\inf\left\{\liminf_{n\to \infty} \frac{1}{2}\int_X |\nabla f_n|^2\,\dd\mm\, :\, f_n\in \Lip(X)\cap L^2(X), f_n\rightarrow f \, \textrm{in} \ L^2(X,\mm)\right\}.
\end{equation}
In an equivalent way (see for example \cite{AGS2} or \cite[Sec.~4.4]{villani-takagi}), the Cheeger energy can be defined in terms of the \emph{minimal relaxed gradient} $|D f|$ of $f\in L^2(X,\mm)$ as
\begin{equation}
\Ch(f):=\frac{1}{2}\int_X |Df|^2\,\dd\mm ,
\end{equation}
with the convention $\Ch(f)=+\infty$ if $f$ has no relaxed gradient. In this form, it can be thought of as the generalization to possibly non-smooth spaces of the classical Dirichlet energy. Notice that the Cheeger energy is a lower semicontinuous, convex functional on $L^2(X,\mm)$.

We introduce the Sobolev space 
$$W^{1,2}(X,\di,\mm):=\{f\in L^2(X,\mm)\, : \, \Ch(f)<+\infty\},$$
endowed with the norm $\|f\|^2_{W^{1,2}}:=\|f\|^2_{L^2}+2\Ch(f).$ Notice that, at this level, the space $W^{1,2}(X,\di,\mm)$ is Banach but not in general a Hilbert space.

\begin{definition}[Infinitesimally Hilbertianity]\label{def:infHilb}
A m.m.s.~$(X,\di,\mm)$ is \emph{infinitesimally Hilbertian} if $W^{1,2}(X,\di,\mm)$ is an Hilbert space or, equivalently, if 
\begin{equation}
\Ch(f+g)+\Ch(f-g)=2\Ch(f)+2\Ch(g)\quad \textrm{for every} \ f,g\in W^{1,2}(X,\di,\mm).
\end{equation} 
\end{definition}

We finally remark that $\Lip_{bs}(X)\subset W^{1,2}(X,\di,\mm)$ since for every $f\in \Lip(X)$ it holds $|Df|(x)\leqslant |\nabla f|(x)$.

\subsubsection{Synthetic Ricci lower bounds}
An optimal coupling between two probability distributions $\mu_0$ and $\mu_1$ can be induced by a probability distribution $\nu$ in the space of geodesics. To see this, notice that for any $x\in X$, we can identify $\dd\mu_0(x)$ (resp.~$\dd\mu_1(x)$) with $\nu$ evaluated on all geodesics in $X$ that start (end) in $x$. Precisely, given a m.m.s.~$(X,\di,\mm)$, a probability measure $\nu\in \mathcal{P}(\Gamma(X))$ is called \emph{optimal dynamical plan} if $(e_0,e_1)_{\sharp}\nu$ is an optimal coupling between $(e_0)_{\sharp}\nu$ and $(e_1)_{\sharp}\nu$ for the $\mathcal{W}_2$ distance. Here, $(e_t)_{\sharp}\nu$ denotes the push forward of the measure $\nu$ through the evaluation map $e_t:\Gamma(X)\rightarrow X$ defined as $e_t(\gamma):=\gamma_t$. Intuitively, this dynamical plan tells us not only where to transport each bit of ``mass'', but also along which path.

For any geodesic space $(X,\di)$ and for any $\mu_0,\mu_1\in \mathcal{P}_2(X)$ there exists an optimal dynamical plan $\nu$ such that $(e_i)_{\sharp}\nu=\mu_i$, $i=0,1.$ If $\nu$ is an optimal dynamical plan, then $(e_t)_{\sharp}\nu$ is a Wasserstein geodesic between the marginals $(e_i)_{\sharp}\nu=\mu_i$, $i=0,1$. (We refer to \cite[Cor.~7.23]{Vil} for a proof of these facts).

Now, the behavior of geodesics is of course related to curvature. It turns out that one can use this fact to give a version of a bound on the Bakry--\'Emery curvature that is synthetic, or in other words that generalizes better to singular spaces (recall our toy example \eqref{eq:convex}). 
The details of how one gets to the correct definitions are rather involved and we cannot do them justice here, but see \cite[Ch.~16, 17]{Vil}.

Given $K\in \R, \, N\in (1,\infty), \,\theta >0$ and $t\in[0,1]$, we define the distortion coefficients as
		\begin{equation*}
		\tau^{(t)}_{K,N}(\theta) := t^{\frac{1}{N}} \sigma^{(t)}_{K, N-1}(\theta)^{\frac{N-1}{N}},
		\end{equation*}
		where 
		\begin{equation*}
		\sigma^{(t)}_{K,N}(\theta):= 
			\begin{cases}
				\infty
					&\text{if $K\theta^2\geqslant N\pi^2$},\\
				\frac{\sin(t\theta \sqrt{K/N})}{\sin(\theta \sqrt{K/N})}
					&\text{if $0<K\theta^2< N\pi^2$},\\
				t
					&\text{if $K\theta^2=0$},\\
				\frac{\sinh(t\theta \sqrt{K/N})}{\sinh(\theta \sqrt{K/N})}
					&\text{if $K\theta^2<0$}.
			\end{cases}
		\end{equation*}
		
We also introduce the \textit{relative entropy functional}:
\begin{align*}
&\mathsf{Ent}_{\mm}:\mathcal{P}_2(X)\rightarrow \R\cup \{+\infty\} \\
 \mathsf{Ent}_{\mm}(\mu):=&\begin{cases} \int \rho\log \rho \, \dd\mm \ &\textrm{if} \ \mu=\rho\,\mm \textrm{ and }  \rho\log \rho\in L^1(X,\mm), \\ +\infty \ &\textrm{otherwise},\end{cases}
\end{align*} 	
and its domain
$$
D(\mathsf{Ent}_{\mm}):=\{\mu\in \mathcal{P}_2(X) \,:\, \mathsf{Ent}_{\mm}(\mu)\in \R\}.
$$
		
We are now ready to define the curvature dimension condition.
\begin{definition}[$\mathsf{CD}(K,N)$ condition \cite{LoVi,St,St1}]
Given $K\in \R$ and $N\in(1,\infty)$, we say that $(X,\di,\mm)$ verifies the $\mathsf{CD}(K,N)$ condition if for any pair of probability measures  
	$\mu_0,\mu_1\in \mathcal{P}{_2}(X) $ with bounded support and with $\mu_0,\mu_1\ll \mm$,\footnote{Here and below we write $\mu\ll \mm$ if $\mu$ is absolutely continuous with respect to $\mm$, i.e.~for any Borel set $E$ such that $\mm(E)=0$ we have $\mu(E)=0$.} there exist a $\mathcal{W}_2$-geodesic $(\nu_t)_{t\in[0,1]}=\rho_t\mm$ with $\nu_0=\mu_0$, $\nu_1=\mu_1$ and a $\mathcal{W}_2$-optimal coupling  $\pi \in \mathcal{P}(X \times X)$
	  such that
		\begin{equation*}
		\int_X   \rho_t^{1-\frac{1}{N'}}\,\dd\mm \geqslant  \int_{X\times X}   \left[\tau^{(1-t)}_{K,N'}(\di(x,y))\rho_0^{-\frac{1}{N'}}+
		\tau^{(t)}_{K,N'}(\di(x,y))  \rho_1^{-\frac{1}{N'}} \right]  \dd \pi (x,y),
				\end{equation*}
		for any $N'\geqslant N$, $t\in[0,1]$.
		
We say that $(X,\di,\mm)$ verifies the $\mathsf{CD}(K,\infty)$ condition if for any pair of probability measures $\mu_0,\mu_1\in D(\mathsf{Ent}_{\mm})$ there exists a $\mathcal{W}_2$-geodesic $(\nu_t)_{t\in[0,1]}$ with $\nu_0=\mu_0$, $\nu_1=\mu_1$ and 
\begin{equation}
\mathsf{Ent}_{\mm}(\nu_t)\leqslant (1-t)\mathsf{Ent}_{\mm}(\nu_0)+t\mathsf{Ent}_{\mm}(\nu_1)-\frac{K}{2}t(1-t)\mathcal{W}_2^2(\nu_0,\nu_1).
\end{equation}
\end{definition}		

The $\mathsf{CD}$ condition has been later reinforced by Ambrosio, Gigli, Savar\'e \cite{AGS1} (see also \cite{G11, AGMR,  EKS, AMS, CaMi}), who introduced the so-called $\mathsf{RCD}$ condition.
		
\begin{definition}[$\mathsf{RCD}(K,N)$ condition]\label{def:RCD}
Given $K\in \R$ and $N\in(1,\infty]$, we say that $(X,\di,\mm)$ verifies the $\RCD(K,N)$ condition if $(X,\di,\mm)$ satisfies the $\CD(K,N)$ condition and it is infinitesimally Hilbertian.
\end{definition}

\begin{remark}
	Every $n$-dimensional Riemannian manifold $(M,g)$ equipped with the geodesic distance $\di$ and the weighted measure $\dd\mm(x)=\ee^{-f(x)}\dd\mathsf{vol}_g$ is $\RCD(K,N)$ if and only if $N\in [n,\infty]$ and 
$$\mathsf{Ric}^{N,f}_x(v,v)\geqslant K\|v\|^2_{T_xM}, \qquad \textrm{for every} \ v\in T_xM,$$
where $\mathsf{Ric}^{n,f}:= \mathsf{Ric}$ and $\mathsf{Ric}^{N,f}$ was defined in \eqref{eq:defBEinftyRic} and \eqref{eq:defBENRic}.
\end{remark}
With respect to the $\CD$ condition, the $\RCD$ condition rules out the Finsler manifolds that are not Riemannian. 

It is not difficult to see that $\RCD(K,N)\subset \RCD(K,\infty)$, and the inclusion is strict as shown for instance by considering the Gaussian space $\big(\R^n,|\cdot|, \ee^{-\frac{|x|^2}{2}}\big)$ which is $\RCD(1,\infty)$ but not $\RCD(1,N)$ for any finite $N$.

It follows from the definition of $\RCD$ space that the Banach space $W^{1,2}(X,\di,\mm)$ is actually Hilbert, and one can prove that $\Lip_{bs}(X)\subset W^{1,2}(X,\di,\mm)$ with dense inclusion.

\subsubsection{Laplacian, spectrum and Cheeger constants}
In this section we assume $(X,\di,\mm)$ to be a $\RCD(K,\infty)$ space, $K\in \R$. We define the set $D(\Delta)$ as the set of $f\in L^2(X,\mm)$ such that $\partial \Ch(f)\neq \emptyset$, where $\partial \Ch$ denotes the subdifferential of the Cheeger energy.\footnote{Given $f\in W^{1,2}(X)$, we say $g\in\partial \Ch(f)$ if $\int_Xg(\psi-f)\,\dd\mm\le \Ch(\psi)-\Ch(f) \quad \forall \psi\in L^2(X).$} For $f\in D(\Delta)$ we define $\Delta f$ as the element of minimal $L^2(X,\mm)$ norm in $\partial \Ch(f)$. Notice that we are adopting here the convention that $\Delta$ has a non-negative spectrum, according to the physical literature.

The operator $\Delta$ is a non-negative, densely defined, self-adjoint operator on the Hilbert space $L^2(X,\mm)$. A number $\lambda\in \mathbb{C}$ is a \emph{regular value} of $\Delta$ if $(\lambda \rm{Id}+\Delta)$ has a bounded inverse. The \emph{spectrum} of $\Delta$ is the set $\sigma(\Delta)$ of numbers $\lambda\in [0,\infty)$ that are not regular values. We call a non-zero function $f\in D(\Delta)$ an \emph{eigenfunction} of $\Delta$ of \emph{eigenvalue} $\lambda$ if it holds $\Delta f=\lambda f$. The set of all eigenvalues forms the so-called \emph{point spectrum}. The \emph{discrete spectrum} $\sigma_d(\Delta)$ is the set of all eigenvalues that are isolated in the point spectrum with finite dimensional eigenspace, and finally the \emph{essential spectrum} can be defined as $\sigma_{ess}(\Delta):=\sigma(\Delta)\setminus \sigma_{d}(\Delta)$. 
We also denote by $\Sigma$ the infimum of the essential spectrum of $\Delta$, i.e.
$$\Sigma:=\inf \sigma_{ess}(\Delta) \qquad \textrm{and} \qquad \Sigma:=+\infty \ \textrm{if} \ \sigma_{ess}(\Delta)=\emptyset.$$
We remark that $\Sigma=+\infty$ for every bounded $\RCD(K,\infty)$ space, since it remains true even in this general setting that spaces with finite diameter have discrete spectrum \cite{GMS}.

Given $f\in W^{1,2}(X,\di,\mm)$, $f\not\equiv 0$, its Rayleigh quotient is defined as
\begin{equation}\label{eq: Rayleigh} 
\mathcal{R}(f):=\frac{2\Ch(f)}{\int_X |f|^2\,\dd\mm}\, .
\end{equation}

It is well known that the eigenvalues of $\Delta$ can be characterized variationally. More precisely, the set of eigenvalues below $\Sigma$ is at most countable and, listing them in an increasing order $\lambda_0<\lambda_1\leqslant ...\leqslant \lambda_k\leqslant ...$, it holds
\begin{equation}\label{eq:defeigk}
\lambda_k= \min_{V_{k+1}}\, \max_{f\in V_{k+1},\\ f\not\equiv 0} \ \mathcal{R}(f)\, ,
\end{equation}
where $V_{k}$ denotes a $k$-dimensional subspace of $W^{1,2}(X,\di,\mm)$ (see for instance \cite{Dav} for a proof of this general version of the min-max principle). 

Given a Borel subset $B\subset X$ with $\mm(B)<\infty$, the \textit{perimeter} $\mathrm{Per}(B)$ is defined as follows:
\begin{equation*}
\mathrm{Per}(B):=\inf\bigg\{\liminf_{n\rightarrow \infty}\int_X |\nabla f_n|\,\dd\mm: f_n\in \mathsf{Lip}_{bs}(X), f_n\rightarrow \chi_B \ \mathrm{in} \ L^1(X,\mm)\bigg\}.
\end{equation*}
This seemingly complicated definition reduces to $\int_{\partial B} \ee^f \dd\rm{vol}_{n-1}$ for a measure $\dd\mm=\ee^{f} \dd{\rm vol}_g$ such as those discussed below \eqref{eq:RiccifBound}, and for a set $B$ with a smooth boundary $\partial B$.

We define the $k$-\emph{Cheeger constant} (or \emph{$k$-way isoperimetric constant}) as
\begin{equation}\label{def: m-Cheeger}
h_k(X):=\inf_{B_0,..,B_k}\, \max_{0\leqslant i\leqslant k} \frac{\mathrm{Per}(B_i)}{\mm(B_i)}\, ,
\end{equation}
where the infimum runs over all collections of $k+1$ disjoint, Borel sets $B_i\subset X$ such that $0<\mm(B_i)<\infty$ (see \cite{lee-gharan-trevisan, funano, Miclo, Liu}).
It is easy to see that $h_k(X)\leqslant h_{k+1}(X)$ for every $k\in \N$ and, for spaces with finite measure, $h_0(X)=0$. 

The case $k=1$ plays a prominent role. Indeed, $h_1$ corresponds to the classical isoperimetric constant introduced by Cheeger to bound from below the first eigenvalue of the Laplacian on a compact Riemannian manifold \cite{cheeger-bound}, justifying the name of these constants. 
Notice that for spaces of finite measure
\begin{equation}\label{eq:defChConst}
h_1(X)=\inf  \left\{\frac{\mathrm{Per}(B)}{\mm(B)}\, :\, B\subset X \text{ Borel subset with $0<\mm(B)\leqslant \mm(X)/2$} \right\}. 
\end{equation}
For a measure $\dd\mm=\ee^{f} \dd{\rm vol}_g$, with $f=(D-2)A$ as in \eqref{eq:fA}, this reduces to \eqref{eq:intro-h}  thanks to the well known approximation of a finite perimeter set by sets with smooth boundary (see for instance \cite[Theorem 3.42]{AFP} for a proof in the Euclidean framework, which can be easily adapted to the smooth weighted Riemannian setting by working in local coordinates).

\section{D-brane type singularities are $\RCD$}
\label{sec: D-branes RCD}

The aim of this section is to prove that the singular metrics and the corresponding weighted measures appearing in the definition of a D-brane satisfy the $\RCD$ condition. 

To be more precise, for $p \leqslant 5$ we will see that there is essentially nothing to prove; for D6- and D7-branes we need separate and lengthy treatments, which will however apply to the physically unrealistic situation where  the metrics and the weighted measure are \emph{exactly} of the form \eqref{eq:Dbrane-barred} and \eqref{eq:f-Dbranes} respectively near the tips of the ends. In concrete solutions, this will only be the case \emph{asymptotically} as $r\to 0$. Encouraged by the current exact results, we expect that a rigorous result can be proved also for asymptotic D6- and D7-brane metric adapting the techniques in \cite{BKMR}; we will investigate this in future work \cite{deluca-deponti-mondino-t2}. 
These results also hold for fundamental strings and NS five-branes, which behave exactly as D1 and D5 branes, and for M branes as well, since they enjoy the same curvature properties of the $p<5$ case (cf.~Table \ref{tab:Dp-geom}).

In order to properly formulate the statement of our theorem, let us define the metric measure structure that we are going to consider.

\begin{definition}[D-brane type metric measure spaces]
We define a \emph{D-brane type metric measure space} as a smooth and compact Riemannian manifold $(X,g)$ glued (in a smooth way) with a finite number of ends where the metric $g$ is of the form \eqref{eq:Dbrane-barred}
in a neighborhood of the point $\{r=0\}$.
Recall that $\dd x_{p+1-d}^2$ is the flat metric of the $(p+1-d)$-dimensional Euclidean space, $\dd s^2_{\mathbb{S}^{8-p}}$ is the round metric on the $(8-p)$-dimensional sphere $\mathbb{S}^{8-p}$ and $H(r)$ is defined as in 
\eqref{eq:Harm}.
We endow $X$ with a weighted measure and view it as a metric measure space $(X,\di,\mm)$ where:
\begin{itemize}
\item the distance $\di$ between two points $p,q \in X$ is given by 
$$\di(p,q):=\inf_{\gamma\in \Gamma(p,q)} \int g\left(\gamma'(t),\gamma'(t)\right) \dd t,$$
with $\Gamma(p,q)$ denoting the set of absolutely continuous curves joining $p$ to $q$.
\item the measure $\mm$ is a weighted volume measure $\mm:=\ee^f\mathsf{\dd vol}_g$, with $\ee^f$ smooth outside the tips of the ends and 
equal to \eqref{eq:f-Dbranes}
in a neighborhood of the point $\{r=0\}$.
\end{itemize}
\end{definition}

We are now ready to state our main result of this section.
\begin{theorem}\label{th: D-branes are RCD}
For every $n$-dimensional D-brane type metric measure space $(X,\sfd, \mm)$ and for every $N\in (n,\infty]$ there exists $K=K(n,N)>-\infty$ such that $(X,\sfd, \mm)$  is an $\RCD(K,N)$ space.
\end{theorem}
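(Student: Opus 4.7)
The plan is to localize the problem and treat the tips of the D-brane ends case by case, leveraging the $N$-Bakry--\'Emery lower bounds already recorded in Section~\ref{sub:RCD-DBranes prel}. Away from the tips, $X$ is a smooth weighted Riemannian manifold with $\mathsf{Ric}^{N,f} \geqslant K(n,N)>-\infty$, so Sturm's equivalence immediately gives the $\RCD(K,N)$ condition on the open set obtained by removing a neighborhood of the tips; the remaining task is purely local at each end $\{r = 0\}$. Infinitesimal Hilbertianity of $(X,\sfd,\mm)$ is cheap in all cases, because the smooth part has full $\mm$-measure and the singular tip has vanishing $W^{1,2}$-capacity (one point in dimension $n\geqslant 2$ with the weights of Table~\ref{tab:Dp-geom}); the core of the proof is the synthetic curvature bound.

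The first case to dispose of is $p\leqslant 5$ (together with M2 and M5). By Table~\ref{tab:Dp-geom} the tip is at infinite geodesic distance, so $(X,\sfd)$ is already a complete smooth weighted Riemannian manifold and Sturm's theorem applied directly gives $\RCD(K,N)$. For $p=8$ the singularity is only a $C^0$-corner in $H$ (with $\mathsf{Ric}^{f}\sim h_8^2+h_8\delta_0$ bounded below in the distributional sense), and this is exactly the type of edge metric for which the standard mollification of $H$ produces a smooth approximation $(g_\varepsilon, f_\varepsilon)$ with $\mathsf{Ric}^{N,f_\varepsilon}_{g_\varepsilon}\geqslant K-o(1)$; by stability of $\RCD(K,N)$ under pointed measured Gromov--Hausdorff convergence (Gigli--Mondino--Savar\'e) one passes to the limit.

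The genuinely nontrivial cases are D6 and D7. Here I would first perform a change of variable that exposes a warped-product/cone structure in the transverse geometry. For the D6 brane, the substitution $\rho = 2\sqrt{r_0 r}$ turns the transverse block $H(\dd r^2+r^2\dd s^2_{\mathbb{S}^2})$ with $H=r_0/r$ into the Euclidean cone $\dd\rho^2+(\rho/2)^2\dd s^2_{\mathbb{S}^2}$, while the weight $\ee^f\sim H^{-1/2}$ becomes a smooth power of $\rho$; the whole end is thus a weighted metric cone over a round $\mathbb{S}^2$ of radius $1/2$, which after absorbing the weight into an effective warping function falls under Ketterer's characterization of $\RCD$ cones. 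An analogous substitution (with $\dd\rho=\sqrt{H}\,\dd r$) recasts the D7 end as a weighted warped product over a half-line, to which the weighted warped-product criterion for $\RCD$ (Ketterer--Rizzi) applies once one checks the prescribed concavity of the warping profile and the positive lower bound on the Bakry--\'Emery curvature of the base. Finally I would glue the cone/warped-product models to the smooth bulk and promote the local $\CD(K,N)$ inequality to a global one via the Cavalletti--Mondino essentially-non-branching globalization theorem, combined with the already-established infinitesimal Hilbertianity, to conclude $(X,\sfd,\mm)\in\RCD(K,N)$.

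The step I expect to be the main obstacle is the simultaneous degeneration of metric and weight in the D6 case, where $|\bar\nabla f|^2$ blows up at the tip and hence one cannot directly invoke the clean $N$-BE bound of \eqref{eq:RiccifBound}; one must work with the refined bound \eqref{eq: Ricci curvature of DBranes} and verify that the induced weight on the model cone satisfies Ketterer's conditions \emph{uniformly} in $N\in(n,\infty]$. A parallel subtlety for D7 is that $H=-\log(r/r_0)\to+\infty$, so the warping coefficient in the warped product representation is unbounded as $\rho\to 0$; here one needs the mild growth $H\sim \log(1/\rho)$ to control the distortion coefficients $\tau^{(t)}_{K,N}$ along the transport geodesics that cross the tip, which is the technical heart of the D7 verification and the main reason the authors restrict to the exactly-singular model and defer the asymptotic case to \cite{deluca-deponti-mondino-t2}.
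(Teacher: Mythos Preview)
Your overall architecture---localize to each end, handle the smooth bulk via Sturm's equivalence, treat the tips case by case---matches the paper's, as does the $p\leqslant 5$ case. But for D6, D7, D8 and for infinitesimal Hilbertianity your proposed tools differ from the paper's, and the D7 step has a genuine gap.

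The paper does \emph{not} invoke Ketterer's cone theorem for D6, nor a warped-product criterion for D7, nor mollification/pmGH-stability for D8. For D8 it reduces (via tensorization and an explicit $C^1$ change of variable) to a one-dimensional weighted interval and checks the distributional inequality $f''+\frac{1}{N-1}(f')^2\leqslant -K$ by hand. For D6 and D7 it adapts the Bacher--Sturm strategy \cite{BaSt}: after the same changes of variable you propose, one shows that every optimal dynamical plan $\nu$ with $(e_0)_\sharp\nu\ll\bar\mm$ gives \emph{zero mass to geodesics through the tip} $\mathsf{O}$. Since $\mathsf{Ric}^{N,f}$ is bounded below on $\{\rho>0\}$ by \eqref{eq: Ricci curvature of DBranes}, the $\CD(K,N)$ inequality then follows from the smooth part alone, with no gluing or globalization needed. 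For D7 the key enabling fact is the comparison $f(\rho)\leqslant \rho^2$, hence $\sfd_{\bar g}\leqslant \sfd_{C(\mathbb{S}^1)}$; combined with cyclical monotonicity this forces endpoints of any geodesic in $\supp(\nu)$ through $\mathsf{O}$ to be antipodal, and then such geodesics form a $\nu$-null set. Infinitesimal Hilbertianity is obtained not via a capacity argument but as a corollary of the same avoidance: weak upper gradients coincide with Riemannian gradients on $X\setminus\{\mathsf{O}\}$, and $\mm(\{\mathsf{O}\})=0$ gives the parallelogram law.

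Your D6 route via Ketterer is plausible (the measure $\rho^3\,\dd\rho\,\dd\mathrm{vol}_{\mathbb{S}^2}$ is precisely an $N$-cone measure over $\mathbb{S}^2(1/2)$), and the D8 approximation route could be made to work. But for D7 there is a real gap: the profile corresponding to $-r^2\log r$ is not a cone and does not fit any off-the-shelf warped-product $\RCD$ criterion I am aware of, and your proposed fix---``control the distortion coefficients $\tau^{(t)}_{K,N}$ along transport geodesics that cross the tip''---is the wrong direction. The entire content of the paper's D7 argument is that optimal transport \emph{does not} cross the tip, so one never has to control anything there; without that insight the D7 case does not close.
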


\begin{proof}[Proof of Theorem \ref{th: D-branes are RCD}]
Notice that it is not restrictive to assume that $X$ has only one end. Moreover, in our argument we can also neglect the flat part given by the Euclidean metric $\dd x^2_{p+1-d}$ thanks to the tensorization property of $\RCD$ spaces (see  \cite[Theorem 7.6]{AGMR} after \cite[Theorem 6.13]{AGS1} for $N=\infty$ and \cite[Theorem 3.23]{EKS} for $N\in (1,\infty)$).

We denote by $\mathsf{O}:=\{r=0\}$ the tip of the end. As we are going to explain, the point $\mathsf{O}$ may or may not be included in $X$.

Since the distance and the measure are smooth in the compact region outside a neighbourhood of $\mathsf{O}$, it is clear that we only have to check that the $\RCD(K,N)$ condition is satisfied for a space $X$ with metric
$$\bar{g}=H(r)(\dd r^2+r^2 \dd s^2_{\mathbb{S}^{8-p}})$$
and measure 
$$\bar{\mm}=H(r)^{\frac{p-7}{2}}\dd\mathsf{vol}_{\bar{g}}$$
near $\mathsf{O}$. 
\medskip

With this notation and these assumptions in mind, we divide the proof in three cases and we start to prove that the $\CD(K,N)$ condition is satisfied:
\begin{itemize}
\item \textbf{Case $p\leqslant 5$.}
We notice that the distance between $\mathsf{O}$ and any other point is infinite. We thus do not include the point $\mathsf{O}$ in $X$ so that the space $(X, \di)$ is a complete metric space. 
Since the Ricci curvature stays bounded from below when $r\to 0$ as shown in \eqref{eq: Ricci curvature of DBranes}, we can conclude that $(X,\di,\mm)$ is a $\CD(K,N)$ space (actually, a smooth manifold).

\item \textbf{Case $p=6$.} 
In this case we consider $\mathsf{O}\in X$. To prove that $(X,\di,\mm)$ is an $\CD(K,N)$ space we adapt the strategy proposed by Bacher and Sturm in \cite{BaSt}.
Indeed, after the change of variable $\rho=2\sqrt{r}$ the metric $\bar{g}$ takes the form
\begin{equation}\label{eq: cone metric2}
\bar{g}=\dd\rho^2+\frac{1}{4}\rho^2 \dd s^2_{\mathbb{S}^2},
\end{equation}
with measure 
\begin{equation}\label{eq: cone weighted meas2}
\bar{\mm}=\frac{1}{8}\rho^3\,\dd\mathsf{vol}_{\rho}\,\dd\mathsf{vol}_{\mathbb{S}^2} 
\end{equation}
in a neighbourhood of the point $\mathsf{O}$.

Notice that from the metric point of view we can infer that our manifold is locally a cone. Moreover $\bar{\mm}$ is absolutely continuous with respect to the standard cone measure and thus we can follow verbatim \cite[Theorem 4]{BaSt} and show that any optimal dynamical plan with first marginal absolutely continuous with respect to $\bar{\mm}$ gives zero mass to geodesics through $\mathsf{O}$. 
By \eqref{eq: Ricci curvature of DBranes} we also know that the weighted Ricci curvature stays bounded from below by a constant on the set $\{\rho>0\}$. 

Thus, as in \cite[Theorem 5]{BaSt} and using the characterization of the curvature dimension condition given in \cite[Lemma 2]{BaSt}, we can deduce that the $\CD$ condition is satisfied since this is true on the set $\{\rho>0\}$ and almost any (with respect to any optimal dynamical plan $\nu$ such that $(e_{0})_{\sharp} \nu \ll \bar{\mm}$) $\gamma\in \Gamma(X)$ stays on this set.

\item\textbf{Case $p=7$.} 
We consider $\mathsf{O}\in X$. We make the change of variable $\rho=\int_0^r \sqrt{-\log(s)}\,\dd s$ so that the metric $\bar{g}$ takes the form
\begin{equation}\label{eq: cone metric1}
\bar{g}=\dd\rho^2+f(\rho)\dd s^2_{\mathbb{S}^1},
\end{equation}
with measure 
\begin{equation}\label{eq: cone weighted meas1}
\bar{\mm}=\sqrt{f(\rho)}\,\dd\mathsf{vol}_{\rho}\,\dd\mathsf{vol}_{\mathbb{S}^1}
\end{equation}
in a neighbourhood of the point $\mathsf{O}$, where $f(\rho)$ corresponds to the factor $-\log(r)r^2$ written in the new variable $\rho$. Since 
\begin{equation}\label{eq: comparison after changeofvariable1}
\sqrt{-\log(r)r^2}\leqslant \int_0^r \sqrt{-\log(s)}\,\dd s \qquad \textrm{for} \ 0<r<1,
\end{equation}
as one can easily notice by comparing the derivative of the two sides of \eqref{eq: comparison after changeofvariable1}, we have 
$$0\leqslant f(\rho)\leqslant \rho^2.$$
This fact and the expression of $\bar{g}$ lead to the following two crucial facts for the distance $\di_{\bar{g}}$ associated to $\bar{g}$:
\begin{itemize}
\item For every $x\in \mathbb{S}^1$ we have $\di_{\bar{g}}((x,\rho),\mathsf{O})=\rho$.
\item the distance $\di_{\bar{g}}$ is bounded from above by the standard cone distance, i.e.~for every couple of points $A:=(x_0,\rho_0),B:=(x_1,\rho_1)$ we have  
$$\di_{\bar{g}}(A,B)\leqslant \di_{C(\mathbb{S}^1)}(A,B):=\sqrt{\rho_0^2+\rho_1^2-2\rho_0\rho_1\cos(\di_{\mathbb{S}^1}(x_0,x_1)}).$$ 
\end{itemize} 
Again, to prove that $X$ endowed with the given distance and measure is $\CD(K,N)$ we adapt the strategy proposed by Bacher and Sturm \cite{BaSt} to which we refer for all the details. First of all, we notice that thanks to \eqref{eq: Ricci curvature of DBranes} the weighted Ricci curvature stays bounded from below by a constant on the set $\{\rho>0\}$.

Thanks to the argument of \cite{BaSt}, we know that the result follows if we prove that 
\begin{enumerate}[(a)]
\item for any optimal dynamical plan $\nu$ such that $(e_{0})_{\sharp} \nu \ll \bar{\mm}$ we have $\nu(\Gamma_{\mathsf{O}})=0$, where
\begin{equation}\label{eq:defGammaO}
\Gamma_{\mathsf{O}}:=\{\gamma\in \Gamma(X): \gamma_t=\mathsf{O} \ \textrm{for some} \ t\in(0,1)\},
\end{equation}
\end{enumerate}
and the well-posedness and the proof of the statement (a) is a consequence of the following two facts (see \cite[Theorem 4]{BaSt}):
\begin{enumerate}[(1)]
\item For every $t\in (0,1)$ there exists at most one geodesic $\gamma\in \supp(\nu)$ with $\gamma_t=\mathsf{O}$.
\item For every $\rho>0$ there exists at most one $x\in \mathbb{S}^1$ such that $\gamma_0=(x,\rho)$ is the initial point of some geodesic $\gamma\in\supp(\nu)\cap \Gamma_{\mathsf{O}}$.
\end{enumerate}

Indeed, once we have $(1)$ and then $(2)$, (a) can be proved by contradiction by assuming that $\nu$ is supported on $\Gamma_{\mathsf{O}}$ with $\nu(\Gamma_{\mathsf{O}})>0$. We can also assume that $\gamma_0\neq \mathsf{O}$, $\gamma_1\neq \mathsf{O}$ for $\nu$-a.e.~$\gamma$ (or in other words, that the set of geodesics that neither start from nor end in $\mathsf{O}$ has measure zero with respect to $\nu$). This can be done since $\bar{\mm}$ gives no mass to $\mathsf{O}$. The desired contradiction can now be reached since by $(2)$ the measure $\mu_0:=(e_0)_{\sharp}(\nu)$ is concentrated on a set of the form 
$$C_f:=\{(f(\rho),\rho):\rho>0\}$$
for some function $f$, and $\bar{\mm}(C_f)=0$. 

Thus, it remains to prove $(1)$ and $(2)$. 
\\
To prove $(1)$, let us consider two geodesics $\gamma,\gamma'\in\supp(\nu)$ passing at time $t$ through $\mathsf{O}$. We have $\gamma_0=(x_0,t\rho)$, $\gamma_1=(x_1,(1-t)\rho)$ for some $x_0,x_1\in\mathbb{S}^1$ and similarly  $\gamma'_0=(x'_0,t\rho')$, $\gamma'_1=(x'_1,(1-t)\rho')$. If $\rho>0$ (resp.~$\rho'>0$), we can assume that $x_0,x_1$ (resp.~$x'_0,x'_1$) are antipodal. This is a consequence of the following Lemma:
\begin{lemma}
Let $\gamma:[0,1]\rightarrow X$ be a non-constant geodesic with endpoints $\gamma_0=(x_0,\rho_0)$ and $\gamma_1=(x_1,\rho_1)$. If $\gamma_t=\mathsf{O}$ for some $t\in (0,1)$, then $x_0$ and $x_1$ are antipodal as points in $\mathbb{S}^1$.
\end{lemma}
\begin{proof}
Due to the expression of $\di_{\bar{g}}$, we know that $\rho_0=t\di_{\bar{g}}(\gamma_0,\gamma_1)$ and $\rho_1=(1-t)\di_{\bar{g}}(\gamma_0,\gamma_1)$ which implies $\rho_1=\frac{1-t}{t}\rho_0$.
In particular
\begin{equation}
\frac{\rho_0^2}{t^2}=\di^2_{\bar{g}}(\gamma_0,\gamma_1)\leqslant \di^2_{C(\mathbb{S}^1)}(\gamma_0,\gamma_1)=\rho_0^2+\frac{(1-t)^2}{t^2}\rho_0^2-2\frac{(1-t)}{t}\rho^2_0\cos(\di_{\mathbb{S}^1}(x_0,x_1))
\end{equation}
from which $\cos(\di_{\mathbb{S}^1}(x_0,x_1))\leqslant -1$ that proves the claim.
\end{proof}
Now the proof of the point $(1)$ is a consequence of the cyclical monotonicity. Indeed, using the triangle inequality for $\di_{C(\mathbb{S}^1)}$, we know that
\begin{align*}
0&\leqslant \di^2_{\bar{g}}(\gamma_0,\gamma'_1)+\di^2_{\bar{g}}(\gamma'_0,\gamma_1)-\di^2_{\bar{g}}(\gamma_0,\gamma_1)-\di^2_{\bar{g}}(\gamma'_0,\gamma'_1)\\
&\leqslant \di^2_{C(\mathbb{S}^1)}(\gamma_0,\gamma'_1)+\di^2_{C(\mathbb{S}^1)}(\gamma'_0,\gamma_1)-\rho^2-\rho'^{\,2}\\
&\leqslant [t\rho+(1-t)\rho']^2+[t\rho'+(1-t)\rho]^2\rho^2-\rho'^{\,2}=-2t(1-t)(\rho-\rho')^2
\end{align*}
which implies $\rho=\rho'$. Using now this information, we can derive
\begin{align*}
0&\leqslant \di^2_{\bar{g}}(\gamma_0,\gamma'_1)+\di^2_{\bar{g}}(\gamma'_0,\gamma_1)-\di^2_{\bar{g}}(\gamma_0,\gamma_1)-\di^2_{\bar{g}}(\gamma'_0,\gamma'_1)\\
&\leqslant \di^2_{C(\mathbb{S}^1)}(\gamma_0,\gamma'_1)+\di^2_{C(\mathbb{S}^1)}(\gamma'_0,\gamma_1)-2\rho^2\\
&=-2\rho^2t(1-t)[2+\cos(\di_{\mathbb{S}^1}(x_0,x'_1))+\cos(\di_{\mathbb{S}^1}(x'_0,x_1))]
\end{align*}
and in particular $x_0$ and $x'_1$ are antipodal and thus $x_0=x'_0$ and $x_1=x'_1$. 

To prove $(2)$ we consider $\gamma,\gamma'\in\supp(\nu)\cap \Gamma_{\mathsf{O}}$ with $\gamma_0=(x_0,\rho)$, $\gamma'_0=(x'_0,\rho)$ where $\rho>0$. Since by assumption $\gamma$ and $\gamma'$ are geodesics that pass through the origin, we also know that $\gamma_1=(x_1,\rho_1)$, $\gamma'_0=(x'_1,\rho'_1)$ where $\di_{\mathbb{S}^1}(x_0,x_1)=\pi$, $\di_{\mathbb{S}^1}(x'_0,x'_1)=\pi$ and $\rho_1$, $\rho'_1$ are positive real numbers.
Again by cyclical monotonicity and the upper bound on $\di_{\bar{g}}$ we have 
\begin{align*}
0&\leqslant \di^2_{\bar{g}}(\gamma_0,\gamma'_1)+\di^2_{\bar{g}}(\gamma'_0,\gamma_1)-\di^2_{\bar{g}}(\gamma_0,\gamma_1)-\di^2_{\bar{g}}(\gamma'_0,\gamma'_1)\\
&\leqslant \di^2_{C(\mathbb{S}^1)}(\gamma_0,\gamma'_1)+\di^2_{C(\mathbb{S}^1)}(\gamma'_0,\gamma_1)-(\rho+\rho_1)^2-(\rho+\rho'_1)^2\\
&=-2\rho\rho'_1[1+\cos(\di_{\mathbb{S}^1}(x_0,x'_1)]-2\rho\rho_1[1+\cos(\di_{\mathbb{S}^1}(x'_0,x_1)]
\end{align*}
that force $x_0$ and $x'_1$ to be antipodal and thus $x_0=x'_0$.

\item\textbf{Case $p=8$.}
Let us first recall the expression of the metric and measure of an $8$-brane. Let  $h_{8}>0$ be a positive constant and set
\begin{equation}\label{eq:defH}
H(r):= 1-h_{8} |r|, \quad \text{for } r\in [- (2h_{8})^{-1}, (2h_{8})^{-1}]. 
\end{equation}
The metric and the measure are given respectively by
\begin{equation}\label{eq:mmD8}
\bar{g}=  H(r)\, \dd r^2, \quad  \bar{\mm}=\ee^{f} \dd \text{vol}_{\bar{g}}= H^{1/2} \dd \text{vol}_{\bar{g}}, \quad   \text{for } r\in [- (2h_{8})^{-1}, (2h_{8})^{-1}]. 
\end{equation} 
As above, by the tensorization property, it is enough to check that the 1-dimensional metric measure space
\begin{equation}\label{eq:mmD81D}
 \Big([- (2h_{8})^{-1}, (2h_{8})^{-1}],  H(r)\, \dd r^2, H^{1/2} \dd \text{vol}_{\R} \Big)
 \end{equation}
 satisfies $\CD(K,N)$ for any $N>1$, for  some $K=K(N)\in \R$.
Notice that  the $C^{1}$-diffeomorphism
\begin{equation}
t(r):=- \frac{2}{3h_{8}} \sgn(r) \left[ (1-h_{8} |r|)^{3/2} -1 \right]\, 
\end{equation}
gives an isomorphism of metric measure spaces between the orginal \eqref{eq:mmD81D}
 with the weighted Euclidean segment 
\begin{equation}\label{eq:mmD81DNew}
 \left( \left[ - \frac{2}{3h_{8}} (1-  (2)^{-3/2}),  \frac{2}{3h_{8}} (1-  (2)^{-3/2})  \right], \sfd_{eucl}, \left( 1- \frac{3}{2} h_{8} |t| \right)^{1/3} \dd \text{vol}_{\R}  \right).
 \end{equation}
 Hence, our claim is equivalent to check that the m.m.s. in \eqref{eq:mmD81DNew} satisfies $\CD(K,N)$.
Recall that a segment $(I, \sfd_{eucl}, \ee^{f} \, \dd \text{vol}_{\R})$ satisfies $\CD(K,N)$ if and only if $f$ is semi-concave (thus, in particular, locally Lipschitz and twice differentiable a.e.) and it satisfies 
\begin{equation}\label{eq:CDKNCond1D}
f''+\frac{1}{N-1}(f')^{2}\leq -K
\end{equation}
in distributional sense (see for instance \cite[Appendix A]{CaMi}). Now, a direct computation gives that the left hand side of \eqref{eq:CDKNCond1D} corresponding to the space \eqref{eq:mmD81DNew} is
$$
\frac{3}{4} h_{8}^{2} \left( 1- \frac{3}{2} h_{8} |t| \right)^{-2} \left(\frac{1}{N-1}-1 \right) - \frac{1}{2} h_{8} \delta_{0}<0 
$$
in distributional sense on  $\left[ - \frac{2}{3h_{8}} (1-  (2)^{-3/2}),  \frac{2}{3h_{8}} (1-  (2)^{-3/2})  \right]$, where $\delta_0$ denotes the Dirac mass distribution centred at $0\in \R$. We conclude that the space \eqref{eq:mmD81D}  satisfies $\CD(0,N)$ for every $N>1$, and thus the metric measure space associated to the (barred, i.e. transverse part of the)  8-Brane \eqref{eq:mmD8} satisfies $\CD(0,N)$, for every $N>1$.

\end{itemize}

This concludes the proof that $(X, \sfd, \mm)$ is a $\CD(K,N)$ space. In order to show that it is an $\RCD(K,N)$ space, we need to show that the Cheeger energy is a quadratic form.
As above, the only possible non-trivial cases are $p=6,7$. Recall that, in both cases, we proved that for any optimal dynamical plan $\nu$ such that $(e_{0})_{\sharp} \nu \ll \bar{\mm}$ we have $\nu(\Gamma_{\mathsf{O}})=0$, where $\Gamma_{\mathsf{O}}$ is the set of geodesics passing through the singular point $\mathsf{O}$, see \eqref{eq:defGammaO}.
By the approach to weak upper gradients via optimal transport (see \cite{AGS2}), it follows that the weak upper gradient    coincides with the modulus of the standard Riemannian gradient  on the smooth part $X\setminus\{\mathsf{O}\}$.
The infinitesimal Hilbertianity now easily follows by using that $\mm(\{\mathsf{O}\})=0$:
\begin{align*}
\Ch(f+g)+\Ch(f-g)&=\int_X |D(f+g)|^2\,\dd\mm + \int_X |D(f-g)|^2\,\dd\mm  \\
&= \int_{X\setminus\{\mathsf{O}\}} |D(f+g)|^2\,\dd\mm + \int_{X\setminus\{\mathsf{O}\}} |D(f-g)|^2\,\dd\mm  \\
&= 2  \int_{X\setminus\{\mathsf{O}\}} |Df|^2\,\dd\mm + 2 \int_{X\setminus\{\mathsf{O}\}} |Dg|^2\,\dd\mm \\
&= 2  \int_{X} |Df|^2\,\dd\mm + 2 \int_{X} |Dg|^2\,\dd\mm \\
&=2\Ch(f)+2\Ch(g).
\end{align*}

\end{proof}

\section{New bounds on spin-two masses} % (fold)
\label{sec:bounds}

As we have described in Section \ref{sub:be}, the masses of the spin-two fluctuations around any vacuum compactification are given by the eigenvalues of a certain universal operator \cite{csaki-erlich-hollowood-shirman,bachas-estes}, which in the language of \emph{Bakry--\'Emery geometry} can be rewritten as the weighted Laplacian \eqref{eq:BELaplacian}. In \cite{deluca-t-leaps}, this connection to Bakry--\'Emery geometry was exploited to put rigorous bounds on the spin-two masses in terms of either the internal diameter or the reduced Planck mass. The results followed from the curvature bound \eqref{eq:RiccifBound} and from the application of known mathematical theorems that bound the eigenvalues of the Bakry--\'Emery Laplacian in term of the weighted volume or the diameter.

In this section we present new bounds on the masses of spin-two fields in gravity compactifications  in terms of the isoperimetric constants. We discuss the first spin-two state in Section \ref{sub:first} and the higher ones in Section \ref{sub:higher}. In Section \ref{sub:hol} we then interpret them from the holographic point of view. 
These bounds are obtained by applying both known and new mathematical results, which we present and prove in Section \ref{sub:proofs}. As we will see, these new results hold in the more general $\RCD$ setting introduced in Section \ref{sub:math prel} and thus apply to general compactifications with brane sources, as discussed in Section \ref{sec: D-branes RCD}.

\subsection{The first eigenvalue} % (fold)
\label{sub:first}

In the classical compact Riemannian setting, the measure is just $\dd\mm=\ee^{f} \dd{\rm vol}_g$, and the first Cheeger constant $h_1$ introduced in \eqref{eq:defChConst} is simply
\begin{equation}\label{eq:h1}
	h_1 =  \inf_{B} \frac{\text{Vol}(\partial B)}{\text{Vol}(B)}\;;
\end{equation}
in this situation the infimum can be taken among all the $n$-dimensional smooth sets $B\subset M_n$ such that $0<\text{Vol}(B)\leqslant \frac{1}{2} \text{Vol}(M_n)$.
Without any assumption on the curvature of $M_n$, the \emph{Cheeger inequality} \cite{cheeger-bound} provides a lower bound on the first positive eigenvalue of the standard Laplacian:
\begin{equation}\label{eq:cheeger0}
\lambda_1\geqslant \frac{1}{4} h_1^2 \,.
\end{equation}

If the Ricci curvature is bounded from below by a non-positive constant $K$, $\text{Ric} \geqslant K$, Buser \cite{buser-cheeger} also found an upper bound in terms of $h_1$ and $K$:
\begin{equation}\label{eq:buser0}
	\lambda_1 \leqslant 2 h_1 \sqrt{-(n-1)K}+ 10 h_1^2\;.
\end{equation}
If $\text{Vol}(M_n)$ is infinite, the constant functions are not square integrable and thus $\lambda_0$ may be strictly positive. In this case, a version of Cheeger's and Buser's inequalities still holds by replacing $\lambda_1$ with $\lambda_0$ and $h_1$ with $h_0$, which is defined again as in \eqref{eq:h1} but with the infimum now taken among all the subsets of $M_n$ of finite volume. 

These classical geometrical results are encouraging, since they allow to constrain $\lambda_1$ pretty tightly for manifolds like the ones in Figure \ref{fig:split}, which have a small $h_1$ (or $h_0$) due to their bottlenecks. However, they are of limited use for general gravity compactifications, which generically have a non-trivial warping factor $A$ as in \eqref{eq:metric}.
In this case, the masses of the spin-two fields are not given by the eigenvalues of a standard Laplacian, and the equations of motion do not allow to extract a general lower bound on the Ricci tensor, since terms involving second derivatives of $A$ do not have a definite sign. 
Moreover, interesting dynamics often requires the presence of localized sources which, in addition to sourcing a non-trivial warping factor, can introduce singularities in the internal space, forbidding a direct application of the classical results obtained in the smooth setting.

Luckily, these difficulties can be overcome with a natural extension of the Riemannian setting. As shown in \cite{deluca-t-leaps}, in compactifications of theories that satisfy the REC \eqref{eq:Econd}, the Bakry--\'Emery generalization of the Ricci curvature tensor is bounded from below by a negative constant \eqref{eq:RiccifBound} only depending on the cosmological constant and on an upper bound on the gradients of the warping. As we will review more in detail in Section \ref{sub:proofs}, two of the present authors proved in \cite{deponti-mondino} that the classical inequalities \eqref{eq:cheeger0} and \eqref{eq:buser0} can be generalized as well.
More precisely, when the measure induced by the Riemannian metric $\bar{g}$ is weighted by a function $\ee^f := \ee^{(D-2)A}$, \eqref{eq:defChConst} becomes:
\begin{equation}\label{eq:h}
	h_1 := \inf_{B} \frac{\text{Vol}_f(\partial B)}{\text{Vol}_f(B)}:=
	\inf_{B} \frac{\int_{\partial B} \ee^{f}\dd\overline{\mathrm{vol}}_{n-1}}{\int_{B}\ee^{f} \dd\overline{\mathrm{vol}}_n}= \inf_{B} \frac{\int_{\partial B} \ee^{(d-1)A}\dd\mathrm{vol}_{n-1}}{\int_{B}\ee^{(d-2)A} \dd\mathrm{vol}_n}\,,
\end{equation}
generalizing \eqref{eq:h1}. Again $B$ should be such that $0<\mathrm{Vol}_f (B)\leqslant \frac12 \mathrm{Vol}_f(M_n)$.

Then, calling $K$ the constant that bounds the Bakry--\'Emery-Ricci curvature from below (which is $K = |\Lambda|+\frac{\sigma^2}{D-2}$ in  \eqref{eq:RiccifBound}), \cite{deponti-mondino} showed that
\begin{equation}\label{eq:bound}
	 \frac14 h_1^2 \leqslant m_1^2 \leqslant \max\left\{\frac{21}{10}h_1 \sqrt{-K},\,\frac{22}{5}h_1^2\right\}	\;,
\end{equation}
where $m_1^2$ is the first non-trivial spin-two mass, i.e.~the first non-trivial eigenvalue $\lambda_1$ of the Bakry--\'Emery Laplacian \eqref{eq:BELaplacian}.
Notice that the first inequality in \eqref{eq:bound} does not require a curvature bound, generalizing \eqref{eq:cheeger0}. Moreover, as we describe more precisely in Th.~\ref{th: cheeger} and Th.~\ref{th: buser}, these results also hold in the more general $\RCD$ setting introduced in Section \ref{sub:math prel}. Following the discussion in Section \ref{sec: D-branes RCD}, this implies that \eqref{eq:bound} is also valid for compactifications with brane sources.

When the weighted volume $\int_{M_n} \ee^{f} \,\dd\overline{\text{vol}}_n$ is infinite, the massless graviton is not dynamical and \eqref{eq:bound} applies to the first mass. Similarly to the Riemannian case, $h_1$ is computed by taking the minimum among all the Borel sets with finite weighted volume, and $h_1$ in the second inequality needs to be replaced by $\frac{1}{2} h_0$.
In particular, compactification manifolds with infinite weighted volume and a very small $h_1$ provide a method to realize lower-dimensional massive theories of gravity from higher dimensional theories with a massless graviton.

To better understand the physical properties of compactifications with a light spin-two field, for example to assess whether the higher spin-two modes also have small masses or if instead a gap is possible,  we need to bound the higher eigenvalues of the Bakry--\'Emery Laplacian. We turn to this now.

% subsection first (end)

\subsection{Higher eigenvalues} % (fold)
\label{sub:higher}

The notions introduced above can be generalized to higher eigenvalues.
Intuitively, the number of small necks that disconnect the manifold if completely shrunk is related to the number of small eigenvalues (cf.~Fig.~\ref{fig:h-sigma}). This notion can be formalized in terms of the multi-way isoperimetric constants \eqref{def: m-Cheeger}. For Bakry--\'Emery manifolds, where $\dd\mm=\ee^{f} \dd{\rm vol}_g$, $f= (D-2)A$, they read
\begin{equation}\label{eq:multiway}
	h_k = \inf_{B_0,\dots, B_k} \max_{0\leqslant i \leqslant k} \frac{\text{Vol}_f(\partial B_i)}{\text{Vol}_f(B_i) }:= \inf_{B_0,\dots, B_k} \max_{0\leqslant i \leqslant k} \frac{\int_{\partial B_i} \ee^f\dd \overline{\text{vol}}_{n-1}}{\int_{B_i} \ee^f \dd\overline{\text{vol}}_{n}}\;;
\end{equation}
recall that the infimum is now taken among all the collections of $k+1$ disjoint subsets $B_0, \dots, B_k$ of $M_n$ with smooth boundary and of positive measure. Roughly speaking, $h_k$ is small when $M_n$ can be disconnected in $k+1$ pieces, all of them with small necks.

For $k=1$, superficially \eqref{eq:multiway} might look different from \eqref{eq:h}: we have two $B_i$ rather than the single $B$ in \eqref{eq:h}. But we also no longer have the requirement that $\mathrm{Vol}_f (B)\leqslant \frac12\mathrm{Vol}_f(M_n)$. With this remark, $h_1$ does reduce to the (weighted) Cheeger isoperimetric constant \eqref{eq:h}.

As we will see in Th.~\ref{th: RCDUpper}, there exists a universal constant  $C>0$ (not depending on the Bakry--\'Emery manifold nor the dimension) such that the $k$-th spin-two state has a mass
\begin{equation}\label{eq:funano-lower}
	m_k \geqslant C^{-1}k^{-3} h_k \;.
\end{equation}
Thus, if the $h_k$-th isoperimetric constant is not small, the mass of the $k$-th spin-two state cannot be small either.
We will apply this result in the explicit examples below. A particularly direct application is in Section \ref{sec:riemann}, where the $h_k$ will be related to small necks arising in the decomposition of Riemann surfaces. We will also see there how it agrees with known results on the spectrum of the Laplacian on Riemann surfaces.

As in the case for the first eigenvalue, the lower bound \eqref{eq:funano-lower} does not require any assumption on the curvature. However, when a curvature bound is known, we can also put upper bounds on the higher eigenvalues, as we will prove in Th.~\ref{th: main eigen}, Th.~\ref{th: Liu est} and Th.~\ref{th: higherbuser}. For example, when Ric$_f$ is bounded from below by $K< 0$, the $k$-th spin-two mass is bounded from above as
\begin{equation}
m_k^2 < k^2\,\max\left\{-\frac{14112}{25}K,\frac{29568}{25}\sqrt{-K}h_k, \frac{61952}{25}h^2_k\right\}\;.
\end{equation}

Notice that even if we have presented the results in this section in the context of Bakry--\'Emery geometry, i.e.~for Riemannian manifolds with measure weighted by a function $\ee^f$, we will prove in Section \ref{sub:proofs} they hold in the more general $\RCD$ setting. 

Finally, combining together various isoperimetric bounds, it is also possible to directly relate the higher spin-two mode to the lightest one. For example, focusing again on the general case of a negative $K$, we have
\begin{equation}\label{eq:uppermk}
m^2_k< k^2\, \max\left\{-\frac{14112}{25}K,\frac{2816}{5}m_1^2\right\}.
\end{equation}
We refer to Th.~\ref{th: main eigen} for more details, including the infinite-volume case and stronger results when $K = 0$. 
Recalling that for compactifications of general higher-dimensional theories that satisfy the REC, $K = -(|\Lambda| + \frac{\sigma^2}{D-2})$, as in \eqref{eq:RiccifBound}, equation \eqref{eq:uppermk} provides a general upper bound on the higher massive spin-two states in terms of the lightest mode.

\subsubsection{Spin-two conjectures}
\label{ssub:conj}
	As an application of the bounds on higher eigenvalues, we will now comment on the spin-two swampland conjecture of \cite{klaewer-lust-palti} and the massive-AdS-graviton conjecture of \cite{bachas-19}.
	
The conjecture of \cite{klaewer-lust-palti} concerns in general an effective theory of gravity coupled to a massive spin-two field $w_{\mu \nu}$ of mass $m$. It states that such a theory is only sensible up to energies $\Lambda_0 = m M_\mathrm{Pl}/M_w$, where $M_w$ is a certain scale associated with the interactions of $w_{\mu \nu}$; and that beyond this scale, a tower of new fields must appear.  
In particular, in a limit where the first non-trivial KK spin-two mass $m_1$ goes to zero, a whole tower of spin-two fields with vanishing masses must appear.

The main argument presented in \cite{klaewer-lust-palti} in favor of the conjecture relies on its Eq.~(5), which is motivated in flat space.\footnote{We thank Eran Palti for illuminating correspondence regarding this point.} Thus AdS vacua might seem to be beyond its scope. However, when the cosmological constant is much smaller than the spin-two masses, we might expect the logic to still apply at least approximately; thus it makes sense to test it with our methods in this regime.

Indeed (\ref{eq:uppermk}) implies\footnote{Alternatively one can combine (2.40) and (2.41) in \cite{deluca-t-leaps}. Inverting the latter with a Lambert $W$ function, one obtains a general bound relating $m_k$ and $m_1$, which in the regime $m_1^2 > \ee^{-1} \pi^2 c(n)^2 (|\Lambda| + \frac{\sigma^2}{D-2})$ becomes similar to the one we just gave.}
\begin{equation}
	m_k^2< \frac{2816}5 k^2 m_1^2 \qquad \text{if } m_1^2 > \frac{441}{440}\left(|\Lambda| + \frac{\sigma^2}{D-2}\right)\,.
\end{equation}
In other words, if we make $m_1^2$ small but still larger than $(|\Lambda| + \frac{\sigma^2}{D-2})$, and in particular larger than $|\Lambda|$, then all the other $m_k$ are forced to  be small as well. 

Encouraged by this result, we may wonder if the conjecture also holds more generally for AdS vacua, even when $m_1^2$ becomes as small as $|\Lambda|$. The results in this paper suggest a way to find counter-examples: if we can make $h_1$ arbitrarily small, while keeping finite $h_2$ and the curvature bounds,  % the conjecture is violated.
the higher KK masses $m_k$, $k>1$ do not go to zero.
This is because \eqref{eq:bound} will result in an arbitrarily small $m_1$, while \eqref{eq:funano-lower} puts  a lower bound on $m_2$. In other words
\begin{eqnarray}\label{eq:conj2}
	\frac{m_2}{m_1}\gg 1\qquad\qquad \text{if }  h_1 \ll 1 \text{ with } h_2, K \text{ fixed}\;.
\end{eqnarray}
We will see explicit realizations of this mechanism in the examples in Section \ref{sec:bl} and \ref{sec:riemann}.

However, it is important to stress that the tower of massless spin-two fields predicted by the conjectures need not be the KK tower.\footnote{A different kind of subtlety is considered in \cite{bachas-19}, which suggests that the conjecture should only apply when the limit $m_1\to0$ is achieved continuously. This gives a different reason to exclude the examples of Section \ref{sec:bl}.
} Thus the conjecture might still be respected, even if $m_2/m_1\to\infty$. A natural guess is that the predicted states might be provided by branes that wrap the small neck. Indeed such states are very light in the relevant limit; if they are not BPS, they belong to long supermultiplets which also contain spin-two fields. It would be interesting to check this in detail.

\subsubsection{Continuous part of the spectrum?} % (fold)
\label{ssub:continuous}

In the presence of some D-brane sources, the Cheeger constants can actually be zero. To see this, consider a tubular neighborhood $B= \{r\leqslant R\}$, in the local coordinates of \eqref{eq:Dbrane-10}. Then 
\begin{equation}\label{eq:h1-Dp}
	\frac{\int_{\partial B} \sqrt{\bar g_{\partial B}}\,\ee^{(D-2)A}\,\dd^{n-1}x}{\int_{B} \sqrt{\bar g}\,\ee^{(D-2)A}\,\dd^{n}x}= \frac{R^{8-p}\sqrt{H(R)}}{\int_0^{R} r^{8-p} H(r)\,\dd r}\sim \frac{r_0^{(7-p)/2} R^{(9-p)/2}}{r_0^{(7-p)}\int_0^{R}r\, \dd r}\sim r_0^{(p-7)/2} R^{(5-p)/2}\,.
\end{equation}
For $p<5$, this can be made arbitrarily small by taking $r_0\to 0$; so $h_1=0$. In fact for this range of values it is easy to see that the higher $h_k$ also vanish: we may take several annular regions $B_i= \{2R_i \leqslant r \leqslant R_{i+1}\}$, with $R_i \ll R_{i+1}$ and all $R_{i+1}\to 0$.

This behavior is compatible with the presence of a continuous part of the spectrum. This is confirmed by a local analysis of solutions; for example for $p=4$ the local eigenfunctions can be written in terms of Bessel functions, and behave as $\psi\sim r^{-5/4}\cos(m (r^{-1/2}- \delta r))$, with $m$ arbitrary and $\delta r$ a constant. We think this continuous spectrum is an artifact of the supergravity approximation, which would disappear if higher-derivative corrections to the mass operators were taken into account. In any case, most AdS solutions with back-reacted D$p$-branes have $p>5$.

For $p=5$, \eqref{eq:h1-Dp} gives the constant $r_0^{-1}= l_s^{-1} g_s^{-1/2}$, which is very large when the supergravity approximation is relevant; so $h_1$ is realized by taking $B$ with a boundary not near the D5.  This case will indeed be relevant for the examples in Section \ref{sec:bl}. 

Finally for $p>5$ we see that \eqref{eq:h1-Dp} grows as $R\to 0$; so to take the infimum we want to make $R$ large, where the rest of the internal geometry comes into play, giving rise to a finite $h_1$. These are the cases where our proof of the $\RCD$ condition in Section \ref{sec: D-branes RCD} was most complicated, and to which we plan to return in the near future \cite{deluca-deponti-mondino-t2}.

% subsubsection continuous (end)

% subsection higher (end)

\subsection{Holographic interpretation} % (fold)
\label{sub:hol}

An almost-split AdS$_d$ solution can sometimes be given a holographic interpretation. The large regions almost look like separate CFT$^a_{d-1}$ models, $a=1,2$, coupled very weakly by the bridge connecting them. The bridge itself is sometimes dual to a $d$-dimensional field theory QFT$_d$; see Fig.~\ref{fig:cft}. The latter can be conformal (Section \ref{sec:bl}) or not (Section \ref{sec:riemann}). 

\begin{figure}[ht]
	\centering
		\includegraphics[width=7cm]{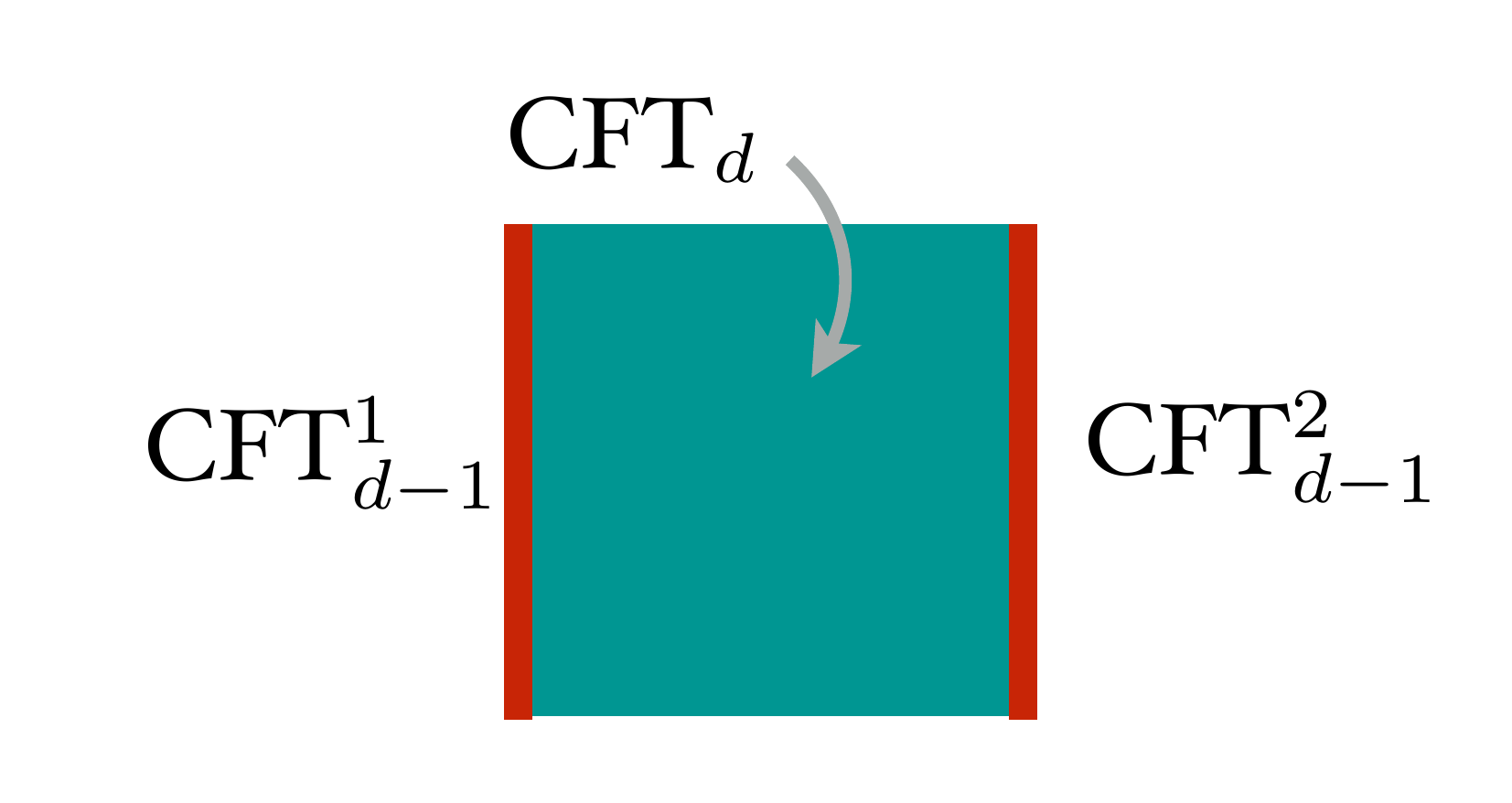}
	\caption{Sometimes an almost-split solution is dual to a CFT$_d$ (or just a QFT$_d$) on $[0,1]\times \mathbb{R}^{d-1}$, with two CFT$^a_{d-1}$, $a=1,2$ on the boundaries.}
	\label{fig:cft}
\end{figure}

There are several measures of a field theory's number of degrees of freedom. For a CFT$_d$ in $d=$ even, one often uses its Weyl anomaly coefficients. Alternatively, the free energy ${\mathcal F} = {\mathcal F}_{0} V T^d$, where $V$ and $T$ are the volume and temperature; the coefficient ${\mathcal F}_0$ is a measure of the degrees of freedom that works in any $d$, even or odd. Holographically 
\begin{equation}\label{eq:F0d}
	{\mathcal F}_0 (\mathrm{CFT}_d) \propto \int_{M_n}\ee^{(d-2)A} \dd\mathrm{vol}_n\,.
\end{equation}
This can be argued either by regarding the right-hand side as the on-shell action, or by noticing that it is proportional to the reduced Planck mass and considering a large black hole in AdS$_{d+1}$; see \cite[Sec.~4.1]{cremonesi-t} for a version of this argument.

This suggests an interpretation for the Cheeger constant \eqref{eq:h}. The denominator is the ${\mathcal F}_0$ of the CFT$_{d-1}^1$ associated to the smallest of the large regions. When the bridge is associated dual to a CFT$_d$, the numerator can also be interpreted this way, and we can write
\begin{equation}\label{eq:F0/F0}
	h \propto \frac{ {\mathcal F}_{0}(\mathrm{CFT}_d)}{{\mathcal F}_0 (\mathrm{CFT}_{d-1}^1)}\,.
\end{equation}
In the examples of Section \ref{sec:bl}, where $d=4$, $m_1^2\sim h$, and the relation of $m_1$ to (\ref{eq:F0/F0}) for $d=4$ was indeed noticed in \cite[(25)]{bachas-lavdas2}.

The case where the bridge is dual to a CFT$_d$ has several related interpretations; see \cite{bachas-lavdas} for a more thorough discussion.
\begin{itemize}
	\item It can be viewed as the CFT$_d$ living on a spacetime $I\times \mathbb{R}^{d-1}$, $I=[0,1]$, with two CFT$^a_{d-1}$, $a=1,2$ living at the two boundaries; this forms a ``CFT sandwich'', as was put in \cite{vanraamsdonk-cosmology}. The CFT$_d$ has few degrees of freedom, and operators on one of the CFT$^a_{d-1}$ can travel to the other with small probability. The individual stress-energy tensors of the two theories are almost conserved, $\partial_\mu T^{\mu \nu}_{(a)}\sim 0$; only their sum $T^{\mu \nu}:= T^{\mu \nu}_{(1)} + T^{\mu \nu}_{(2)}$ is exactly conserved. 
	At large distances, the system looks like a single CFT$_{d-1}$, whose stress-energy tensor is $T^{\mu \nu}$. 
	\item From the AdS$_{d+1}$ point of view, the two AdS$_d$ are ``ends of the world'' that meet at the boundary; this was dubbed ``wedge holography'' in \cite{akal-kusuki-takayanagi-wei}. (This was the starting point for a proposal for an accelerating cosmology in \cite{vanraamsdonk-cosmology}, and for a four-dimensional realization of the island black hole phenomenon \cite{uhlemann-island}).
	\item This class is also similar to the older double-trace deformations \cite{aharony-berkooz-silverstein,witten-multitrace,berkooz-sever-shomer}; see also \cite{bachas-19} for a comparison. In those cases, the gravity dual consists of two copies of an AdS$_d\times M_n$ solution that join at the boundary; so the AdS$_{d+1}$ region is absent.
	\item One can also view the bridge as a peculiar traversable wormhole which is accessible from everywhere in spacetime \cite{bachas-lavdas}.
\end{itemize}

In a similar manner, the non-compact space of Fig.~\ref{fig:split-nc} can also sometimes be interpreted as a CFT$_d$ on $\mathbb{R}^{d-1}\times [0, \infty)$ and a CFT$_{d-1}$ on its single boundary. This time there is no $T^{\mu \nu}$ that is exactly conserved; this is dual to the absence of a massless graviton, pointed out in the Introduction. This type of model is similar to those of Karch and Randall \cite{karch-randall,karch-randall2}.

% subsection hol (end)

\subsection{Proofs of eigenvalues bounds}
\label{sub:proofs}

In this section we give rigorous proofs of various results relating eigenvalues of the Laplacian and Cheeger constants. For simplicity, we will always assume to work with an $\RCD(K,\infty)$ space having eigenvalues below the infimum of the essential spectrum, and we recall the notation introduced in Section \ref{sec:background}: 
$$\Sigma:=\inf \sigma_{ess}(\Delta).$$ Nevertheless we remark that some inequalities, in particular those concerning lower bounds of the eigenvalues in terms of Cheeger constants, do not depend on the Ricci curvature lower bound and can be indeed proven under weaker assumptions on the space (see for instance \cite{Milm,deponti-mondino,DMS}).
 
First of all, we recall a partial outcome of \cite{deponti-mondino}, to which we refer the interested reader for all the details and the proofs: the first result corresponds to the classical Cheeger inequality \cite[Theorem 3.6]{deponti-mondino}; the second one to the Buser's inequality \cite[Corollary 1.2]{deponti-mondino}. 

\begin{theorem}\label{th: cheeger}
Let $(X,\mathsf{d},\mm)$ be an $\mathsf{RCD}(K,\infty)$ space, $K\in \R$. Let us suppose $\lambda_1<\Sigma$ (if $\mm(X)<\infty$) or $\lambda_0<\Sigma$ (if $\mm(X)=\infty$). We have: 
\begin{itemize}
\item If $\mm(X)<\infty$, then $h_1(X)\leqslant 2\sqrt{\lambda_1}$.
\item If $\mm(X)=\infty$, then $h_0(X)\leqslant 2\sqrt{\lambda_0}$.
\end{itemize}
\end{theorem}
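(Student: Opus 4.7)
The plan is to mimic the classical Cheeger argument, using an eigenfunction as a test function, the chain rule for positive parts, and the co-area formula combined with Cauchy--Schwarz. The assumption $\lambda_1<\Sigma$ (resp.~$\lambda_0<\Sigma$) guarantees that the bottom eigenvalue is attained by an eigenfunction $u\in D(\Delta)$.

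First I would treat the finite-measure case. Pick $u$ with $\Delta u=\lambda_1 u$; since $\lambda_0=0$ is simple with constant eigenfunction, $\int_X u\,\dd\mm=0$. Up to changing sign I may assume $\mm(\{u>0\})\leqslant \mm(X)/2$ and set $v:=u_+\in W^{1,2}(X,\di,\mm)$. Using $v$ as test function in the weak formulation of the eigenvalue equation and the $\RCD$ chain rule (which gives $|Dv|=|Du|\chi_{\{u>0\}}$, and in particular $\langle \nabla v,\nabla u\rangle = |Dv|^2$ $\mm$-a.e.) yields
\begin{equation*}
\int_X |Dv|^2\,\dd\mm \;=\;\int_X \langle \nabla v,\nabla u\rangle\,\dd\mm\;=\;\lambda_1\int_X vu\,\dd\mm\;=\;\lambda_1\int_X v^2\,\dd\mm.
\end{equation*}
Now I would apply the co-area formula to $v^2\in W^{1,1}\cap L^1$:
\begin{equation*}
\int_X |D(v^2)|\,\dd\mm \;=\;\int_0^\infty \mathrm{Per}(\{v^2>t\})\,\dd t\;\geqslant\; h_1(X)\int_0^\infty \mm(\{v^2>t\})\,\dd t \;=\;h_1(X)\int_X v^2\,\dd\mm,
\end{equation*}
where the middle inequality uses that each super-level set $\{v^2>t\}\subset\{u>0\}$ has measure $\leqslant \mm(X)/2$, so the definition \eqref{eq:defChConst} of $h_1$ applies. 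On the other hand, the Leibniz rule $|D(v^2)|=2v|Dv|$ and Cauchy--Schwarz give
\begin{equation*}
\int_X |D(v^2)|\,\dd\mm \;\leqslant\; 2\left(\int_X v^2\,\dd\mm\right)^{1/2}\left(\int_X |Dv|^2\,\dd\mm\right)^{1/2} \;=\; 2\sqrt{\lambda_1}\int_X v^2\,\dd\mm.
\end{equation*}
Dividing by $\int v^2>0$ concludes this case.

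For the infinite-measure case I would run essentially the same argument with $v:=|u|$, where $u$ is an $L^2$-eigenfunction for $\lambda_0$. Since $|D|u||=|Du|$ a.e.~by the $\RCD$ chain rule, one still has the identity $\int |Dv|^2\,\dd\mm=\lambda_0 \int v^2\,\dd\mm$. Each super-level set $\{v^2>t\}$ now has finite (possibly small) $\mm$-measure because $v\in L^2(X,\mm)$, so the definition of $h_0$ from \eqref{def: m-Cheeger} (with $k=0$) applies to give $\mathrm{Per}(\{v^2>t\})\geqslant h_0(X)\mm(\{v^2>t\})$ for each $t>0$. Co-area and Cauchy--Schwarz then conclude as before.

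The main obstacle I anticipate is the rigorous justification in the non-smooth $\RCD$ setting of the two chain-rule identities $|Du_+|=|Du|\chi_{\{u>0\}}$ and $|D(v^2)|=2v|Dv|$, together with the validity of the co-area formula for BV functions on $\RCD(K,\infty)$ spaces; these are by now standard but not entirely trivial consequences of the calculus developed in \cite{AGS2} and the $\BV$-theoretic tools recalled in Section \ref{sub:math prel}. Once these tools are in place, the rest is the classical computation.
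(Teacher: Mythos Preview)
Your proof plan is correct and follows the classical Cheeger strategy: take (the positive part of) an eigenfunction, apply the co-area formula to its square, and bound via Cauchy--Schwarz. Note that the paper itself does not give a proof of this theorem but simply quotes it from \cite[Theorem 3.6]{deponti-mondino}; your argument is the standard one and is essentially what appears there. The non-smooth ingredients you flag as the main obstacles---the chain rule $|Du_+|=|Du|\chi_{\{u>0\}}$, the Leibniz rule $|D(v^2)|=2v|Dv|$, and the co-area formula for $\BV$ functions on $\RCD(K,\infty)$ spaces---are indeed the only genuinely technical points, and the paper uses precisely these tools elsewhere (see the proofs of Theorem~\ref{th: Liu est} and Theorem~\ref{th: RCDUpper}, where the co-area inequality from \cite[Proposition 4.1]{deponti-mondino} and the $\BV$ co-area formula are invoked).
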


\begin{theorem}\label{th: buser}
Let $(X,\mathsf{d},\mm)$ be an $\mathsf{RCD}(K,\infty)$ space, $K\leqslant 0$. Let us suppose $\lambda_1<\Sigma$ (if $\mm(X)<\infty$) or $\lambda_0<\Sigma$ (if $\mm(X)=\infty$).
\begin{itemize}
\item Case $K=0$, $\mm(X)< \infty$. It holds
\begin{equation}\label{eq:K0m1}
\lambda_{1}< \pi  h_1(X)^2.
\end{equation}
In case $\mm(X)=\infty$, the estimate \eqref{eq:K0m1} holds replacing $\lambda_{1}$ with $\lambda_{0}$ and $h_1(X)$ with $h_0(X)/2$.

\item Case $K<0$, $\mm(X)< \infty$. It holds
\begin{equation}\label{eq:K<0m1}
\lambda_1 <\max\left\{\frac{21}{10} \sqrt{-K}h_1(X),\frac{22}{5}h_1(X)^2 \right\}. 
\end{equation}
In case $\mm(X)=\infty$, the estimate \eqref{eq:K<0m1} holds replacing $\lambda_{1}$ with $\lambda_{0}$ and $h_1(X)$ with $h_0(X)/2$.
\end{itemize}
\end{theorem}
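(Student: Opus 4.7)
The plan is to adapt the Ledoux--Buser heat-semigroup method to the $\RCD(K,\infty)$ setting, building on the strategy already used for the Cheeger-type lower bound of Theorem \ref{th: cheeger}. Given a Borel set $E\subset X$ that almost realizes $h_1(X)$ (respectively $h_0(X)$), I would construct a test function for the Rayleigh quotient \eqref{eq: Rayleigh} by heat-regularizing $\chi_E$: namely $u_t := P_t\chi_E - c_t$, where $c_t=\mm(E)/\mm(X)$ when $\mm(X)<\infty$ (so that $u_t\perp$ constants and is admissible for the min-max \eqref{eq:defeigk}), and $c_t=0$ when $\mm(X)=\infty$ (testing directly the infimum of $\mathcal{R}$ that defines $\lambda_0$). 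Plugging $u_t$ in and optimizing over $t>0$ and $E$ should deliver the advertised Buser-type inequality.

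The central analytic input, available on every $\RCD(K,\infty)$ space, is the Bakry--\'Emery gradient estimate $|D(P_t f)|\leqslant e^{-Kt}\,P_t(|Df|)$, together with its dual formulation via Kuwada's duality with $W_\infty$-contractivity of the heat flow. From these I would extract two quantitative estimates governing the test function: (a) a Cheeger-energy bound $2\Ch(P_t\chi_E) \leqslant C_1(K,t)\,\Per(E)$, obtained by applying the gradient estimate to Lipschitz approximations of $\chi_E$ and then passing to the BV limit using the theory of sets of finite perimeter developed on $\RCD$ spaces; and (b) an $L^1$-stability estimate $\|\chi_E - P_t\chi_E\|_{L^1(\mm)} \leqslant C_2(K,t)\,\Per(E)$, dual to a pointwise Lipschitz regularization of $P_t$ acting on bounded functions. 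The functions $C_1,C_2$ are explicit, scaling as $t^{-1/2}$ and $t^{1/2}$ respectively for $K=0$ and acquiring $e^{-Kt}$ factors for $K<0$.

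Combining (a) and (b), the Rayleigh quotient of $u_t$ obeys a schematic estimate of the form
\begin{equation*}
\mathcal{R}(u_t) \;\leqslant\; \frac{C_1(K,t)\,\Per(E)}{\mm(E) - C_2(K,t)\,\Per(E)}\,,
\end{equation*}
with an additional multiplicative factor $(1-\mm(E)/\mm(X))\geqslant 1/2$ in the denominator in the compact case coming from the orthogonality constraint on $u_t$. Sending $E$ to a near-optimal Cheeger set and dividing by $\mm(E)$ converts this into an upper bound on $\lambda_1$ (or $\lambda_0$) depending only on $h_1$ (or $h_0$), $K$ and $t$; the absence of the factor $(1-\mm(E)/\mm(X))$ in the non-compact case is precisely what produces the replacement $h_1\to h_0/2$ in the corresponding statements. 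A direct minimization over $t>0$ then yields the claimed constants: for $K=0$ the optimal $t$ scales as $h_1^{-2}$ and delivers the factor $\pi$; for $K<0$ the minimizer separates into two regimes dominated respectively by $\sqrt{-K}\,h_1$ and $h_1^2$, producing the $\max$ structure in \eqref{eq:K<0m1} with rational coefficients $21/10$ and $22/5$ after an elementary but careful calculation.

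The hard part is establishing the sharp forms of (a) and (b) in the non-smooth $\RCD$ framework, where pointwise Bochner identities are unavailable: one must rely on the Kuwada duality between Bakry--\'Emery gradient estimates and $W_\infty$-contractivity of the heat flow, together with the theory of BV functions and sets of finite perimeter on $\RCD$ spaces, which both gives meaning to $\Per(E)$ and justifies the Lipschitz approximation step used to apply the gradient estimate to the discontinuous input $\chi_E$. Once (a) and (b) are secured with the right dependence on $K$ and $t$, the remainder of the argument reduces to the elementary optimization described above, and the crispness of the final rational constants reflects the sharpness of these intermediate estimates.
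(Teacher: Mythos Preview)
The paper does not actually prove Theorem~\ref{th: buser}: it is quoted as a known result, with the sentence immediately preceding the statement referring the reader to \cite[Corollary~1.2]{deponti-mondino} for the proof. So there is no in-paper argument to compare your proposal against.

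That said, your outline is essentially the strategy used in the cited reference: the Ledoux heat-semigroup proof of Buser's inequality, transplanted to the $\RCD(K,\infty)$ setting via the Bakry--\'Emery gradient estimate, the $L^\infty$-to-Lipschitz regularization of $P_t$ (obtained through Kuwada duality with $W_\infty$-contraction), and the BV/perimeter theory available on such spaces. The two key quantitative ingredients you isolate --- the $L^1$-stability $\|\chi_E-P_t\chi_E\|_{L^1}\leqslant C_2(K,t)\Per(E)$ and a bound pairing the pointwise Lipschitz estimate with the total-variation contraction to control the energy --- are precisely the ones driving the argument there, and the optimization over $t$ produces the dichotomy and the explicit constants. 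One small point worth flagging: in your step~(a) the bound on $\Ch(P_t\chi_E)$ does not follow from the gradient estimate alone; you need to combine the $L^\infty$ Lipschitz bound $|DP_t\chi_E|\leqslant c(K,t)$ with the $L^1$ bound $\int|DP_t\chi_E|\,\dd\mm\leqslant e^{-Kt}\Per(E)$ to pass from an $L^1$ to an $L^2$ gradient estimate. Your text hints at this but attributes the Lipschitz regularization only to step~(b); make sure it enters~(a) as well, or the constants will not close.
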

\begin{remark}
The paper \cite{deponti-mondino} contains slightly better estimates than those stated in Theorem \ref{th: buser}, with an implicit version of the Buser's inequality that makes it sharp for $K>0$. In the present article we are only interested in spaces with Ricci bounded below by a non-positive constant, and we have decided to give the statement of the Buser's inequality in the form above to make the bounds more apparent.   
\end{remark}

We are now interested in similar bounds involving higher order eigenvalues and Cheeger constants. We start with the following result that has been originally proven by Liu in the context of compact Riemannian manifolds with non-negative Ricci curvature \cite{Liu}:
\begin{theorem}\label{th: main eigen}
Let $(X,\di,\mm)$ be a $\RCD(K,\infty)$ space, $K\leqslant 0$. Let $k\in \N^{+}$ and let us suppose that $\lambda_k<\Sigma$. 
\begin{itemize}
\item Case $K=0$. If $\mm(X)<\infty$, then 
\begin{equation}\label{eq: eigen1 mainbound K=0fin}
\lambda_k<128\pi k^2\lambda_1.
\end{equation}
In case $\mm(X)=\infty$, we have
\begin{equation}\label{eq: eigen0 mainbound K=0inf}
\lambda_k<32\pi k^2\lambda_0.
\end{equation}
\item Case $K<0$. If $\mm(X)<\infty$, then 
\begin{equation}\label{eq: eigen mainbound K<0fin}
\lambda_k< \max\left\{-\frac{14112}{25}Kk^2,\frac{2816}{5}k^2\lambda_1\right\}.
\end{equation}
In case $\mm(X)=\infty$, we have
\begin{equation}\label{eq: eigen0 mainbound K<0inf}
\lambda_k<\max\left\{-\frac{3528}{25}Kk^2,\frac{704}{5}k^2\lambda_0\right\}.
\end{equation}
\end{itemize}
\end{theorem}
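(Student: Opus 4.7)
The plan is to deduce Theorem \ref{th: main eigen} by chaining together two intermediate statements that we prove separately in Section \ref{sub:proofs}: a higher-order Buser-type estimate (Theorem \ref{th: higherbuser}) that controls $\lambda_k$ from above in terms of the multi-way Cheeger constant $h_k(X)$ and the lower curvature bound $K$, together with a Liu-type multi-way Cheeger inequality (Theorem \ref{th: Liu est}) that controls $h_k(X)$ from above by a constant multiple of $k\sqrt{\lambda_1}$ (respectively $k\sqrt{\lambda_0}$ in the infinite-measure case). Composing the two produces an upper bound on $\lambda_k$ that is quadratic in $k$ and linear in $\lambda_1$ (or $\lambda_0$), plus a curvature correction, which is precisely the shape of the inequalities in the statement.

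For the first step, the higher Buser estimate for $\RCD(K,\infty)$ spaces with $K\leqslant 0$ will take the schematic form $\lambda_k < C_1\,h_k(X)^2+C_2\sqrt{-K}\,h_k(X)$ in the finite-measure case, and an analogous form with $h_0(X)/2$ in the infinite-measure case. This is obtained by plugging into the min-max principle \eqref{eq:defeigk} the heat-semigroup regularizations of the indicator functions of near-optimal disjoint clusters $B_0,\ldots,B_k$ for $h_k(X)$, and using the Bakry--Ledoux gradient contraction $|\nabla P_t f|\leqslant \ee^{-Kt}P_t|\nabla f|$ available on $\RCD(K,\infty)$ spaces. For the second step, the Liu estimate $h_k(X)\leqslant C_3\, k\sqrt{\lambda_1}$ is proved by a spectral-embedding and combinatorial-partition argument, following Liu's original strategy in the smooth compact case and using the fact that on $\RCD(K,\infty)$ spaces the first $k$ eigenfunctions are Lipschitz and the Cheeger-energy framework provides the correct notion of perimeter for the indicator functions of the resulting cells.

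Composing the two estimates yields
$$\lambda_k < C_1 C_3^2\, k^2\lambda_1 + C_2 C_3\, k\sqrt{-K}\sqrt{\lambda_1},$$
and the desired max-form bound \eqref{eq: eigen mainbound K<0fin} then follows by splitting into the regimes $\lambda_1\geqslant -K$ and $\lambda_1<-K$: in the first, the mixed term is absorbed into $C\,k^2\lambda_1$ via $\sqrt{-K}\sqrt{\lambda_1}\leqslant \lambda_1$; in the second, into $C\,k^2(-K)$ via $\sqrt{-K}\sqrt{\lambda_1}\leqslant -K$. Tracking the numerical constants $C_1,C_2,C_3$ carefully through this elementary calculation produces the explicit coefficients $\tfrac{2816}{5}$ and $\tfrac{14112}{25}$ appearing in \eqref{eq: eigen mainbound K<0fin}. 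The case $K=0$ drops the mixed term entirely and yields \eqref{eq: eigen1 mainbound K=0fin} with the sharper constant $128\pi$, while the two infinite-measure statements \eqref{eq: eigen0 mainbound K=0inf} and \eqref{eq: eigen0 mainbound K<0inf} are obtained identically, replacing $\lambda_1$ with $\lambda_0$ and $h_1$ with $h_0/2$ throughout, which accounts for the factor-of-four improvement in the constants.

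The genuine obstacle is the Liu-type multi-way Cheeger estimate (Theorem \ref{th: Liu est}) in the synthetic framework: Liu's original argument relies on Lipschitz regularity of the first few eigenfunctions and on a delicate geometric partitioning based on their joint level sets, both of which have natural but nontrivial analogues for $\RCD(K,\infty)$ spaces. Once that inequality is established, the derivation of Theorem \ref{th: main eigen} from Theorems \ref{th: higherbuser} and \ref{th: Liu est} becomes a matter of careful bookkeeping of the constants, as sketched above.
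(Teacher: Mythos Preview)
Your chain runs in the wrong direction and misidentifies the content of Theorem~\ref{th: Liu est}. In the paper, the Liu estimate is
\[
h_1(X)^2\,\lambda_k \;\leqslant\; 128\,k^2\,\lambda_1^2,
\]
i.e.\ an upper bound on $\lambda_k$ in terms of $\lambda_1$ and the \emph{first} Cheeger constant $h_1$, not a bound of the form $h_k(X)\leqslant C\,k\sqrt{\lambda_1}$ on the multi-way constant. No such bound on $h_k$ in terms of $\lambda_1$ is proved (or, in general, expected: on a dumbbell $h_1$ and $\lambda_1$ can be made arbitrarily small while $h_2$ stays of order one). Moreover, Theorem~\ref{th: higherbuser} in the paper is \emph{derived from} Theorem~\ref{th: main eigen} (together with Buser and $h_1\leqslant h_k$), so invoking it as an ingredient in the proof of Theorem~\ref{th: main eigen} is circular.

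The paper's actual argument is shorter and uses only $h_1$. From Theorem~\ref{th: Liu est} one has $\lambda_k\leqslant 128\,k^2\,\lambda_1^2/h_1(X)^2$, and then one applies the ordinary Buser inequality (Theorem~\ref{th: buser}) to $\lambda_1$. For $K=0$, $\lambda_1<\pi\,h_1^2$ gives $\lambda_1^2/h_1^2<\pi\,\lambda_1$, hence $\lambda_k<128\pi\,k^2\lambda_1$. For $K<0$, the two branches of Buser give either $\lambda_1/h_1^2<\tfrac{22}{5}$, yielding $\lambda_k<128\cdot\tfrac{22}{5}\,k^2\lambda_1=\tfrac{2816}{5}k^2\lambda_1$, or $\lambda_1/h_1<\tfrac{21}{10}\sqrt{-K}$, yielding $\lambda_1^2/h_1^2<\tfrac{441}{100}(-K)$ and $\lambda_k<128\cdot\tfrac{441}{100}(-K)k^2=-\tfrac{14112}{25}Kk^2$. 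The infinite-measure case is identical with $\lambda_0$ and $h_0/2$ in place of $\lambda_1$ and $h_1$, which produces the factor-of-four improvement in the constants. So the substantive work lies entirely in establishing Theorem~\ref{th: Liu est} in the $\RCD$ setting; once that is in hand, Theorem~\ref{th: main eigen} is a two-line computation, and Theorem~\ref{th: higherbuser} is a corollary of it, not a prerequisite.
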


\begin{remark}
As already noted by Liu in \cite{Liu} (see Examples $1.3$ and $1.4$ therein), the quadratic dependence in $k$ of the bounds given above is optimal.
\end{remark}
\begin{remark}
We remark that the constant appearing in \eqref{eq: eigen1 mainbound K=0fin} is better than the one obtained in \cite{Liu}. 
\end{remark}

Actually, Theorem \ref{th: main eigen} will be a direct consequence of Buser's inequality and the following stronger result, which has been proved in \cite[Theorem 1.6]{Liu} for compact Riemannian manifolds and then extended to complete Riemannian manifolds in \cite[Theorem 1.4]{KLP}.
\begin{theorem}\label{th: Liu est}
Let $(X,\di,\mm)$ be a $\RCD(K,\infty)$ space. Let $k\in \N^{+}$ and let us suppose that $\lambda_k<\Sigma$. If $\mm(X)< \infty$, then
\begin{equation}\label{eq: Liu est}
h_1(X)^2\lambda_k\leqslant 128k^2\lambda_1^2\, .
\end{equation}
In case $\mm(X)=\infty$, the estimate \eqref{eq: Liu est} holds replacing $\lambda_{1}$ with $\lambda_{0}$ and $h_1(X)$ with $h_0(X)$.
\end{theorem}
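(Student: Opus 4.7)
I would prove the inequality by invoking the min-max characterization of $\lambda_k$ in \eqref{eq:defeigk}. It suffices to exhibit a $(k+1)$-dimensional subspace $V \subset W^{1,2}(X,\sfd,\mm)$ on which every nonzero function has Rayleigh quotient at most $128 k^2 \lambda_1^2/h_1(X)^2$. I would take $V$ to be the linear span of $k+1$ functions with pairwise disjoint supports, so that $L^2$-orthogonality is automatic and the supremum of $\mathcal{R}(\cdot)$ on $V$ reduces to the maximum of the individual Rayleigh quotients of the generators.

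The test functions arise from slicing a first non-trivial eigenfunction $u$ (of eigenvalue $\lambda_1$ in the finite-measure case, $\lambda_0$ otherwise), normalised so that $\|u\|_{L^2}=1$. One picks $N := 2k+1$ thresholds $t_0 < \cdots < t_N$ spanning the essential range of $u$ (calibrated adaptively), sets $S_j := \{t_{j-1} \leq u < t_j\}$, and on each strip defines the Lipschitz ``hat''
\begin{equation*}
\phi_j(x) := \min\{(u(x)-t_{j-1})_+,\, (t_j-u(x))_+\},
\end{equation*}
which belongs to $W^{1,2}$ by Lipschitz composition (valid in $\RCD$ spaces, since $\Lip_{bs}(X) \subset W^{1,2}$). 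The supports $\{S_j\}_{j=1}^N$ are pairwise disjoint, so any $k+1$ of the $\phi_j$ generate a valid $V$. For the energy upper bound, the coarea and chain-rule identities available on $\RCD$ spaces give $\sum_{j=1}^N 2\Ch(\phi_j) \leq \int_X |\nabla u|^2\,\dd\mm = \lambda_1$, and pigeonhole then yields $k+1$ indices $j_0,\dots,j_k$ with $\Ch(\phi_{j_i}) \leq \lambda_1/(2(k+1))$. For the matching $L^2$-lower bound, the thresholds $t_j$ must be chosen so that each selected strip has width $t_{j_i}-t_{j_i-1}$ and measure $\mm(S_{j_i})$ bounded from below in terms of $h_1$, exploiting Cheeger's inequality $h_1 \leq 2\sqrt{\lambda_1}$ (Theorem \ref{th: cheeger}) and the isoperimetric profile of the level sets of $u$ through the coarea formula. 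Balancing width and measure produces $\|\phi_{j_i}\|_{L^2}^2 \gtrsim h_1^2/(k^2 \lambda_1)$, and combining with the energy estimate yields $\mathcal{R}(\phi_{j_i}) \leq 128 k^2 \lambda_1^2/h_1^2$.

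\textbf{Main obstacle.} The technical heart is the adaptive calibration of the slicing thresholds: one must ensure the sharp $k^{-2}$ scaling of the $L^2$-mass of the selected hats simultaneously with the $(k+1)^{-1}$ scaling of their Dirichlet energies, yielding the precise constant $128$. In the smooth setting this is the inductive bisection of Liu \cite{Liu}, extended to complete manifolds in \cite{KLP}. Transporting their argument to the $\RCD$ setting should not require essentially new ideas, since every analytic ingredient — the self-adjoint Laplacian, its spectral decomposition, the coarea formula for BV/Sobolev functions, Lipschitz truncation and the chain rule — is by now a standard part of the first-order calculus on $\RCD(K,\infty)$ spaces \cite{AGS2, AGMR, GMS, AMS}. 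The infinite-measure case proceeds by exactly the same slicing, with $u$ now an eigenfunction realising $\lambda_0$ and the resulting estimate involving $h_0$ in place of $h_1$.
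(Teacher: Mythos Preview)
Your plan reverses the logical flow of the paper's (and Liu's) argument, and in doing so opens a gap that simple pigeonhole cannot close.

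The paper does \emph{not} bound $\lambda_k$ from above by exhibiting $k+1$ test functions. It proves, for every non-negative $f\in\Lip_{bs}(X)$, the functional inequality
\[
\phi(f):=\inf_{t\geq 0}\frac{\Per(\{f>t\})}{\mm(\{f>t\})}\;\leq\;\frac{8\sqrt{2}\,k}{\sqrt{\lambda_k}}\,\frac{\|\nabla f\|_{L^2}^2}{\|f\|_{L^2}^2},
\]
and then applies it to Lipschitz approximations of the positive and negative parts of a $\lambda_1$-eigenfunction (or a $\lambda_0$-eigenfunction when $\mm(X)=\infty$), using $\phi(f^\pm)\geq h_1$. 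The min-max principle for $\lambda_k$ enters only as a \emph{contradiction device}: the thresholds $t_j$ are chosen inductively so that each hat function has $L^2$-mass exactly $C_0=\frac{1}{k\lambda_k}\|\nabla f\|_{L^2}^2$; if more than $2k$ such strips existed, their Rayleigh quotients would sum to at most $k\lambda_k$ and $k+1$ of the disjointly supported hats would satisfy $\mathcal R\leq\lambda_k$, contradicting \eqref{eq:defeigk}. The resulting cap $t_{2k}=\|f\|_{L^\infty}$ is then combined with the auxiliary function $g(x)=\int_0^{f(x)}\eta_A(t)\,\dd t$, for which coarea, the chain rule, and the pointwise bound $g\geq f^2/(8k)$ yield the displayed inequality.

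Your scheme calibrates nothing by $\lambda_k$. You slice the $\lambda_1$-eigenfunction into $2k+1$ strips, pigeonhole on energy to select $k+1$ hats with $\Ch(\phi_{j_i})\lesssim \lambda_1/k$, and then need a \emph{lower} bound on $\|\phi_{j_i}\|_{L^2}^2$ for those \emph{same} indices. But the energy pigeonhole gives you no say over which indices survive, and thin strips---small width or small measure---can have arbitrarily small $L^2$-mass while contributing almost no Dirichlet energy; these are precisely the strips your pigeonhole will tend to pick. The clause ``balancing width and measure produces $\|\phi_{j_i}\|_{L^2}^2\gtrsim h_1^2/(k^2\lambda_1)$'' is where the entire argument lives, and neither the Cheeger inequality $h_1\leq 2\sqrt{\lambda_1}$ nor coarea supplies it. What you call ``Liu's inductive bisection'' is not the scheme you describe: his thresholds are fixed by the target $L^2$-mass $C_0$ (which depends on $\lambda_k$), the min-max is used contrapositively to bound their number, and the passage to $h_1$ goes through the $g$-trick above rather than through any $L^2$-lower bound on individual hats.
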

To prove Theorem \ref{th: Liu est} we adapt the strategy originally proposed by Liu to the general setting of $\RCD$ spaces. We remark that we are closely following the arguments of \cite{Liu,KLP}, to which we refer for more details. Here we only give a sketch of the proof emphasizing the modifications needed in the possibly non-smooth framework.

\begin{proof}[Proof of Theorem~\ref{th: Liu est}]
Let $f\in \Lip_{bs}(X)$ be non-negative, $f\not\equiv 0$. We denote by 
$$\phi(f):=\inf_{t\geqslant 0} \frac{\Per(\{x: f(x)>t\})}{\mm(\{x: f(x)>t\})}.$$
The result is a consequence of the following inequality
\begin{equation}\label{eq: fundamental estimate}
\phi(f)\leqslant 8\sqrt{2}\frac{k}{\sqrt{\lambda_k}}\frac{\|\nabla f\|_{L^2}^2}{\|f\|^2_{L^2}},  \quad \text{for all } k\in \N.
\end{equation}
Indeed, Theorem \ref{th: Liu est} follows by the definition of the Cheeger constant and by applying the estimate \eqref{eq: fundamental estimate}:
\begin{itemize}
\item to a sequence of functions $f_n\in \Lip_{bs}(X)$ converging in $L^2(X,\mm)$ to an eigenfunction $f$ of eigenvalue $\lambda_0$ with $|\nabla f_n|\rightarrow |Df|$ in $L^2(X,\mm)$, if $\mm(X)=\infty$;
\item to two sequences of functions $f_n,h_n\in \Lip_{bs}(X)$ converging in $L^2(X,\mm)$ respectively to the positive and negative parts $f^{+},f^{-}$ of an eigenfunction $f$ of eigenvalue $\lambda_1$ with $|\nabla f_n|\rightarrow |Df^{+}|$ and $|\nabla h_n|\rightarrow |Df^{-}|$ in $L^2(X,\mm)$, if $\mm(X)<\infty$.
\end{itemize}
We recall here that the existence of the sequences with the above stated properties is a consequence of the density of $\Lip_{bs}(X)$ in $W^{1,2}(X,\di,\mm)$.

It is thus sufficient to prove \eqref{eq: fundamental estimate} for any non-negative function $f\in \Lip_{bs}(X)$, $f\not\equiv 0$. 

Given a finite set $A$ of real numbers, we define the function $\psi_A:\R\rightarrow \R$ as
$$\psi_A(s):=\arg \min_{t\in A}|s-t|.$$
We also set 
$$\eta_A:\R\rightarrow \R \qquad \eta_A(s):=|s-\psi_A(s)|$$
and
$$\eta_{A,f}:X\rightarrow \R \qquad \eta_{A,f}(x):=\eta_A\circ f(x)=|f(x)-\psi_A(f(x))|.$$

We now fix $k\in \N$ and we define by induction a sequence $\{t_j\}$ such that $t_0=0$ and, given $t_0<t_1<...<t_{j-1}$, the number $t_j$ is defined as the smallest $t>t_{j-1}$ such that
\begin{equation}\label{eq: induction def}
\|\eta_{\{t_{j-1},t\},f}\chi_{f^{-1}((t_{j-1},t])}\|_{L^2}^2=\frac{1}{k\lambda_k}\|\nabla f\|_{L^2}^2=:C_0
\end{equation}
if such a $t$ exists, and $t_j=\|f\|_{L^{\infty}}$ otherwise.
Denoting by 
$$f_j:=\eta_{\{t_{j-1},t_j\},f}\chi_{f^{-1}((t_{j-1},t_j])}, \qquad j\geqslant 1,$$
we notice that $\{f_j\}$ are a family of non-negative, Lipschitz functions in $L^2(X,\mm)$ (trivial if $t_{j-1}=\|f\|_{L^{\infty}}$) such that $\{f_{j_1}>0\}\cap\{f_{j_2}>0\}=\emptyset$ whenever $j_1\neq j_2$. Moreover, for every $x,y\in X$ and $j\geqslant 1$ it holds  $|f_j(x)-f_j(y)|\leqslant |f(x)-f(y)|$ so that 
$$\sum_{j=1}^{\infty} |\nabla f_j|^2\leqslant |\nabla f|^2,$$
and in particular $|\nabla f_j|\in L^2(X,\mm)$. 

We also notice that $t_{2k}=\|f\|_{L^{\infty}}$. Indeed, if this is not the case, the fact that $t_{2k}<\|f\|_{L^{\infty}}$ and the inequality
$$\sum_{j=1}^{2k}\mathcal{R}(f_j)\leqslant \frac{1}{C_0}\|\nabla f\|_{L^2}^2=k\lambda_k$$
imply the existence of at least $k+1$ non-constant functions $\{f_j\}$ such that $\mathcal{R}(f_j)\leqslant \lambda_k$, which contradicts the min-max characterization of the eigenvalues.

We can thus consider the set $A:=\{0=t_0<t_1\leqslant ...\leqslant t_{2k}=\|f\|_{L^{\infty}}\}.$ By the above considerations we know that 
\begin{equation}\label{eq: 0final}
\|\eta_{A,f}\|_{L^2}^2\leqslant \frac{2}{\lambda_k}\|\nabla f\|_{L^2}^2.
\end{equation}

We now introduce the function $g:X\rightarrow \R$ defined as
$$g(x):=\int_0^{f(x)}\eta_A(t)\,\dd t.$$
We notice that $g\in \Lip_{bs}(X)$, it has the same level sets of $f$ since $g(x)>g(y)$ if and only if $f(x)>f(y)$ and, by applying the co-area inequality for Lipschitz functions on metric measure space (see for instance \cite[Proposition 4.1]{deponti-mondino}), it holds 
\begin{equation}\label{eq: 1final}
\phi(f)=\phi(g)\leqslant \frac{\|\nabla g\|_{L^1}}{\|g\|_{L^1}}.
\end{equation}
By the chain rule for Lipschitz functions, the fundamental theorem of calculus and the Cauchy--Schwarz inequality we can infer
\begin{equation}\label{eq: 2final}
\|\nabla g\|_{L^1}\leqslant \|\nabla f\|_{L^2}\|\eta_{A,f}\|_{L^2}.
\end{equation}

Finally, we also have at our disposal the pointwise estimate
\begin{equation}\label{eq: 3final}
g\geqslant \frac{1}{8k}f^2
\end{equation} 
which can be derived by elementary considerations using the definition of $g$ and $\eta_A$. 

We are now ready to conclude putting together \eqref{eq: 0final}, \eqref{eq: 1final},\eqref{eq: 2final} and \eqref{eq: 3final} obtaining
$$\phi(f)\leqslant \frac{\|\nabla g\|_{L^1}}{\|g\|_{L^1}}\leqslant 8k\frac{\|\nabla f\|_{L^2}\|\eta_{A,f}\|_{L^2}}{\|f\|_{L^2}^2}\leqslant 8\sqrt{2}\frac{k}{\sqrt{\lambda_k}}\frac{\|\nabla f\|_{L^2}^2}{\|f\|^2_{L^2}}.$$
\end{proof} 

With these results at our disposal, it is easy to derive a higher order Buser inequality.  
\begin{theorem}\label{th: higherbuser}
Let $(X,\di,\mm)$ be a $\RCD(K,\infty)$ space. Let $k\in \N^{+}$ and let us suppose that $\lambda_k<\Sigma$. Then 
\begin{itemize}
\item Case $K=0$. If $\mm(X)<\infty$, then
\begin{equation}\label{eq: higherbuser K=0fin}
\lambda_k< 128\pi^2 k^2h_k(X)^2.
\end{equation}
In case $\mm(X)=\infty$, we have
\begin{equation}\label{eq: higherbuser K=0infin}
\lambda_k< 8\pi^2 k^2h_k(X)^2.
\end{equation}

\item Case $K<0$. If $\mm(X)<\infty$, then
\begin{equation}\label{eq: higherbuser K<0fin}
\lambda_k< \max\left\{-\frac{14112}{25}Kk^2,\frac{29568}{25}k^2\sqrt{-K}h_k(X), \frac{61952}{25}k^2h^2_k(X)\right\}.
\end{equation}
In case $\mm(X)=\infty$, we have
\begin{equation}\label{eq: higherbuser K<0infin}
\lambda_k< \max\left\{-\frac{3528}{25}Kk^2,\frac{3696}{25}k^2\sqrt{-K}h_k(X), \frac{3872}{25}k^2h^2_k(X)\right\}.
\end{equation}
\end{itemize}
\end{theorem}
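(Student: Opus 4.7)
The plan is to derive Theorem~\ref{th: higherbuser} as a direct composition of the Liu-type higher eigenvalue estimate of Theorem~\ref{th: main eigen} with the first-eigenvalue Buser inequality of Theorem~\ref{th: buser}, after exploiting the elementary monotonicity of the $k$-Cheeger constants. The first observation is that $h_1(X)\leqslant h_k(X)$ (respectively $h_0(X)\leqslant h_k(X)$) for every $k\geqslant 1$: indeed, given any $k+1$ disjoint admissible sets $B_0,\dots,B_k$ realizing the infimum in \eqref{def: m-Cheeger}, the maximum of the ratios $\Per(B_i)/\mm(B_i)$ is an upper bound for any single ratio, hence for the single-set infimum defining $h_1$ or $h_0$. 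Consequently any upper estimate for $\lambda_1$ (resp.~$\lambda_0$) of the form $c_1 h_1(X)^{2}+c_2\sqrt{-K}\, h_1(X)$ immediately upgrades to the same estimate with $h_k(X)$ in place of $h_1(X)$.

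For the case $K=0$, $\mm(X)<\infty$, combining $\lambda_k<128\pi k^2\lambda_1$ from Theorem~\ref{th: main eigen} with $\lambda_1<\pi h_1(X)^2$ from Theorem~\ref{th: buser} and the monotonicity yields $\lambda_k<128\pi^{2}k^2 h_k(X)^2$, i.e.~\eqref{eq: higherbuser K=0fin}. The infinite-measure case $K=0$ is identical, now using $\lambda_k<32\pi k^2\lambda_0$ together with $\lambda_0<\pi (h_0(X)/2)^{2}=\pi h_0(X)^2/4$, producing the improved constant $8\pi^{2}$ of \eqref{eq: higherbuser K=0infin}. The cases $K<0$ are the same argument with a maximum. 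Starting from $\lambda_k<\max\{-\tfrac{14112}{25}Kk^2,\tfrac{2816}{5}k^2\lambda_1\}$ and $\lambda_1<\max\{\tfrac{21}{10}\sqrt{-K}h_1(X),\tfrac{22}{5}h_1(X)^2\}$, one distributes the factor $\tfrac{2816}{5}k^2$ inside the inner maximum, computes
\begin{equation*}
\tfrac{2816}{5}\cdot\tfrac{21}{10}=\tfrac{29568}{25},\qquad \tfrac{2816}{5}\cdot\tfrac{22}{5}=\tfrac{61952}{25},
\end{equation*}
and replaces $h_1(X)$ by $h_k(X)$, obtaining \eqref{eq: higherbuser K<0fin}. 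For $\mm(X)=\infty$ the same recipe applied to $\lambda_k<\max\{-\tfrac{3528}{25}Kk^2,\tfrac{704}{5}k^2\lambda_0\}$ and to the corresponding Buser bound on $\lambda_0$ (in which every occurrence of $h_1$ is replaced by $h_0/2$, so the effective Buser constants are halved and quartered) gives the constants $\tfrac{3696}{25}$ and $\tfrac{3872}{25}$ of \eqref{eq: higherbuser K<0infin}.

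Since the whole argument reduces to bookkeeping on two already-established inputs together with the trivial monotonicity of the multi-way isoperimetric constants, there is no substantive obstacle. The only delicate point is to check that the numerical constants stated in Theorem~\ref{th: higherbuser} match exactly the products of the constants appearing in Theorems~\ref{th: main eigen} and~\ref{th: buser}, which the short computations above confirm. In particular, any sharpening of the constants would require revisiting the proofs of Theorems~\ref{th: main eigen} or~\ref{th: buser}, rather than the composition step carried out here.
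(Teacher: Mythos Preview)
Your proof is correct and follows exactly the paper's approach: the paper states in one line that the result is a direct consequence of Theorem~\ref{th: main eigen}, Theorem~\ref{th: buser}, and the monotonicity $h_1(X)\leqslant h_k(X)$, and you have simply spelled out the bookkeeping (including the arithmetic checks on the constants) that this entails.
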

\begin{proof}
The proof is a direct consequence of Theorem \ref{th: main eigen}, Theorem \ref{th: buser} and the fact that $h_1(X)\leqslant h_k(X)$.
\end{proof}

It is interesting that a higher order Cheeger inequality is also valid. This has been firstly noticed in the setting of finite graphs by Lee, Gharan, and Trevisan in \cite{lee-gharan-trevisan}, and then extended to compact Riemannian manifolds by Miclo \cite[Theorem 7]{Miclo} (see also \cite{funano}). We are going to prove here that the result remains valid for $\RCD$ spaces, even of infinite measure, remarking that some additional work is needed here with respect to the work of Miclo, due to the lack of smoothness. The interested reader can compare our proof with the arguments contained in \cite[page 326]{Miclo}.

\begin{theorem}\label{th: RCDUpper}
There exists an absolute constant $C>0$ such that for any metric measure space $(X,\di,\mm)$ satisfying the $\RCD(K,\infty)$ condition, and for any $k\in \N^{+}$ such that $\lambda_k<\Sigma$, it holds
\begin{equation}
h_k(X)^2\leqslant Ck^6\lambda_k.
\end{equation}
\end{theorem}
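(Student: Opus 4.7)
The plan is to adapt to the non-smooth $\RCD$ setting the strategy that proves this higher Cheeger inequality on finite graphs (Lee--Gharan--Trevisan \cite{lee-gharan-trevisan}) and on compact Riemannian manifolds (Miclo \cite{Miclo}): from the first $k+1$ eigenfunctions one manufactures $k+1$ disjointly supported test functions whose Rayleigh quotients are controlled by a polynomial in $k$ times $\lambda_k$, and then extracts a good Borel superlevel set from each via the coarea-type inequality already used in the proof of Theorem \ref{th: Liu est}.

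First, I would select orthonormal eigenfunctions $\psi_0,\psi_1,\ldots,\psi_k\in D(\Delta)\subset W^{1,2}(X,\di,\mm)$ with eigenvalues $\lambda_0\leqslant\cdots\leqslant\lambda_k<\Sigma$, whose existence and square-integrability follow from the min--max characterization \eqref{eq:defeigk} together with the assumption $\lambda_k<\Sigma$. Since an $\RCD(K,\infty)$ space is infinitesimally Hilbertian and $\Lip_{bs}(X)$ is $W^{1,2}$-dense in it, each $\psi_i$ may be approximated in $W^{1,2}$-norm by elements of $\Lip_{bs}(X)$, with the corresponding convergence of minimal relaxed gradients in $L^2(X,\mm)$.

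The central step is to construct, out of these eigenfunctions, a family of nonnegative $f_0,\ldots,f_k\in \Lip_{bs}(X)$ with pairwise disjoint supports such that
$$
\max_{0\leqslant i\leqslant k}\,\mathcal{R}(f_i)\leqslant C_1\, k^{6}\, \lambda_k
$$
for some absolute constant $C_1$. In the smooth framework Miclo obtains the $f_i$ via a hierarchical localization, which one may read as using the spectral embedding $x\mapsto(\psi_0(x),\ldots,\psi_k(x))$ combined with carefully chosen metric cutoffs. In the $\RCD$ setting I would mirror this procedure, systematically replacing pointwise differential calculus by the chain and Leibniz rules for the minimal relaxed gradient (valid on infinitesimally Hilbertian spaces, and in particular on $\RCD(K,\infty)$ spaces) and using Lipschitz cutoffs associated with metric balls (available on any m.m.s.). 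Once the $f_i$ are at hand, applying to each of them the coarea-type inequality for Lipschitz functions used in the proof of Theorem \ref{th: Liu est} (see also \cite[Proposition~4.1]{deponti-mondino}) yields levels $t_i\geqslant 0$ such that $B_i:=\{f_i>t_i\}$ has positive finite $\mm$-measure and satisfies
$$
\frac{\Per(B_i)}{\mm(B_i)}\leqslant 2\sqrt{\mathcal{R}(f_i)}\leqslant 2\sqrt{C_1}\,k^{3}\,\sqrt{\lambda_k}.
$$
Since the supports $\supp(f_i)$ are pairwise disjoint, so are the $B_i$, and plugging them into the definition \eqref{def: m-Cheeger} gives $h_k(X)\leqslant 2\sqrt{C_1}\,k^{3}\sqrt{\lambda_k}$, equivalently $h_k(X)^2\leqslant 4C_1\,k^6\,\lambda_k$, which is the claim with the absolute constant $C:=4C_1$.

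The main obstacle is this localization step: Miclo's construction, although metric in spirit, is performed in a smooth setting where one has elliptic regularity of eigenfunctions and a classical pointwise gradient. In the $\RCD(K,\infty)$ framework one must operate throughout with $W^{1,2}$-approximations, with the minimal relaxed gradient $|D\cdot|$ and its calculus rules, and with Lipschitz cutoffs of metric balls; the real technical effort is to verify that all the pointwise-looking manipulations (chain rule, truncations, control of supports, and book-keeping of the Rayleigh quotients uniformly in $k$) still go through in the non-smooth setting. The infinite-measure case is then covered by the same scheme once one observes that $\lambda_k<\Sigma$ already guarantees the existence of the $L^2$ eigenfunctions on which the whole construction rests.
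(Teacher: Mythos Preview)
Your strategy is sound in outline but misses a key shortcut that the paper exploits. You propose to re-derive Miclo's localization in the $\RCD$ setting, and you rightly flag this as the main obstacle. The paper observes instead that Miclo's result \cite[page 325]{Miclo} is already formulated at the level of abstract Markov generators on probability spaces, not for smooth manifolds; since the Laplacian on an $\RCD(K,\infty)$ space is precisely the generator of the Markovian heat semigroup, the spectral inequality
\[
\frac{\tilde{C}}{k^6}\,\Lambda_k\;\leqslant\;\lambda_k,
\qquad
\Lambda_k:=\min_{B_0,\ldots,B_k}\max_j \lambda_0(B_j),
\]
applies directly whenever $\mm(X)<\infty$, as a black box. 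No adaptation of the localization is needed.

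What the paper then does is different from your final step: rather than producing the disjoint $f_i$ first and applying coarea to each, it works with the intermediate quantity $\Lambda_k$ and proves $h_k(X)^2\leqslant 4\Lambda_k$ by a Cheeger-type argument on each set $B$ (choose $f$ nearly realizing $\lambda_0(B)$, apply coarea to $f^2$, extract a superlevel set $B_{\bar t}\subset B$). A genuine technical point you do not address, and which the paper handles carefully, is ensuring that $\inf_{t>0}\Per(B_t)/\mm(B_t)>0$ so that a good level actually exists; the paper uses Buser's inequality (Theorem~\ref{th: buser}) and the complementary set $B_t^c$ to rule out degeneration when $\mm(B_t)>\mm(X)/2$. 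For the infinite-measure case, the paper does not rely on eigenfunctions directly but restricts the heat semigroup to finite-measure sets $E$, applies Miclo there, and shows $\lambda_k(E)\leqslant\lambda_k+\varepsilon$ by approximating a near-optimal $(k{+}1)$-dimensional subspace with functions in $\Lip_{bs}(X)$.

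In short: your route would require re-proving Miclo's theorem in the non-smooth setting, which you acknowledge is uncertain; the paper sidesteps this entirely by recognizing that Miclo's statement is already general enough to apply.
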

\begin{proof}
Recall  that on a $\RCD(K,\infty)$ space the Laplacian corresponds to the generator of the Markovian heat semigroup.  We divide the proof in two cases:
\begin{itemize}
\item Case $\mm(X)<\infty$:

Since the measure is finite, we are in position to apply directly a result of Miclo \cite[page 325]{Miclo} and infer the existence of an absolute constant $\tilde{C}>0$ such that
\begin{equation}\label{eq: Miclo Lambda_m-lambda_m}
\frac{\tilde{C}}{k^6}\Lambda_k\leqslant \lambda_k
\end{equation}
where
$$\Lambda_k:=\min\left\{\max_{j}\,  \lambda_0(B_j)\,:\, (B_0,...,B_k) \text{ are pairwise disjoint Borel sets},\, \mm(B_j)>0\right\}, $$
and $\lambda_0(B)$ is defined as
$$\lambda_0(B):=\inf\left\{ \mathcal{R}(f)\, : \, f\in W^{1,2}(X,\di,\mm),\, f=0\,\,  \mm\textrm{-a.e.~on}\, B^c, f\not\equiv 0\right\}.$$

We fix now a Borel set $B\subset X$ with $\mm(B)\in (0,\mm(X))$, $\lambda_0(B)>0$. By definition, for any constant $P>1$ there exists a non-null function $f\in W^{1,2}(X,\di,\mm)$, $f=0$ $\mm$-a.e.~on $B^c$, such that 
$$2P\sqrt{\lambda_0(B)}\geqslant 2\left(\frac{\int_X |Df|^2\,\dd\mm}{\int_X |f|^2\,\dd\mm}\right)^{1/2}.$$
Setting $B_t:=\{x\in X : |f^2(x)|\geqslant t\}\subset B$, $t>0$, and reasoning as in the proof of \cite[Theorem 4.6]{DMS} (using the fact that $f^2$ is a $\BV$ function to which we can apply the co-area formula) we can conclude that
\begin{equation}\label{eq: first part cheeger}
 2P\sqrt{\lambda_0(B)}\geqslant \frac{\int_0^{\infty}\Per(B_t)\,\dd t}{\int_0^{\infty}\mm(B_t)\,\dd t}.
\end{equation}

We denote by $\phi_2(f):=\inf_{t>0} \frac{\Per(B_t)}{\mm(B_t)}$ and claim that
\begin{equation}\label{eq:phi2f>0}
\phi_2(f)>0\,.
\end{equation}
First, observe that
\begin{equation}\label{eq:mBt<12}
 \inf \bigg\{ \frac{\Per(B_t)}{\mm(B_t)} \,:\,  \mm(B_t)\in (0, \mm(X)/2] \bigg\}\geqslant h_{1}(X)>0 \, ,
 \end{equation}
where the first inequality follows by the very definition of Cheeger constant \eqref{eq:defChConst}, while the second is a consequence of the Buser's inequality (see Theorem \ref{th: buser}) and the fact that $\lambda_1(X)>0$.
\\We now prove a lower bound on  $\Per(B_t)/\mm(B_t)$, in case $\mm(B_t)> \mm(X)/2$. Since $\mm(B^{c}_{t})\geq \mm(B^{c})>0$, we  have that
\begin{align}
\frac{\Per(B_t)}{\mm(B_t)} &\geq \frac{\Per(B_t^{c})}{\mm(B) } =  \frac{\mm(B_t^{c})}{\mm(B) }  \;  \frac{\Per(B_t^{c})} {\mm(B_t^{c}) } \nonumber \\
& \geq  \frac{\mm(B^{c})}{\mm(B)} \, h_{1}(X), \quad \text{for all $t>0$ s.t. $\mm(B_t)>\frac{\mm(X)}{2}$.} \label{eq:mBt>12}
\end{align} 
The claim \eqref{eq:phi2f>0} follows by \eqref{eq:mBt<12} and \eqref{eq:mBt>12}.

 For any $Q>1$, we can thus find a $\bar{t}\in (0,\infty)$ such that $\mm(B_{\bar{t}})>0$ and
\begin{equation}\label{eq: second part cheeger}
Q\phi_2(f)\geqslant \frac{\Per(B_{\bar{t}})}{\mm(B_{\bar{t}})}.
\end{equation}
Since it is trivial that 
$$\displaystyle \frac{\int_0^{\infty}\Per(B_t)\,\dd t}{\int_0^{\infty}\mm(B_t)\,\dd t}\geqslant \phi_2(f),$$
by \eqref{eq: first part cheeger} and \eqref{eq: second part cheeger} it follows that for any Borel set $B\subset X$, $\mm(B)\in (0,\mm(X))$, and for any $P,Q>1$, there exists a set $B_{\bar{t}}\subset B$, $\mm(B_{\bar{t}})\in (0,\mm(X))$, such that
\begin{equation}\label{eq: third part cheeger}
2PQ\sqrt{\lambda_0(B)}\geqslant \frac{\Per(B_{\bar{t}})}{\mm(B_{\bar{t}})}
\end{equation}
Since $B\subset X$ and $P,Q>1$ are arbitrary, using \eqref{eq: third part cheeger} we can infer that
$h^2_k(X)\leqslant 4\Lambda_k$ by the very definition \eqref{def: m-Cheeger} of $h_k(X)$. This last fact together with \eqref{eq: Miclo Lambda_m-lambda_m} leads to the desired conclusion, setting $C:=4/\tilde{C}.$
\item Case $\mm(X)=+\infty$:

We reason by approximation to prove \eqref{eq: Miclo Lambda_m-lambda_m} also for spaces with infinite measure. Then the proof proceeds as above, and actually it is even easier. 

The inequality \eqref{eq: Miclo Lambda_m-lambda_m} can be proven as follows: given a non-negligible Borel set $E\subset X$ with finite measure, we introduce the notation $H_{t,E}$ for the heat semigroup restricted to $E$:
\[
H_{t,E}: L^{2}(\mm_{E})\to L^{2}(\mm_{E}), \quad H_{t,E}(f):= \chi_{E} H_{t}(\chi_{E} \, f),
\]
where $\mm_{E}:= \mm(E)^{-1} \, \mm \llcorner_{E}$  is the conditional expectation of $\mm$ with respect to $E$.
 $H_{t,E}$ is a continuous self-adjoint semigroup in $L^2(\mm_E)$ with generator denoted by $\Delta_E$ (see \cite[page 325]{Miclo} for all the details). We also denote by 
\begin{align*}
&\Lambda_k(E):= \\
&\min\left\{\max_{j}\,  \lambda_0(B_j)\,:\, (B_0,...,B_k) \text{ are pairwise disjoint Borel sets},\, \mm(B_j)\in(0,\infty)\right\},
\end{align*}
and by 
$$\lambda_k(E):= \min_{V_{k+1}(E)}\, \max_{f\in V_{k+1}(E),\\ f\not\equiv 0} \ \mathcal{R}(f)\, ,$$
where $V_{k+1}(E)$ denotes a $(k+1)$-dimensional subspace of the vector space of functions $f\in W^{1,2}(X,\di,\mm)$ such that $f\equiv 0$ on $E^{c}$.

By the above mentioned result of Miclo \cite[page 325]{Miclo}, we know that there exists an absolute constant $\tilde{C}$ such that for any set $E$ of finite measure we have
$$\frac{\tilde{C}}{k^6}\Lambda_k\le \frac{\tilde{C}}{k^6}\Lambda_k(E)\le \lambda_k(E),$$
where the first inequality trivially follows from the definition of $\Lambda_k$ and $\Lambda_k(E)$.
The result is thus proven if for any $k\in \N^{+}$ and for any $\varepsilon>0$ we find a Borel set $E$, with $\mm(E)\in (0,\infty)$, such that $\lambda_k(E)\le \lambda_k+\varepsilon$. 
To prove the last statement, we fix an eigenvalue $\lambda_k$ below the infimum of the essential spectrum of $\Delta$. By the min-max characterization, we know that there exist a $(k+1)$-dimensional subspace $V_{k+1}$ of $W^{1,2}(X,\di,\mm)$ and an $L^2(X,\mm)$ orthonormal basis $(f_0,\ldots, f_k)$ of $V_{k+1}$ such that
$$\lambda_k\ge \int_X |Df_i|^2\,\dd \mm \ \ \ \forall i=0,\ldots,k \qquad \mathcal{E}(f_i,f_j)=0\ \ \ \forall i\neq j=0,\ldots,k$$
where $\mathcal{E}$ is the symmetric bilinear form associated to the Cheeger energy.

By the density of $\Lip_{bs}(X)$ in $W^{1,2}(X,\di,\mm)$, for every $\delta>0$ we can find $f_{i,\delta}\in \Lip_{bs}(X)$, $i=0,\ldots,k$, such that $\int_X f_{i,\delta}^2\,\dd\mm=1$ and 
\begin{equation}\label{eq:relat. eigenfunctions}
\left|\int_Xf_{i,\delta}f_{j,\delta}\dd\mm\right|+|\mathcal{E}(f_{i,\delta},f_{j,\delta})|\le\delta \ \ \forall i\neq j, \qquad \int_X |Df_{i,\delta}|^2\,\dd\mm\le \lambda_k+\delta \ \ \forall i.
\end{equation}
We thus define $E:=\bigcup_i \supp(f_{i,\delta})$. Called $V_{E,k}:=\textrm{span}(f_{0,\delta}, \ldots, f_{k,\delta})$, using \eqref{eq:relat. eigenfunctions}, it is easily checked that $\dim(V_{E,k})=k+1$ and for every $f\in V_{E,k}$ with $\|f\|_{L^{2}(\mm_{E})}=1$ it holds that $\mathcal{E}(f,f)\leq\lambda_k+\varepsilon(\delta | k)$ with $\varepsilon(\delta | k)\rightarrow 0 $ as $\delta\rightarrow 0$ for every $k$. It follows that  $\lambda_{k}(E)\leq \lambda_{k}+ \varepsilon(\delta | k)$, as desired.
\end{itemize}
\end{proof}

% section bounds (end)

\section{AdS$_4$ vacua with ${\mathcal N}=4$ supersymmetry} % (fold)
\label{sec:bl}

Our most important class of examples will be the one that motivated this paper, where the presence of a light graviton was found explicitly in \cite{bachas-lavdas,bachas-lavdas2}.

\subsection{The general ${\mathcal N}=4$ class} % (fold)
\label{sub:rev}

The relevant solutions are written as a fibration of $\mathbb{S}^2\times \mathbb{S}^2$ over a strip $\mathbb{R}\times [0,\pi/2]$, with coordinates $x$, $y$. There is ${\mathcal N}=4$ supersymmetry, whose R-symmetry $\mathrm{so}(4)\cong \mathrm{su}(2) \oplus  \mathrm{su}(2)$ rotates the two $\mathbb{S}^2$'s. All fields can be written in terms of two real harmonic functions $H_1$, $H_2$ on the strip, such that $H_1= \partial_y H_2=0$ at $y=0$, and $H_2 = \partial_y H_1=0$ at $y=\pi/2$. All we really need for our purposes is the metric, which we give in Einstein frame, with $z:= x + \ii y$:
\begin{align}\label{eq:ads4s2s2}
	&\dd s^2_{10}=\frac{2(N_1 N_2)^{1/4}}{\sqrt{W}}
	\Big(\dd s^2_{\mathrm{AdS}_4} + \frac{2W}{H_1 H_2} \dd z \dd \bar z
	+ \frac{H_1^2 W}{N_1}\dd s^2_{\mathbb{S}^2_1} + \frac{H_2^2 W }{N_2}\dd s^2_{\mathbb{S}^2_2}\Big)\,,\\
	\nonumber &W= \partial_z \partial_{\bar z} (H_1 H_2) \, ,\qquad N_1 = 2 H_1 H_2 |\partial_z H_1|^2 -H_1^2 W \, ,\qquad
	 N_2 = 2 H_1 H_2 |\partial_z H_2|^2 -H_2^2 W\,.
\end{align}
The AdS$_4$ metric has radius one. The barred metric $\dd \bar s^2_6$ is the internal part of the expression inside the parenthesis. We also mention that the string coupling is given by $\ee^\phi=(N_2/N_1)^{1/2}$.\footnote{All other fields are also known explicitly; see \cite{dhoker-estes-gutperle,dhoker-estes-gutperle2,assel-bachas-estes-gomis}. Notice however that these references use a different normalization of the dilaton, which differs by a factor of two.} 

The metrics of the two $\mathbb{S}^2$s are round. The $\mathbb{S}^2_1$ and $\mathbb{S}^2_2$ shrink at $y=0$ and $y=\pi/2$ respectively; so each locus $\{x=x_0\}$ is topologically an $\mathbb{S}^5$. But $x$ ranges over $\mathbb{R}$, and this appears to make $M_6$ non-compact. Indeed this class was originally found in \cite{dhoker-estes-gutperle,dhoker-estes-gutperle2} as the gravity dual to interfaces in ${\mathcal N}=4$ super-Yang--Mills (YM). For a generic choice of $H_a$, the two limits $x\to \pm \infty$ would reconstruct two AdS$_5$ limits; the central region would represent the degrees of freedom of the interface. However, with the particular choice \cite{assel-bachas-estes-gomis}
\begin{equation}\label{eq:h12-assel}
	H_1= -2\mathrm{Re}\sum_a \gamma_a \log \tanh\frac{\pi\ii + 2\delta_a - 2z}4
	\, ,\qquad
	H_2 = -2\mathrm{Re}\sum_a \hat \gamma_a \log \tanh\frac{z- \hat \delta_a}2 \, ,
\end{equation}
the $\mathbb{S}^5$ shrinks at both $x\to \pm \infty$, which are moreover at finite distance. Hence $M_6\cong \mathbb{S}^6$. So with this choice the AdS$_5$ regions have been pinched off, obtaining a compact internal space; this is the gravity dual of focusing on the CFT$_3$ that lives on the interface, decoupling the two CFT$_4$ sides. At the special points $z=\hat\delta_a$, $z=\delta_a +\frac\pi2\ii$, there are singularities, which can be identified with the behavior of NS5-branes and D5-branes respectively. These locations are discretized by imposing that the numbers of these branes (and the $F_5$ flux  quantum) are integer.

The holographic duality of these solutions with field theory was described in detail in \cite{assel-bachas-estes-gomis}. The solutions are viewed as the near-horizon limit of a system of D3-, D5-  and NS5-branes. If the D3-branes are made to never end on any D5-brane, one reads off the CFT$_3$ as a chain of $\mathrm{SU}(N_i)$ gauge groups, with hypermultiplets in fundamental and bifundamental representations \cite{hanany-witten}. One can also view these theories as arising from the $d=4$, ${\mathcal N}=4$ super-YM living on the D3s on a spacetime $I \times \mathbb{R}^3$, with $I$ an interval and different boundary conditions on the two sides. This point of view can be exploited also by placing all NS5-branes on one side and all D5-branes on the other \cite{gaiotto-witten-1}.

% subsection rev (end)

\subsection{Almost-split symmetric case} % (fold)
\label{sub:ass}

We now specialize to almost-split solutions as in \cite{bachas-lavdas,bachas-lavdas2}. These are obtained by dividing the NS5 and D5 locations $\delta_a$, $\hat \delta_a$ in \eqref{eq:h12-assel} in two groups, separated in the $x$ direction by a long region $\Delta x := \xi \gg 1$ without any branes. We will first consider the simpler case where the two sides are symmetric; in Section \ref{sub:asas} we will show the modifications required for the asymmetric case.

From the dual CFT point of view, the chain of gauge groups gets naturally divided in two chains, connected by a single $\mathrm{SU}(n)$, with $n$ much smaller than the other gauge groups. The two chains are then viewed as two separate CFT$_3$s which are almost decoupled, only communicating with each other through a small ``portal''. The $\mathrm{SU}(n)$ can also be thought of as arising from a $d=4$ ${\mathcal N}=4$ super-YM on  $I \times \mathbb{R}^3$, which at large distances reduces to a $d=3$ gauge theory. This realizes the field theory part of the discussion in Section \ref{sub:hol}. 

As in \cite{bachas-lavdas}, we will focus on a particularly simple almost-split configuration. There are $N-1$ D5-branes at both loci $z=(\pm (\xi-u)+\ii \pi)/2$, and one D5-brane each at $z=\pm (\xi-u)/2 -\mp \delta u+\ii \pi/2$; and $\hat N$ NS5-branes at both loci $z= \pm (\xi + u)/2 \mp \delta \hat u$. The parameters are fixed by flux quantization as 
\begin{equation}\label{eq:xi}
	\xi \sim \log \frac{4 N \hat N}{\pi n} \gg 1 \, ,\qquad
	\delta u = \frac{\pi n}{\hat N s_0} \, ,\qquad \delta \hat u = \frac{\pi n}{N \hat N s_0}\,,
\end{equation}
where $s_0 = \sin (2 \arctan{\ee^{-u_0}})$. In the limit $\xi\gg 1$ the harmonic functions read 
\begin{equation}
	H_1 \sim 8 N \ee^{-\xi/2} \cosh x \sin y \, ,\qquad
	H_2 \sim 8 \hat N \ee^{-\xi/2} \cosh x \cos y
\end{equation}
in the central ``bridge'' region $-\frac \xi2 \ll x \ll \frac \xi2$. The metric in \eqref{eq:ads4s2s2} becomes
\begin{equation}\label{eq:ads5-s5}
\begin{split}
	\dd s^2_{10}&=L_5^2
	\Big(\cosh^2 x \dd s^2_{\mathrm{AdS}_4} + \dd x^2 + \dd y^2
	+ \sin^2 y\dd s^2_{\mathbb{S}^2_1} + \cos^2 y\dd s^2_{\mathbb{S}^2_2}\Big)\\
	&= L_5^2 \left(\dd s^2_{\mathrm{AdS}_5}+ \dd s^2_{\mathbb{S}^5}\right)
	\,,
\end{split}
\end{equation}
with $L_5^2= 16\ee^{-\xi/2} \sqrt{N \hat N}$.

The smallest KK spin-two mass was estimated in \cite{bachas-lavdas} for this configuration by approximating the eigenfunction $\psi$ of \eqref{eq:BELaplacian}: in the two CFT$_3$ regions it is taken to be constant, and in the AdS$_5$ region it is found explicitly by applying \eqref{eq:BELaplacian} to \eqref{eq:ads5-s5}. This corresponds to the mixing of the two massless gravitons. The final expression obtained this way is \cite{bachas-lavdas}
\begin{equation}\label{eq:m-bl}
	m_1^2 = \frac{3 \kappa_4^2 n^2}{8 \pi^2 L_4^4} \,,
\end{equation}
where $\kappa_4$ is the four-dimensional Newton's constant, and we have restored the AdS$_4$ radius $L_4$, that was set to one in \eqref{eq:ads4s2s2}.

Let us now see if this is consistent with our bound \eqref{eq:bound}. We look among the subsets $B$ of the form $B= \{ x\leqslant x_0\}$, with $x_0$ in the central region. The numerator $\mathrm{Vol}_f( \partial B)$ of \eqref{eq:h} can be calculated from \eqref{eq:ads5-s5} to be
\begin{align}
	\nonumber\int_{\partial B} \ee^{8A}\dd\overline{\mathrm{vol}}_5 
	=\int_{\partial B} \ee^{3A}\dd\mathrm{vol}_5 
	&\sim 
	2^{13} (N \hat N)^2 \ee^{-2 \xi}\cosh^3 x_0
	(\mathrm{Vol}(\mathbb{S}^2))^2\int_0^{\pi/2} \dd y \sinh^2(2y)\\
	&= 2^{11}\pi^5 n^2 \cosh^3 x_0\,.
\end{align}
The estimate
\begin{equation}
	 \int_{B}\ee^{8A}\dd\overline{ \mathrm{vol}}_6=\int_{B}\ee^{2A}\dd \mathrm{vol}_6= 2^9 \pi^2 \int \dd x \dd y H_1 H_2 \partial_z \partial_{\bar z} (H_1 H_2)\sim (N\hat N)^2 v_6
\end{equation}
was given in \cite{bachas-lavdas}, for $v_6$ an order-one constant. So we can estimate the integral in the denominator of \eqref{eq:h} as half of this, plus the contribution from the integral $\int_0^{x_0}$, which using \eqref{eq:ads5-s5} again works out to $2^{11}\pi^5 n^2 \cosh^3 x_0$. This leads to
\begin{equation}\label{eq:h-bl0}
	h_1= \mathrm{inf}_{B}\frac{\text{Vol}_f(\partial B)}{\text{Vol}_f(B)}=\mathrm{inf}_{x_0}\frac{2^{11}\pi^5 n^2 \cosh^3 x_0}{\frac12 (N \hat N)^2 v_6 - 2^{11}\pi^5 n^2 \int_0^{x_0}\dd x\cosh^2 x}\,.
\end{equation}
The minimum is obtained for $x_0=0$; so
\begin{equation}\label{eq:h-bls}
	h_1= \frac{2^{11}\pi^5 n^2 }{(N \hat N)^2 v_6}= \frac1{2\pi^2} \kappa_4^2 n^2\,,
\end{equation}
using also \cite[(5.5)]{bachas-lavdas}. 

In other words, the mass estimate \eqref{eq:m-bl} reads
\begin{equation}\label{eq:m-bl-h}
	m_1^2\sim \frac34 h_1\,.
\end{equation}
 Comparing with \eqref{eq:bound}, at the lower end the bound is satisfied because we are assuming small $h_1$, or in other words $n \ll N \hat N$; recall \eqref{eq:xi}. For the same reason at the upper end we should select $\frac{21}{10}h \sqrt{K}$; since AdS$_4$ in \eqref{eq:ads4s2s2} has radius one, $\Lambda=-3$, and \eqref{eq:m-bl-h} agrees with the bound because $\frac34 < \frac{21}{10}\sqrt3$. 

To be more precise we can estimate $\sigma$. The largest warping variation is achieved at the boundaries of the central region, where the metric is given in \eqref{eq:ads5-s5}. Here $\ee^{2A_\mathrm{E}} = n^2 \cosh^2 x$, and 
\begin{equation}\label{eq:bl-sigma}
	|\dd A_\mathrm{E}|^2 \sim \sinh^2 x\,.
\end{equation}
Since this is valid in the region $|x| \ll \frac\xi2$, we have a worst-case estimate $\sigma^2 \ll \ee^\xi \sim \frac{4 N \hat N}{\pi n}$. So overall the upper bound in \eqref{eq:bound} is proportional to $h_1^\alpha$, $\alpha<3/4$.

In conclusion, while the estimate in \cite{bachas-lavdas} is of course more precise, our general-purpose bounds do quite well here, constraining the smallest eigenvalues to be small, going to zero with $h_1\propto (n/N \hat N)^2$ with a behavior between $h_1^{3/4}$ and $h_1^2$.

% subsection ass (end)

\subsection{Higher eigenvalues} % (fold)
\label{sub:bl-higher}

Our methods also allow us to show that the second eigenvalue should \emph{not} go to zero in the limit $n \ll N \hat N$. 

As we anticipated in Section \ref{sub:higher}, $h_k$ is expected to be small if one can split $M_6$ in $k+1$ pieces with small necks. Since the spaces we are considering can only be split in two, we only expect the first eigenvalue to be small. 

Let us see this in more detail. We take the boundaries of the three subsets $B_i$ to be at loci with constant $x$. To minimize the contribution of boundaries, we take
\begin{equation}
	B_1 = \{ x\leqslant x_-\} \, ,\qquad B_2=\{ x_-' \leqslant x \leqslant x_+'\} \, ,\qquad
	B_3 = \{x_+ \leqslant x\}\,.  
\end{equation}
If we take the cut locus $x_-$ outside the neck region, $B_1$ will be fully in the large region on the left, and $\mathrm{Vol}_f(\partial B_1)/\mathrm{Vol}_f (B_1)$ will not be small. So let us take $x_-$ in the neck region, and $x_+$ too for the same reason. To avoid overlaps between the $B_i$, then $x_\pm'$ will also be in the neck region. 

Now we will take $x_+ = -x_-$, $x'_+= - x_-'$ for simplicity; the general case is similar. A computation similar to \eqref{eq:h-bl0} gives
\begin{equation}\label{eq:hX2}
	\frac{\mathrm{Vol}_f (\partial B_2)}{\mathrm{Vol}_f(B_2)} = 
	\frac{\cosh^3 x'_+}{\int_0^{x'_+}\dd x\cosh^2 x}
\end{equation}
for the central piece $B_2$, while for $B_1$ and $B_3$ the result is just the fraction in the right hand side of \eqref{eq:h-bl0} with $x_0 \to x_+$ (prior to taking the infimum).

Thus we know already that $\frac{\mathrm{Vol}_f (\partial B_1)}{\mathrm{Vol}_f(B_1)}= \frac{\mathrm{Vol}_f (\partial B_3)}{\mathrm{Vol}_f(B_3)}$ have an infimum for $x_+=0$, given by \eqref{eq:h-bls}, which is of order $(n/N \hat N)^2\sim \ee^{-2\xi}\ll 1$. When $x_+$ gets larger, this grows; its larger value in the neck region is at its boundary, $x_+\sim \xi/2$, where $\frac{\mathrm{Vol}_f (\partial B_1)}{\mathrm{Vol}_f(B_1)}\sim \ee^{-\xi/2}$. However, \eqref{eq:hX2} is always larger than this. It diverges when $x'_+\to 0$, it has value of order $\sim \ee^{\xi/2}$ at $x'_+= \xi/2$, and it has a minimum in between where it is of order one. Since in \eqref{eq:multiway} we are instructed to take the largest of the three $\frac{\mathrm{Vol}_f (\partial B_i)}{\mathrm{Vol}_f(B_i)}$, the final result for $h_2$ is of order one. By \eqref{eq:funano-lower}, the second eigenvalue $m_2$ is also of order one: unlike $m_1$, it cannot be made arbitrarily small in this class of solutions. 

Clearly we made a few assumptions in this computation; so the argument above cannot be regarded as a proof. However, it strongly suggests that $m_2/m_1\to \infty$ as $n/N \hat N\to 0$, giving an example of the mechanism in (\ref{eq:conj2}). (As pointed out in \cite{bachas-19}, the limit is not continuous, so it is not covered by the massive-AdS-graviton conjecture proposed there).

In Section \ref{sec:riemann} we will see a stronger argument for a different class of solutions.

% subsection bl_higher (end)

\subsection{Almost-split asymmetric case} % (fold)
\label{sub:asas}

We now consider the case where the two groups of branes separated by a large $\Delta x = \xi \gg 1$ are different.

The smallest non-zero graviton mass for such cases was estimated in \cite{bachas-lavdas2} as
\begin{equation}\label{eq:m-bl2}
	m_1^2 = \frac{3 n^2}{16 \pi^2 L_4^4}(\kappa^2_{4\mathrm{L}} + \kappa^2_{4\mathrm{R}} ) J(\cosh \delta \phi)\,,
\end{equation}
where $J$ is a positive function with $J(0)=1$ and that goes monotonically down to zero at infinity: $J(\cosh \delta \phi)\sim (\log \cosh \delta \phi)^{-1} \sim (\delta \phi)^{-1}$ with $\delta \phi \to \infty$.
For $\kappa^2_{4\mathrm{L}} = \kappa^2_{4\mathrm{R}}$ this reduces to \eqref{eq:m-bl}. On the other hand, when $\kappa^2_{4\mathrm{R}}\to 0$, the Planck mass goes to infinity and the massless graviton decouples; the spin-two field with mass $m_1$ is now the lightest dynamical field.

We will now apply our general bound to this asymmetric situation. 
We focus on the simplest generalization of the situation described above \eqref{eq:xi}. We take
\begin{equation}
\begin{split}
	&N_\mathrm{R}-1 \text{ D5s at } z=\frac{\xi-u_\mathrm{R}}2 +\ii\frac\pi2 \, ,\qquad 1 \text{ D5 at } z=- \delta u+ \frac{\xi-u_\mathrm{R}}2+\ii\frac\pi2\,;\\
	&N_\mathrm{L}-1 \text{ D5s at } z=\frac{-\xi+u_\mathrm{L}}2+\ii\frac\pi2 \, ,\qquad 1 \text{ D5 at } z=\delta u+ \frac{-\xi+u_\mathrm{L}}2+\ii\frac\pi2\,;\\
	&\hat N_\mathrm{R} \text{ NS5s at } z=\frac{\xi+u_\mathrm{R}}2- \delta \hat u_\mathrm{R} \, ,\qquad \hat N_\mathrm{L} \text{ NS5s at } z=-\frac{\xi+u_\mathrm{L}}2+ \delta \hat u_\mathrm{L} \,.
\end{split}
\end{equation}
In this simple situation, 
\begin{equation}
	\kappa^2_{4\mathrm{L,R}}\sim (N_\mathrm{L,R}\hat N_\mathrm{L,R})^2\,.
\end{equation}
Modifying the logic in \cite{bachas-lavdas} appropriately we obtain
\begin{equation}
\begin{split}
	\ee^{-\xi}\sim &\frac{\pi n}{2 (\hat N_\mathrm{L} N_R + N_\mathrm{L} \hat N_R )} \, ,\qquad \delta u_\mathrm{L} \sim \frac{\pi n}{\hat N_\mathrm{L} s_{0\mathrm{L}}} \, ,\qquad  \delta u_\mathrm{R} \sim \frac{\pi n}{\hat N_\mathrm{R} s_{0\mathrm{R}}}\,;\\
	\delta \hat u_\mathrm{L} \sim& \frac{\pi n \hat N_\mathrm{R}}{ \hat N_\mathrm{L} s_{0\mathrm{L}} (\hat N_\mathrm{L} N_R + N_\mathrm{L} \hat N_R )} \, ,\qquad
	\delta \hat u_\mathrm{R} \sim \frac{\pi n \hat N_\mathrm{L}}{ \hat N_\mathrm{R} s_{0\mathrm{R}} (\hat N_\mathrm{L} N_R + N_\mathrm{L} \hat N_R )}\,,
\end{split}
\end{equation}
where $s_{0 \mathrm{L,R}} = \sin (2 \arctan{\ee^{-u_{0\mathrm{L,R}}}})$.
The harmonic functions in the central region $|x| \ll \frac \xi 2$ now read
\begin{equation}
	H_1 \sim 4\mathrm{Re} (\ii N_\mathrm{L} \ee^{-z} -\ii N_\mathrm{R} \ee^z) \ee^{-\xi/2}\, ,\qquad
	H_2 \sim 4\mathrm{Re} (\hat N_\mathrm{L} \ee^{-z} + \hat N_\mathrm{R} \ee^z) \ee^{-\xi/2}\,.
\end{equation}
In particular 
\begin{equation}
	H_1 H_2 \sim -\frac{L_5^4}{16} \left(\frac{\cosh(2 (x-\delta x))}{\cosh \delta \phi} +1 \right)\sin(2 y) \, ,\qquad W=\frac{L_5^4}{16} \sin(2y)\,,
\end{equation}
with
\begin{equation}\label{eq:dxdf}
	\ee^{4 \delta x}= \frac{N_\mathrm{L}\hat N_\mathrm{L}}{N_\mathrm{R} \hat N_\mathrm{R}} \, ,\qquad \ee^{2\delta \phi} =\frac{N_\mathrm{L} \hat N_\mathrm{R}}{\hat N_\mathrm{L} N_\mathrm{R}} \, ,\qquad L_5^4 = 2^7 (\hat N_\mathrm{L} N_R + N_\mathrm{L} \hat N_R ) \ee^{-\xi} = 2^6 \pi n \,.
\end{equation}

We again assume that the $B$ minimizing 
$\mathrm{Vol}_f(\partial B)/\mathrm{Vol}_f(B)$ are those with a boundary in the central region, $B= \{x\leqslant x_0\}$, $|x_0| \ll \xi/2$. Without loss of generality we assume $N_\mathrm{L} \hat N_\mathrm{L} < N_\mathrm{R} \hat N_\mathrm{R}$. The same logic leading to \eqref{eq:h-bls} now gives
\begin{equation}\label{eq:h-blas}
	h_1= \mathrm{inf}_{B} \frac{\text{Vol}_f(\partial B)}{\text{Vol}_f(B)}=
	\mathrm{inf}_{x_0} \frac1{\sqrt2} 
		\frac{\left(1+ \cosh^{-1}\delta \phi\cosh(2x_0)\right)^{3/2}}
		{v_\mathrm{L} \left(\frac{N_\mathrm{L}\hat N_\mathrm{L}}{n}\right)^2 +\int_{-\xi/2}^{x_0}\dd x\left(1+ \cosh^{-1}\delta \phi\cosh(2x)\right)}\,,
\end{equation}
with $v_\mathrm{L}$ a numerical factor. When $\delta \phi=0$, this reduces to \eqref{eq:h-bls}. More generally the minimum has no analytic expression, but it simplifies in appropriate limits. 

For example we may take $\delta \phi$ to be very large, where we notice that the estimate \eqref{eq:m-bl2} is brought down by an additional factor $1/ \delta \phi$. By \eqref{eq:dxdf}, this implies $N_\mathrm{L} \hat N_\mathrm{R} \gg \hat N_\mathrm{L} N_\mathrm{R}$, $\ee^\xi \sim \frac{2 N_\mathrm{L} \hat N_\mathrm{R}}{\pi n}$. When
\begin{equation}\label{eq:largephi}
	\delta \phi \gg (N_\mathrm{L}\hat N_\mathrm{L}/n)^{-2}\,,
\end{equation}
 the minimization in \eqref{eq:h-blas} is well-approximated by
\begin{equation}\label{eq:hx-largephi}
	h \sim \frac1{W_0 (\cosh \delta \phi)}
	\, ,\qquad
	\ee^{2 x_\mathrm{min}}\sim \frac{\ee^{2 \delta x + \delta \phi}}{3 \delta \phi}\,.
\end{equation}
$W_0$ is the Lambert function, defined as the positive solution to $z=W_0(z)\ee^{W_0(z)}$; for large $z$, $W_0(z)\sim \log z - \log\log z +O(z^{-1}\log\log z)$. The computation is only sensible when $x_\mathrm{min}$ is in the neck region; this implies $\delta \phi\ll \frac{N_\mathrm{L}^2 \hat N_\mathrm{R}}{n N_\mathrm{R}}$, which is not necessarily in conflict with \eqref{eq:largephi}. Finally the arguments around \eqref{eq:bl-sigma} give us $\sigma^2 \ll \ee^{\xi - 2 \delta x - \delta \phi}\sim \sqrt{ \frac{N_\mathrm{R}\hat N_\mathrm{R}}{n}}$. 

The lower bound in \eqref{eq:bound} is then of order $\frac1{\delta \phi^2}\ll \frac1{\delta \phi}\left(\frac{n}{N_\mathrm{L} \hat N_\mathrm{L}}\right)^2$, so it is compatible with \eqref{eq:m-bl2}. The upper bound in \eqref{eq:bound} is of order $\frac1{\delta \phi}\frac{N_\mathrm{R}\hat N_\mathrm{R}}{n}$, which goes as $\frac 1 {\delta \phi}$ but with a much larger coefficient than in \eqref{eq:m-bl2}; in fact this bound gets less and less useful as $N_\mathrm{R} \hat N_\mathrm{R}$ get large. 

It is also very interesting to consider 
\begin{equation}
	N_\mathrm{R} \hat N_\mathrm{R}\to \infty
\end{equation}
even without any assumptions on $\delta \phi$. In this case $M_6$ becomes non-compact, but \eqref{eq:m-bl2} remains finite and may be small. However, $\sigma$ diverges, because $|\dd A_\mathrm{E}|^2$ has the same exponential behavior \eqref{eq:bl-sigma} for large $x$, which is now valid all the way to infinity without the requirement $x\ll \frac\xi2$. So the upper bound in \eqref{eq:bound} is useless in this case. 

In any case, in Section \ref{sec:riemann} we will see infinite-volume examples where $\sigma$ and the upper bound in \eqref{eq:bound} are finite.

We did not consider higher eigenvalues in this subsection; the numerical evaluation in \cite{bachas-estes} and intuitive reasons in \cite{bachas-19} suggest that this time all $m_k\to 0$ as $\delta \phi\to 0$, in agreement with the massive-AdS-graviton conjecture proposed there.

% subsection ass (end)

% section bl (end)

\section{Examples with Riemann surfaces} % (fold)
\label{sec:riemann}

\subsection{Supersymmetric twisted compactifications} % (fold)
\label{sub:mn}

Various solutions with one or more light spin-two fields can be constructed from the holographic duals of compactifications of conformal theories on Riemann surfaces $\Sigma_g$ with negative curvature. Consider a SCFT$_d$ with an AdS$_{d+1}\times M_n$ gravity dual. Compactifying it on $\Sigma_g$ with a certain partial topological twist, supersymmetry is still preserved and one obtains a SCFT$_{d-2}$, whose gravity dual is now of the form $\mathrm{AdS}_{d-1}\times M_{n+2}$: the internal space $M_{n+2}$ is a fibration over $\Sigma_g$, whose fiber is a distorted version of the original $M_n$.

Several examples exist with various values of $d$ and amounts of supersymmetry; see \cite{bobev-crichigno} for a review. The original case is an AdS$_5$ ${\mathcal N}=2$ solution in M-theory, dual to a compactification of the ${\mathcal N}=(2,0)$ theory in $d=6$ on a Riemann surface \cite{maldacena-nunez}. One can lower supersymmetry by different twists \cite{maldacena-nunez,bah-beem-bobev-wecht} or by starting with an ${\mathcal N}=(1,0)$ theory \cite{afpt}. See \cite{bah-passias-weck} for AdS$_4$ examples in IIA and \cite{benini-bobev,couzens-macpherson-passias} for AdS$_3$ in IIB. Often these solutions can be obtained by uplifting gauged supergravity vacua.

One can also consider compactifications of SCFTs on hyperbolic spaces with more than two dimensions. For example AdS$_4$ solutions with internal space fibered over a $\Sigma_{d=3}$ hyperbolic space can be obtained by uplifting the solutions in \cite{pernici-sezgin}, or more generally from \cite{rota-t,10letter}. However, we will see that these cases \emph{do not} yield arbitrarily small eigenvalues. Another variation is the inclusion of defects, starting from \cite{gaiotto-maldacena}, but we are not going to consider them in what follows.

\subsubsection{Application of general bounds} % (fold)
\label{ssub:bounds-Sigma}

We can study the existence of light spin-two fields in compactifications with Riemann surfaces both from the general theorems presented in Section \ref{sec:bounds}, which constrain them in terms of the isoperimetric constant, as well as from exploiting direct theorems on the spectrum of the Laplacian on Riemann surfaces.

For concreteness, we will tailor our discussion below on the original AdS$_5$ solutions found by Maldacena and Nu\~nez in \cite{maldacena-nunez}, but the following logic applies to all compactifications on negatively-curved Riemann surfaces. 

In our language, the internal metric of \cite{maldacena-nunez} reads
\begin{equation}\label{eq:MNmetric}
	\bar{\dd  s}^2 = \frac{1}{2}  \left[ \dd  s^2_{\Sigma_g} + \dd
\theta^2 + \frac{\cos^2 \theta}{1 + \cos^2 \theta} \dd  s^2_{\mathbb{S}^2} + 2
\frac{\sin^2 \theta}{1 + \cos^2 \theta} D \phi^2 \right]\;,
\end{equation}
and $\ee^f = 2^{-\frac92}(1 + \cos^2\theta)^\frac32$ .
This six-dimensional space features a Riemann surface $\Sigma_{g}$ of negative curvature and a topological $\mathbb{S}^4$ fibered over it. More precisely, if we think of the topological $\mathbb{S}^4$ as a join of an $\mathbb{S}^2$ an $\mathbb{S}^1$ (parametrized by the coordinate $\phi$), the latter is fibered over $\Sigma_g$. Such a fibration is described by a connection $\xi$, which appears in the metric \eqref{eq:MNmetric} through the covariant derivative $D\phi := \dd\phi+ \xi$. In local Poincaré coordinates for $\Sigma_g$, $\dd  s^2_{\Sigma_g}  = y^{-2}(\dd x^2+\dd y^2)$, the connection can be written as $\xi =  \frac{\dd x}{y}$.

As a consequence of the non-trivial fibration, the metric \eqref{eq:MNmetric} is non-diagonal and the Bakry--\'Emery Laplacian \eqref{eq:BELaplacian} does not decompose into an operator on $\Sigma_g$ and one on the $\mathbb{S}^4$.
Nevertheless, a subset of its eigenmodes consists of eigenfunctions that are constant on the $\mathbb{S}^4$, for which the spin-two operator \eqref{eq:BELaplacian} reduces to a standard Laplacian on $\Sigma_g$:
\begin{equation}\label{eq:decLap0}
	\Delta_f \psi  = m^2 \psi \qquad \implies \qquad \Delta^{(\Sigma_g)} \psi = m^2 \psi \qquad\qquad\text{for }\psi = \psi(\Sigma_g)\;.
\end{equation}
We can use this observation to start analyzing the class of modes that are non-constant only along $\Sigma_g$, asking how many of them can be light. To do so, we specialize and apply our general lower bound \eqref{eq:funano-lower} to $\Delta^{(\Sigma_g)}$ as follows.
Recall that to bound the $k$-th eigenvalue from below, we need to compute the isoperimetric constant $h_k$ in \eqref{eq:multiway}, which in turn requires to split our manifold in $k+1$ pieces.
Luckily, a compact Riemann surface with genus $g$ has a natural decomposition in $2g-2$ pair of pants, whose boundaries are geodesics. Moreover, the lengths of the boundary geodesics of each pairs of pants can be any triple of positive real numbers \cite[Th.~3.1.7]{buser-book}. (They are even part of a set of coordinates on the moduli space, the so-called \emph{Fenchel--Nielsen} coordinates). In particular, they can all be arbitrarily small. 
These $2g-2$ pieces can then be taken to be the $B_i$ in \eqref{eq:multiway} and, if the boundary geodesics are all small, then all the $h_k$ for $k\leqslant 2g-3$ are small as well. By \eqref{eq:funano-lower} we see that the presence of $2g-3$ small eigenvalues is allowed. Notice that we cannot obtain more than $2g-3$ small eigenvalues because this would require us to cut at least one of the pair of pants we already have. Even if we cut it near one of its ends, where the circle is small and contributes with a small perimeter in the denominator of \eqref{eq:multiway}, the volume of this new piece is also very small. Since the volume enters in the denominator, this cut would not result in a small $h_{k+1}$. This is illustrated in Fig.~\ref{fig:h-sigma} for $g=2$, where there are $2g-2=2$ pairs of pants.

\begin{figure}[ht]
\centering	
	\subfigure[\label{fig:h1-sigma}]{\includegraphics[width=4.5cm]{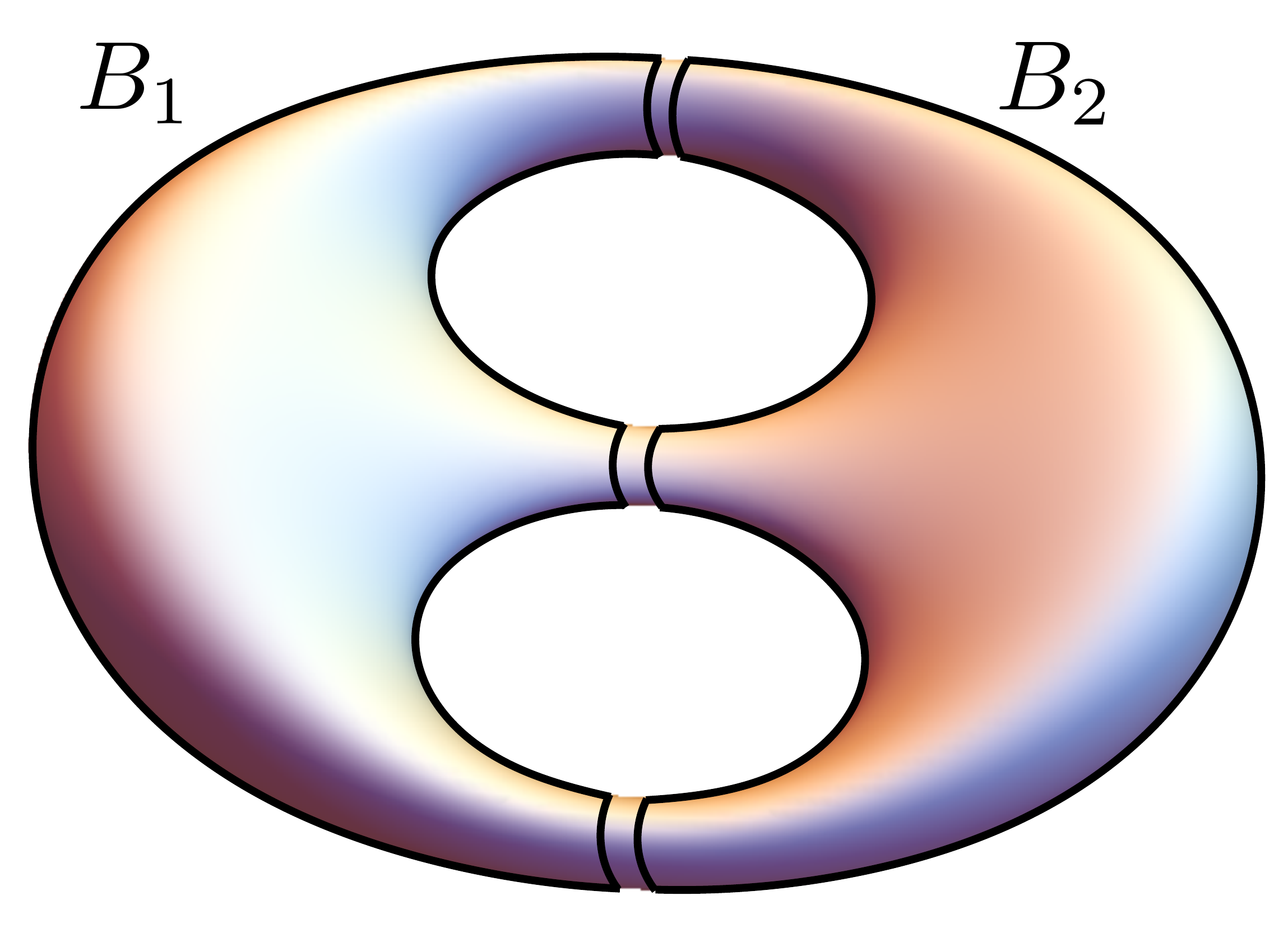}}
	\hspace{.5cm}
	\subfigure[\label{fig:h2-sigma}]{\includegraphics[width=4.5cm]{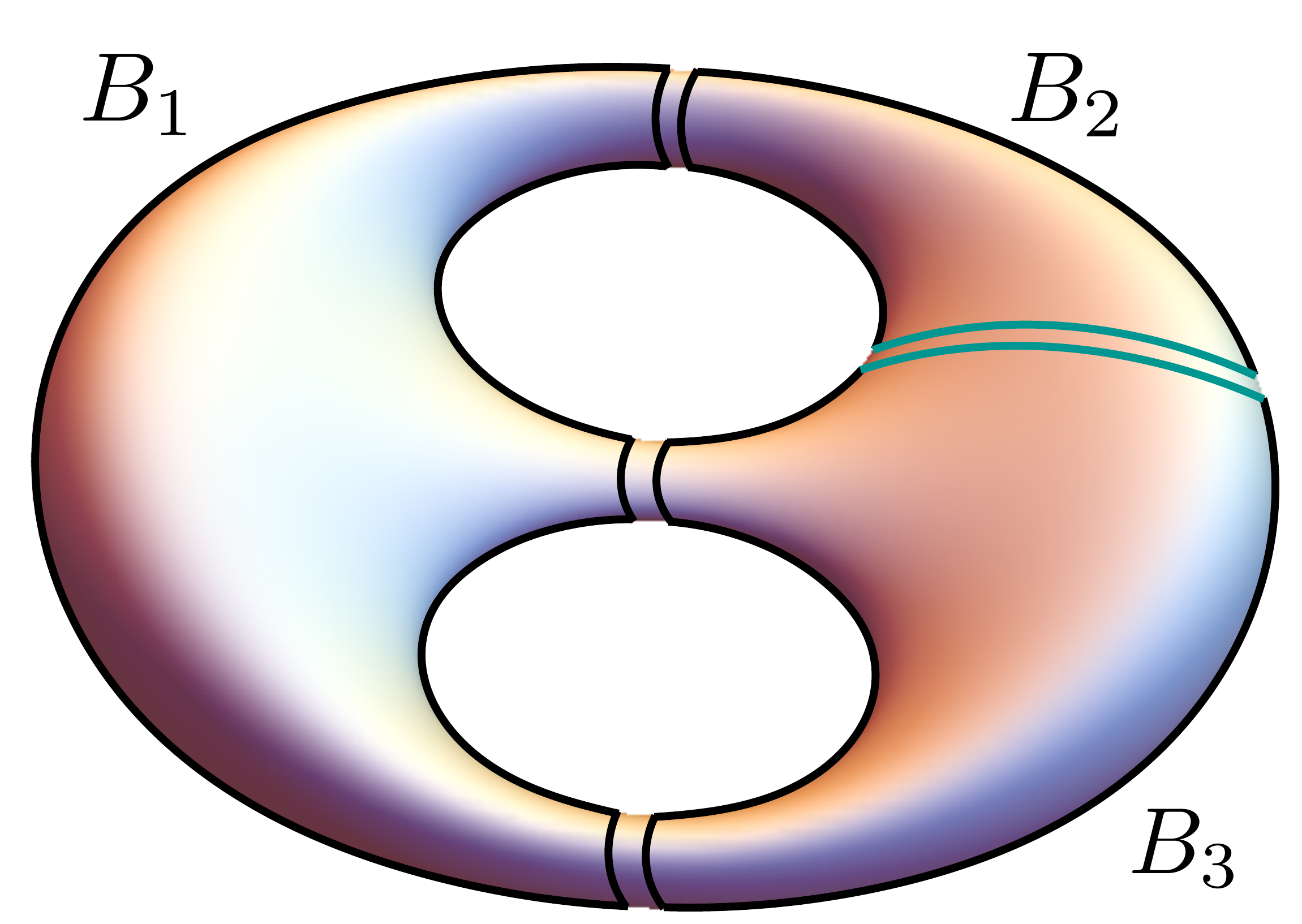}}
	\hspace{.5cm}
	\subfigure[\label{fig:h2p-sigma}]{\includegraphics[width=4.5cm]{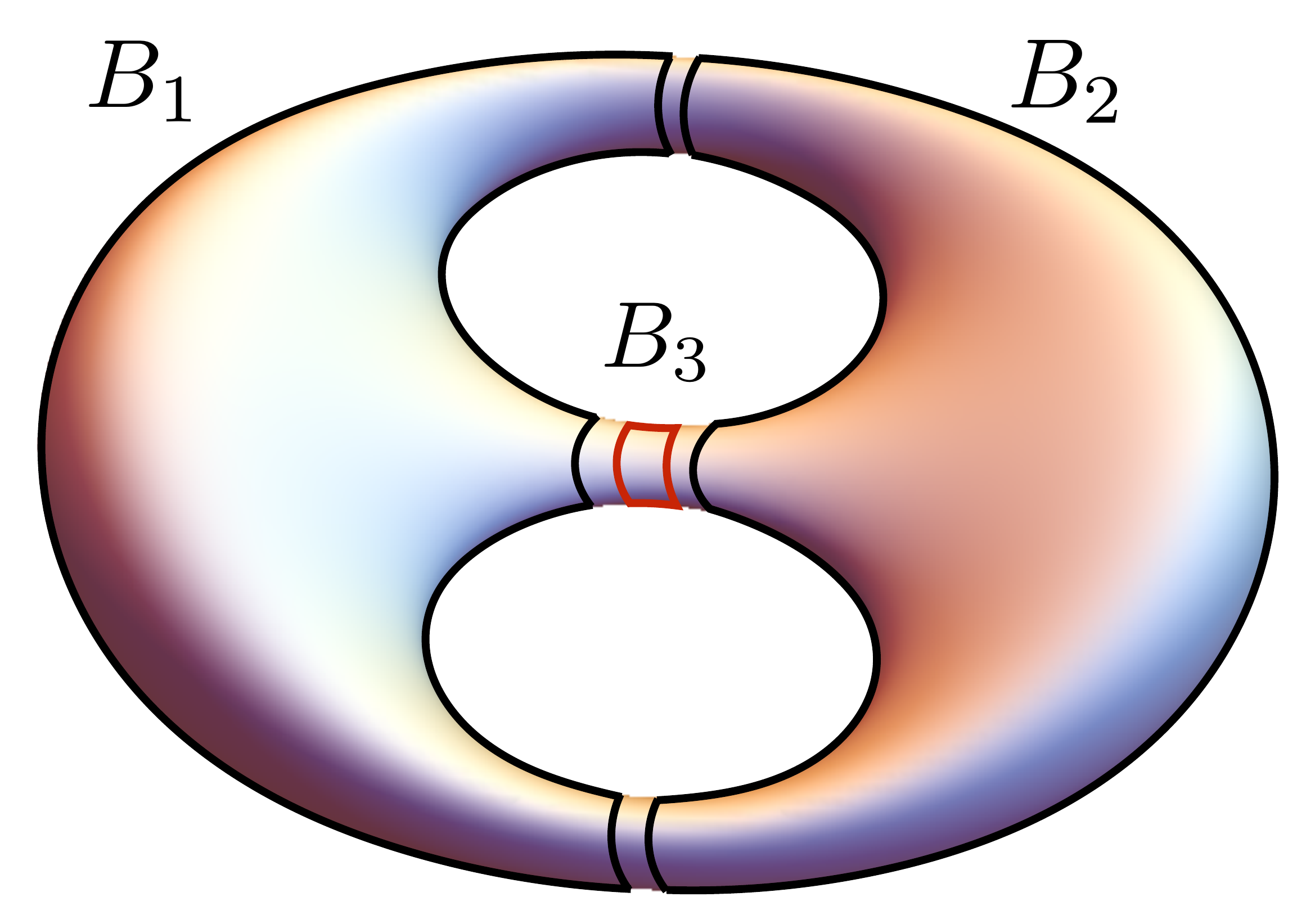}}
	
	\caption{\small A $g=2$ Riemann surface can be taken to be almost split, with three small necks connecting two pairs of pants. \subref{fig:h1-sigma}: one can fit two $B_i$ with small $\mathrm{Vol}_f( \partial B_i)/\mathrm{Vol}_f (B_i)$, so $h_1$ is small. \subref{fig:h2-sigma},\subref{fig:h2p-sigma}: one cannot fit three such $B_i$: the boundaries in green will be large or the area delimited in red will be small.}

	\label{fig:h-sigma}
\end{figure}

In the discussion above we have used that fact that, for modes that only depend on $\Sigma_g$, $\Delta_f$ reduces to the standard Laplacian on the Riemann surface, and we have applied \eqref{eq:funano-lower} to it. However, \eqref{eq:funano-lower} also applies to the full six-dimensional space with metric \eqref{eq:MNmetric} and non-trivial $\ee^f$, since it is formulated in the more general Bakry--\'Emery case. Thus, since the $\mathbb{S}^4$ does not introduce any small necks, the full spectrum does not have extra small eigenvalues either, other than the ones obtained from $\Sigma_g$. 

For compactifications on higher-dimensional hyperbolic spaces, there are generalizations of \eqref{eq:MNmetric}; for example, given a three-dimensional hyperbolic $H_3$, there is a AdS$_4\times H_3$ solution of $d=7$ gauged supergravity \cite{pernici-sezgin}, which can be uplifted to $d=11$. However, unlike for Riemann surfaces, $H_3$ cannot be taken to have arbitrarily small necks. There is a so-called ``thick--thin decomposition'', but the ``thin'' part does not disconnect $H_3$. Indeed in $n\geqslant 3$ a universal lower bound exists \cite{schoen-eigenvalue} for any manifold whose curvature is bounded between two negative constants, and in particular for any hyperbolic manifold $H_n$ with finite volume. In other words, in this case the smallest eigenvalue cannot be made arbitrarily small.
% subsubsection bounds-Sigma (end)

\subsubsection{Direct analysis} % (fold)
\label{ssub:direct}

We can also check this general result by directly analyzing the spectrum of $\Delta_f$. We will show that the only arbitrarily small eigenvalues come from the spectrum of $\Sigma$. We could argue for this by adapting the analysis in \cite{chen-gutperle-uhlemann}, but for completeness we prefer giving our own alternative version of the argument.

For general fluctuations on the background \eqref{eq:MNmetric}, $\Delta_f$ decomposes in
\begin{equation}\label{eq:splitOp1}
	\Delta_f = 2 \Delta_{(\Sigma_g)} - 2 \partial_{\phi}^2 + 4 y \partial_x
\partial_{\phi} + 2 \Delta_f^{(\mathbb{S}^4)}\;,
\end{equation}
where $\Delta_{(\Sigma_g)}$ is the standard Laplacian on $\Sigma_g$ and $\Delta_f^{(\mathbb{S}^4)}$ is a Bakry--\'Emery Laplacian on the (non-round) $\mathbb{S}^4$.
As anticipated above, these two operators are coupled by terms that mix the fibered $\mathbb{S}^1$ and the Riemann surface. Nevertheless, they can be decoupled by expanding the putative eigenfunction $\psi$ on a basis of functions on the fibered $\mathbb{S}^1$ as 
\begin{equation}\label{eq:qbasis}
	\psi = \psi^{(5)}_q(\Sigma,\theta,\mathbb{S}^2) \ee^{\ii q \phi}\;.
\end{equation}
Notice that, for simplicity, we have complexified $\psi$ and its eigenvalue equation. By linearity, real solutions are given by the real and imaginary parts of $\psi$. Also, in \eqref{eq:qbasis} we have used only single element of the basis on $\mathbb{S}^1$, since, invoking again linearity, all the $\psi_q^{(5)}$ will satisfy the same equation with different $q$'s.
With this definition, the operator \eqref{eq:splitOp1} becomes
\begin{subequations}\label{eq:splitOp2}
	\begin{align}
		\Delta_f^q &:= 2 \Big{(}\Delta_{(\Sigma_g)} + q^2 + 2 i q y \partial_x\Big{)} + \mathcal{D}_{(\theta, \mathbb{S}^2 ; q)}^2\\
		\label{eq:magnSchr}&= 2  \Big{(}i \nabla_{(\Sigma_g)}+ q \tilde{\xi}\Big{)}^2 + \mathcal{D}_{(\theta, \mathbb{S}^2 ; q)}^2
	\end{align}
\end{subequations}
where $\tilde{\xi}$ is the vector field dual to the 1-form connection $\xi$, $\tilde{\xi} = y^{-1}\partial_x$, and  $\mathcal{D}_{(\theta, \mathbb{S}^2 ; q)}^2$ is an operator acting on the $\mathbb{S}^2$ and the interval coordinate $\theta$, which we will make explicit soon.
The operator in the brackets in \eqref{eq:magnSchr} is a \emph{magnetic Schr\"odinger operator} on $\Sigma_g$.
It acts on sections of a U(1) bundle on $\Sigma_g$ and, since it is the square of a hermitian operator, it is also non-negative.
We call its eigenvalues $\lambda_{g}^2$.
As promised, the two operators on the right hand sides of \eqref{eq:splitOp2} are now decoupled as they act on different spaces.
Expanding $\psi^{(5)}_q$ on a basis of eigenfunctions of the magnetic magnetic Schr\"odinger operator on $\Sigma_g$ and of the standard Laplacian on the round $\mathbb{S}^2$, $\psi^{(5)}_q := \eta(\theta)Y_{\lambda_g^2}Y_{\lambda_{\mathbb{S}^2}^2}$ with eigenvalues respectively $\lambda_{g}^2$ and $\lambda_{\mathbb{S}^2}^2$, our original eigenvalue problem reads
\begin{equation}\label{eq:alphaeq}
	\left[- 2\alpha^{-2} \partial_{\alpha} (\alpha^2 (1 - \alpha^2)
	\partial_{\alpha}) -   (m^2 - q^2 - 2
	\lambda_{g}^2) + 2\left( q^2 \frac{\alpha^2}{1-\alpha^2}+\left(\frac{1+\alpha^2}{\alpha^2}\right)\lambda_{\mathbb{S}^2}^2 \right) \right] \eta = 0
\end{equation}
where we have also switched to the coordinate $\alpha := \cos(\theta)$.
We now notice that the first and last group of terms in the round brackets inside the operator in the square brackets in \eqref{eq:alphaeq} are non-negative.  This can be checked by multiplying \eqref{eq:alphaeq} by $\eta$ on the left and integrating against our measure $\sqrt{\bar{g}}\ee^f \propto \alpha^2$.  Thus, for equation \eqref{eq:alphaeq} to admit solutions, the middle term needs to be non-positive, implying the bound
\begin{equation}\label{eq:boundq}
	m^2\geqslant q^2+2\lambda_g^2 \geqslant q^2\;.
\end{equation}
Thus, states with $q\neq 0$ do not have small masses, since $q\in \mathbb{Z}$.

Let us analyze the modes with $q = 0$, i.e.~constant along $\phi$. From \eqref{eq:splitOp1} we have that $\Delta_f$ decomposes into a simple sum.
If the $\mathbb{S}^4$ does not admit small eigenvalues, then the small eigenvalues are only the ones on $\Sigma_g$.
We have already argued in Sec. \ref{ssub:bounds-Sigma} that it is possible to tune a finite number of these to be arbitrarily small, by combining theorems on decomposition of Riemann surfaces with the various isoperimetric theorems, but in the spirit of the present section we would like to compare these with direct knowledge of the spectrum.  
Luckily, direct results on the spectrum of the Laplacian on compact Riemann surfaces are available and teach us that for a Riemann surface of genus $g$ the first $2g-3$ 
eigenvalues can be arbitrarily small \cite[Th.~8.1.3]{buser-book}, with the $(2g-2)$-th always greater than $1/4$ \cite{otal-rosas}.\footnote{Recently, bounds on the spectrum of the Laplacian on Riemann surfaces (and higher dimensional closed hyperbolic manifolds) have been obtained in \cite{bonifacio}, using the bootstrap methods introduced in \cite{bonifacio-hinterbichler} for Einstein manifolds.} This agrees with the counting that follows (\ref{eq:decLap0}).
Finally, $\mathbb{S}^4$ has no small eigenvalues because it has no small necks.
To check it directly, we can focus on modes that are constant on the round $\mathbb{S}^2$, since otherwise the $\mathbb{S}^2$ eigenvalues are again of order one.
Thus we are left with equation \eqref{eq:alphaeq} for $q = \lambda_g^2=\lambda_{\mathbb{S}^2}^2 = 0$, which reads
\begin{equation}\label{eq:alphaeq0}
	- 2\alpha^{-2} \partial_{\alpha} \left(\alpha^2 (1 - \alpha^2)
	\partial_{\alpha}\right)\eta =   m^2 \eta 
\end{equation}
Defining $\hat\eta := \alpha \eta$ and $s(s+1) := \frac12 (m^2+4)$ equation \eqref{eq:alphaeq0} becomes the Legendre equation
\begin{equation}\label{eq:assLeg}
	-\partial_\alpha ((1-\alpha^2)\partial_\alpha \hat\eta) = s(s+1)\hat\eta\;,
\end{equation}
with $\alpha \in [0,1]$. 
Without imposing any condition, the general solution is given by $\hat\eta = c_1 P_s(\alpha)+c_2 Q_s(\alpha)$, where $P_s$ and $Q_s$ are, respectively,  Legendre functions of the first and second kind. However, we are interested in solutions such that $\psi \in W^{1,2}$. This means  $\int_0^1 \alpha^2 \eta^2 = \int_0^1 \hat\eta^2 < \infty$ and  $\int_0^1 \alpha^2 (\partial_\alpha\eta)^2  < \infty$, where we recall that the $\alpha^2$ factor comes from the measure $\sqrt{\bar g} \ee^f$. 
Both Legendre functions are integrable, but integrability of the derivative selects $P_s$, with $s$ odd. Thus the first non-trivial eigenvalue on the $\mathbb{S}^4$ which does not depend on the circle or the $\mathbb{S}^2$ coordinates is $m^2 = 30$, which, as anticipated from the general geometric arguments, is not small; recall that the eigenvalue on the Riemann surface can instead be made arbitrarily small.

This completes the proof that the only arbitrarily small eigenvalues come from the eigenvalues of the standard Laplacian on the Riemann surface $\Sigma_g$.

We conclude with a comment on the physical scales. The barred metric in \eqref{eq:MNmetric} is written with respect to a unit radius AdS$_5$ metric i.e.~$L^2_{\text{AdS}_5} = 1$. However, a family of eleven-dimensional solutions with arbitrary $L^2_{\text{AdS}_5}$ can be generated starting from the unit-radius solution and acting on the eleven-dimensional metric with the rescaling $\dd s^2_{11} \to L^2_{\text{AdS}_5} \dd s^2_{11}$. The effect of this action on the AdS$_5$ factor is to restore the explicit dependence on its radius, $\dd s^2_{\text{AdS}_5}\to L^2_{\text{AdS}_5}  \dd s^2_{\text{AdS}_5}$, while at the same time rescaling the barred internal metric \eqref{eq:MNmetric} as $\bar{\dd s}^2\to L^2_{\text{AdS}_5} \bar{\dd s}^2$. Expanding the Bakry--\'Emery Laplacian of the rescaled metric, we then get that the physical (i.e.~dimensionful) masses are given by 
\begin{equation}
	m^2_{\text{phys}} = \frac{m^2}{L^2_{\text{AdS}_5}}\;.
\end{equation}
This rescaling is discretized by flux quantization since (neglecting order 1 constants and factors of $\pi$)  quantization of $F_4$ requires $\int_{X_4} F_4 \sim \ell_{11}^3 N$, where $\ell_{11}$ is the eleven-dimensional Planck length and the integral is over internal 4-cycles. Since the above rescaling acts on $F_4$ as $F_4\to L_{\text{AdS}_5}^3 F_4$, the physical length scales are related to $N$ by  $L_{\text{AdS}_5} \sim N^\frac13 \ell_{11}$. 

Having reinstated the physical units allows us to check that, even if in the discussion above we have considered the geodesic lengths in the pair of pants decomposition of $\Sigma$ to be arbitrarily small, their physical lengths can be controlled by $N$ in order to suppress possible instanton effects that would arise if the lengths of small cycles get too close to $\ell_{11}$. Indeed, since $L_{\text{AdS}_5}$ multiplies the full barred metric, the physical length of a short geodesics $\gamma$ is
\begin{equation}
\ell_{\gamma,\, \text{phys}}  =  L_{\text{AdS}_5} \ell_{\gamma }\sim N^\frac13 \ell_{11} \ell_{\gamma}	\;.
\end{equation}
This can be made $\gg \ell_{11}$ in the regime $N\gg 1$, where the supergravity approximation is reliable, even if $\ell_\gamma$ is small. On the other hand, the ratio between the physical $m_k$ and $m_1$ is not affected and, as discussed above, it can be made big for $\ell_{\gamma}\ll 1$. 

The results in this subsection give another, more solid example of the mechanism in (\ref{eq:conj2}) for $g=2$; more generally, they give examples where the first few masses can be made arbitrarily small while the others remain finite. As we stressed in Section \ref{ssub:conj}, the spin-two conjectures might still be saved by the light states created by wrapped branes; in this case, they might for example be M5s wrapping the $S^4$ and the small necks of the Riemann surface.
% subsection mn (end)

\subsection{Untwisted non-supersymmetric examples} % (fold)
\label{sub:unt}

The discussion above, as well as the examples presented in Section \ref{sec:bl}, apply to supersymmetric compactifications. However, we can see that supersymmetry is not a necessary condition for admitting light spin-two fields by also constructing simple classes of non-supersymmetric compactifications on Riemann surfaces. The strategy is to start from a generic AdS$_{d+k}\times M_n$ solution and use the map
\begin{equation}\label{eq:AdSmap}
	\text{AdS}_{d+k}\to \text{AdS}_d\times\mathbb{H}_k/\Gamma \times M_n\;,
\end{equation}
where $\mathbb{H}_k/\Gamma$ is a hyperbolic manifold of dimension $k$ with the same Ricci curvature of the original AdS$_{d+k}$.
This procedure produces a new AdS$_d$ solution with an internal space $\mathbb{H}/\Gamma \times M_n$. More precisely, the map \eqref{eq:AdSmap} acts on a (possibly warped) AdS$_{d+k}$ solution as\footnote{See also \cite{cordova-deluca-t-dso6}, where a similar procedure has been used as a starting point to construct dS$_4$ vacua.}
\begin{equation}\label{eq:AdSmap-met}
	\dd s^2_D = \ee^{\frac14f}(\dd s^2_{\mathrm{AdS}_{d+k}}+ \bar{\dd s}^2_{n-k}) \quad\to\quad  \dd s^2_D = \ee^{\frac14f}(\dd s^2_{\mathrm{AdS}_{d}}+ \bar{\dd s}^2_{n}) \;,
\end{equation} 
where the internal barred metric is now a simple product
\begin{equation}\label{eq:decMet}
	\bar{\dd s}^2_{n} = \dd s^2_{\mathbb{H}_k/\Gamma}+ \bar{\dd s}^2_{n-k}\;. 
\end{equation} 
In this case there is no fibration since no twist is needed to preserve supersymmety and, given that $f$ is constant along the hyperbolic manifold, the Bakry--\'Emery Laplacian \eqref{eq:BELaplacian} now decomposes as
\begin{equation}\label{eq:decLap}
	\Delta_f^{(n)}=\Delta^{(\mathbb{H}_k/\Gamma)}+\Delta_f^{(n-k)}\;.
\end{equation}
Specializing to the $k=2$ case, all the above discussion on eigenvalues on Riemann surfaces applies verbatim.
However, notice that the rather crude supersymmetry breaking in \eqref{eq:AdSmap} can generate solutions that are perturbatively unstable. Albeit there are no instabilities in the spin-two spectrum, since the Bakry--\'Emery Laplacian is non-negative, there might be instabilities arising e.g.~in the scalar sector. These would be signaled by tachyons with masses below the Breitenlohner–Freedman bound \cite{breitenlohner-freedman}. Even though the scalar operators are not known in general, if also part of the scalar spectrum is preserved upon compactification on the Riemann surface the backgrounds produced by \eqref{eq:AdSmap-met} are in danger of having modes near the BF threshold, since the bound gets tighter lowering in lower (AdS) dimension: $-\frac{(d-k-1)^2}{ 4 L^2} < - \frac{(d-1)^2}{ 4 L^2}$.
Encouraging results on this side come from \cite{orlando-park}, which shows in an explicit realization of this idea that a subset of the scalars modes can be stable, but stops short of a full computation of the KK spectrum.

\subsection{Lifting the massless graviton}
% subsection unt (end)

As we have seen in the previous sections, in AdS compactifications that involve a negatively curved Riemann surface $\Sigma_g$, the light part of the spectrum of the Bakry--\'Emery Laplacian often coincides with the light part of the spectrum on the Riemann surface. We have focused on an explicit class of examples and argued for this effect in two ways: in Section \ref{ssub:bounds-Sigma} we used our general theorems that bound eigenvalues of $\Delta_f$ in terms of the isoperimetric constants, and in Section \ref{ssub:direct} we directly analyzed the differential eigenvalue equation, confirming that $\Delta_f$ does not have other small eigenvalues.

We will now consider the case where $\Sigma_g$ has infinite area. This will lift the constant mode (the $d$-dimensional massless graviton), providing a way to generate large classes of compactifications of String/M-theory with a finite number of light spin-two fields, just as in the finite-measure examples we saw earlier in this Section.

The spectral theory of non-compact, infinite area Riemann surfaces with negative curvature is very rich, and has seen much recent progress. In the following, we will handpick the results we need to count the number of light spin-two fields, referring to \cite[Chapter 2]{borthwick2007spectral} for a more detailed introduction. 

As in the more familiar compact case, infinite-area negatively-curved Riemann surfaces can be constructed from discrete\footnote{A discrete subgroup of PSL(2,$\mathbb{R}$) is called \emph{Fuchsian}. The discreteness requirement is needed for the quotient to be a well-defined metric space.} subgroups $\Gamma$ of PSL(2,$\mathbb{R}$), the isometry group of the hyperbolic plane $\mathbb{H}$, as $\mathbb{H}/\Gamma$. We will restrict our attention to Riemann surfaces with finite Euler characteristic (also called \emph{topologically finite}); very little is known about the spectrum of the Laplacian outside this class. Algebraically, this requirement implies that $\Gamma$ is finitely generated, and geometrically it results in the fundamental domain of its action on $\mathbb{H}$ being a finite-sided convex polygon. Moreover, it entails that there are only two types of possible ``geometries at infinity'' (or \emph{hyperbolic ends}): \emph{cusps} and \emph{funnels}. For $r\geqslant 0$, cusps have metric $\dd r^2 + \ee^{-2r} \dd \theta^2$ and make $\Sigma_g$ non-compact but leave the area finite; funnels have metric $\dd r^2 + \cosh^2 r \dd \theta^2$ and are responsible for an infinite area. We will restrict to hyperbolic surfaces with one or more funnels, but no cusps. A topologically finite hyperbolic surface with no cusps can be decomposed in a \emph{compact core} $K$ plus a finite number $n_\mathrm{f}$ of funnels \cite[Th.~2.23]{borthwick2007spectral}; see Fig.~\ref{fig:funnel}.

\begin{figure}[ht]
	\centering	
		\subfigure[\label{fig:f1-sigma}]{\includegraphics[width=5cm]{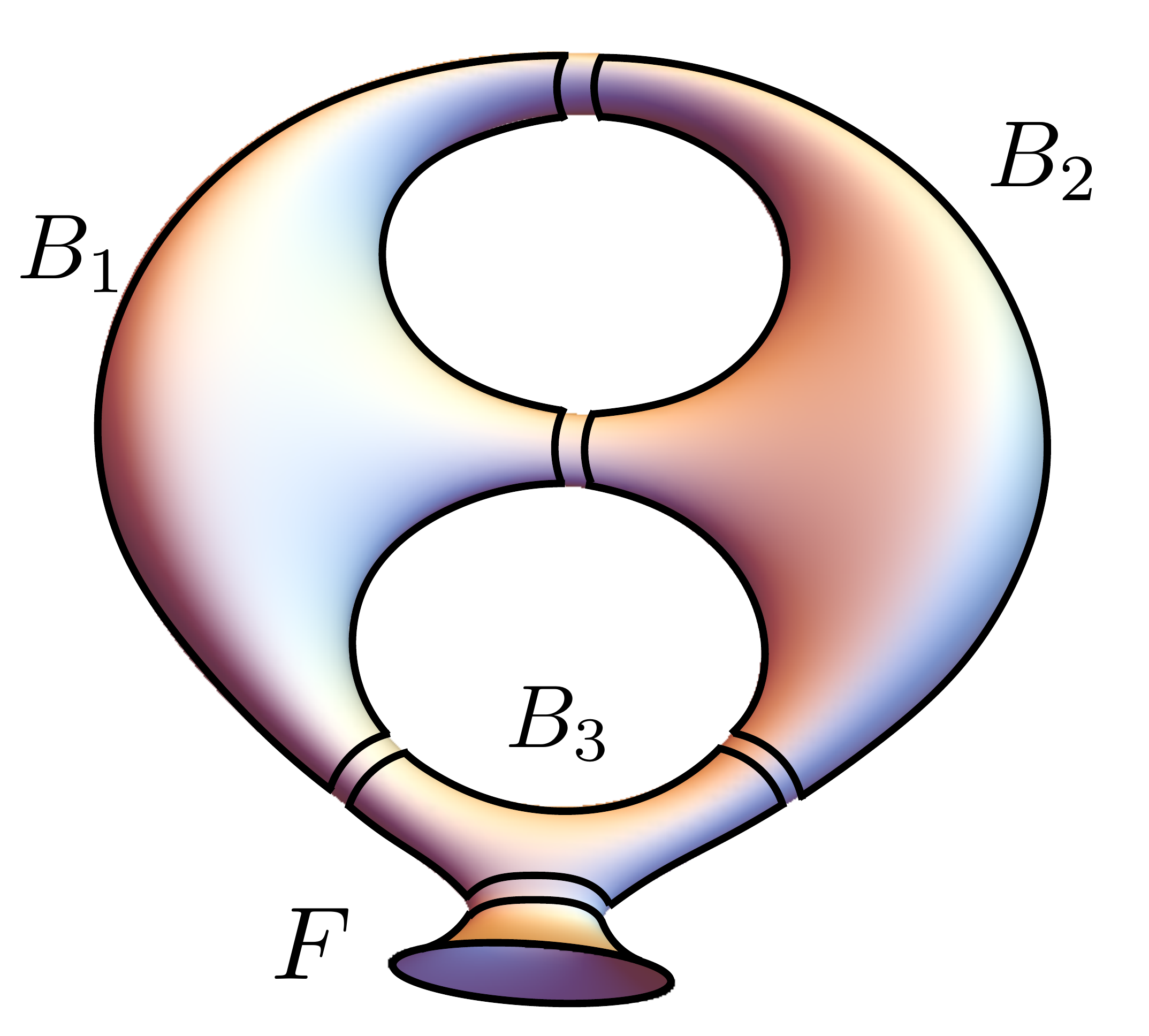}}
		\hspace{1cm}
		\subfigure[\label{fig:f2-sigma}]{\includegraphics[width=4cm]{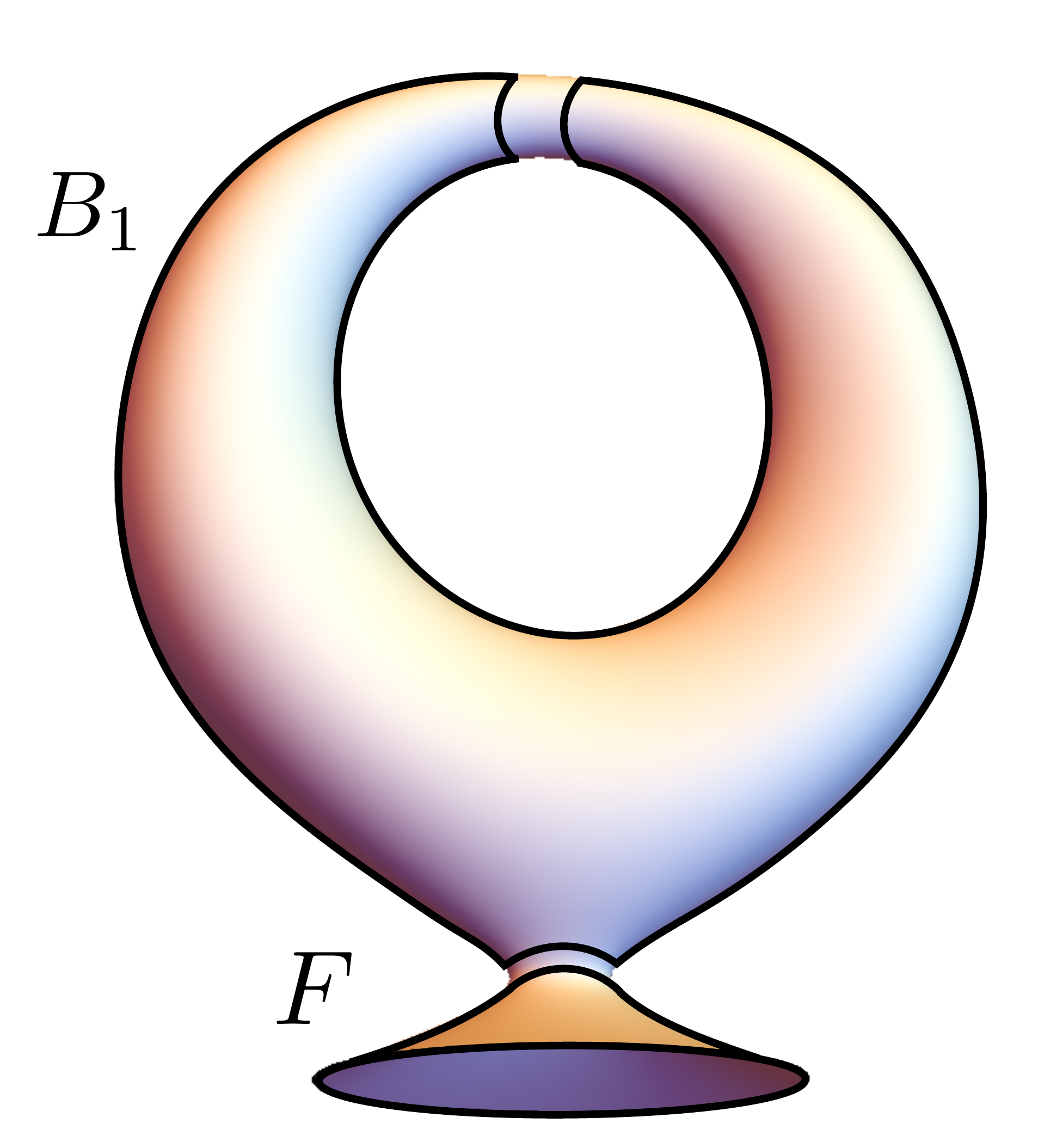}}
		\caption{\small Hyperbolic surfaces with a single funnel $F$. There is a decomposition in a compact core $K$ (the union of the pairs of pants $B_i$) and the funnel, joined along a geodesic. \subref{fig:f1-sigma} a $g=2$ example with 3 $B_i$ with small necks. \subref{fig:f2-sigma} a $g=1$ surface constructed gluing two legs of a single pair of pants together.}
		\label{fig:funnel}
	\end{figure}

The first important remark is that this class of examples introduces a continuous part of the spectrum. For any surface with hyperbolic ends, the bottom of the essential spectrum of the Laplacian is $1/4$, with the discrete part consisting of finitely many eigenvalues in the range $\left(0,1/4\right)$ \cite[Theorem 7.1]{borthwick2007spectral}.\footnote{If the area is finite, so with only cusps but no funnels, above $1/4$ there might also be embedded eigenvalues, namely eigenvalues in the continuous spectrum. Here we are not interested in this possibility.} This is a bit similar to the $p=5$ case discussed in general in Section \ref{ssub:continuous} and appearing in the examples analyzed in Section \ref{sub:asas}. It would be interesting to know whether a (gapped) continuous spectrum is also present in other massive gravity models obtained from String/M-theory.

Focusing now on the discrete spectrum, fortunately the results we quoted after \eqref{eq:boundq} admit non-compact generalizations. 
It was indeed proven in \cite[Th.~1.5]{ballmann-matthiesen-mondal-15} (extending \cite{otal-rosas}) that on a Riemann surfaces of finite type, i.e.~with finite Euler characteristic $\chi=2-2g-n_\mathrm{f}$ and compact boundary, the $(-\chi+1)$-th\footnote{When comparing with the results in Sec.~\ref{sub:mn} notice that in the compact case we do not usually count the zero-mode, and thus in that case ``$k$-th'' means ``$k$-th non-trivial''. Since in this section we are allowing the bottom of the spectrum to be non-zero, we are counting it separately.} eigenvalue is larger than $1/4$. In particular, in the case we are interested in, where the only hyperbolic ends are funnels, there are only $-\chi$ discrete eigenvalues. 

Moreover, the proof of \cite[Th.~8.1.3]{buser-book} is in fact still valid. The min-max principle is discussed there for the compact case, but in fact it also applies to the discrete eigenvalues below the continuum, by \cite[Th.~4.5.2]{Dav}.

Let us now see how these results agree with those of Section \ref{sec:bounds}. Consider first the smallest eigenvalue.
By construction, the compact core $K$ has $n_\mathrm{f}$ geodesic boundaries and the same genus of $\Sigma_g$, and thinking of it as a Riemann surface with boundaries, we can decompose it in pairs of pants, as we did in Section \ref{ssub:bounds-Sigma} for the full $\Sigma_g$. As we saw there, the lengths of the boundary geodesics are independent; so we can arrange for at least one $B$ to have very small $\frac{\text{Per}(B)}{\mathfrak m(B)}$, by taking it to be a pair of pants with very small boundary geodesics. By (\ref{eq:buser0}), we can now conclude that the smallest eigenvalue $\lambda_1$ can be made arbitrarily small.

For the higher eigenvalues, we consider Th.~\ref{th: RCDUpper}.
The definition of the Cheeger constant \eqref{def: m-Cheeger} instructs us to only consider Borel subsets with finite measure, and thus we again focus on the core $K$ and compute how many $B_i$ with a small $\frac{\text{Per}(B_i)}{\mathfrak m(B_i)}$ we can fit into it. 
Since each pair of pants has $\chi=-1$, their total number is $-\chi=2-2g-n_\mathrm{f}$.
Thus we can cut $2g-2+n_\mathrm{f}$ pieces with arbitrarily small $\frac{\text{Per}(B_i)}{\mathfrak m(B_i)}$, $ i = 1,\dots,2g-2+n_\mathrm{f}$; we have again used that the lengths of the boundary geodesic of a pair of pants can be arbitrary real numbers.
All in all, this means that we can make the $(-\chi-1)$-th Cheeger constant $h_{-\chi-1}$ small with the same strategy as in Section \ref{ssub:bounds-Sigma}. Now Th.~\ref{th: RCDUpper} implies that the $-\chi$ eigenvalues $\lambda_0,\dots,\lambda_{-\chi-1}$ are allowed to be arbitrarily small.

For a simple example, consider the case $g = 1$, $n_\mathrm{f} = 1$, depicted in Fig.~\ref{fig:f2-sigma}. This surface is hyperbolic (see e.g.~\cite[Th.~27.12]{foster}) and can be thought of as a single pair of pants with two legs glued to each other, and the remaining leg glued to a funnel. In this case, there is a single neck that can be made arbitrarily small, resulting in a single arbitrarily light spin-two field and no massless graviton.

In conclusion, the models in this subsection provide a method to construct AdS massive gravity models from String/M-theory, with the aforementioned caveat about the presence of a (gapped) continuum. 

% section riemann (end)

\section*{Acknowledgements}

We thank C.~Bachas and E.~Palti for discussions. GBDL is supported in part by the Simons Foundation Origins of the Universe Initiative (modern inflationary cosmology collaboration) and by a Simons Investigator award. AM is supported by the ERC Starting Grant 802689 ``CURVATURE''. AT is supported in part by INFN and by MIUR-PRIN contract 2017CC72MK003.

\bibliography{at}
\bibliographystyle{at}

\end{document}